\newcommand{\tr}{{\rm tr}}
\newcommand{\ownint}[4]{{\int_{#1}^{#2} \! #3 \, \mathrm{d}#4}}
\newcommand{\conv}{\mathrm{conv}}
\newcommand{\innprod}[2]{{\left\langle {#1},{#2}\right\rangle}}
\newcommand{\topt}[2]{\mathbf t_{#1}^{#2}}
\newcommand{\Tan}{\mathrm{Tan}}
\newcommand {\R}{\mathbb{R}}
\newcommand {\W}{\mathcal{W}}
\newcommand {\X}{\mathcal{H}}
\newcommand{\E}{\mathbb{E}}
\newcommand {\cov}{\textrm{Cov}}
\DeclareSymbolFont{fouriersymbols}{FMS}{futm}{m}{n}
\DeclareSymbolFont{fourierlargesymbols}{FMX}{futm}{m}{n}
\DeclareMathDelimiter{\hsnorm}{\mathord}{fouriersymbols}{152}{fourierlargesymbols}{147}
\newcommand{\hs}{\big\hsnorm}
\newtheorem{theorem}{Theorem}
\newtheorem{conjecture}[theorem]{Conjecture}
\newtheorem{proposition}[theorem]{Proposition}
\newtheorem{corollary}[theorem]{Corollary}
\newtheorem{lemma}[theorem]{Lemma}
\newtheorem{definition}[theorem]{Definition}
\newcommand{\transpose}{^{\top}}
\numberwithin{equation}{section}
\theoremstyle{plain}
\begin{document}

\begin{frontmatter}

\title{Procrustes Metrics on Covariance Operators and Optimal Transportation of Gaussian Processes}

\runtitle{Procrustes Analysis and Optimal Coupling}

\begin{aug}
\author{\fnms{Valentina} \snm{Masarotto}\ead[label=e1]{valentina.masarotto@epfl.ch}}, 
\author{\fnms{Victor M.} \snm{Panaretos}\ead[label=e2]{victor.panaretos@epfl.ch}}
\and
\author{\fnms{Yoav} \snm{Zemel}\ead[label=e3]{yoav.zemel@epfl.ch}}

\runauthor{V.~Masarotto, V.M.~Panaretos \& Y.~Zemel}

\affiliation{Ecole Polytechnique F\'ed\'erale de Lausanne}

\address{Institut de Math\'ematiques\\
Ecole Polytechnique F\'ed\'erale de Lausanne\\
\printead{e1}, \printead*{e2}, \printead*{e3}}

\end{aug}

\begin{abstract} Covariance operators are fundamental in functional data analysis, providing the canonical means to analyse functional variation via the celebrated Karhunen--Lo\`eve expansion. These operators may themselves be subject to variation, for instance in contexts where multiple functional populations are to be compared. Statistical techniques to analyse such variation are intimately linked with the choice of metric on covariance operators, and the intrinsic infinite-dimensionality of these operators. In this paper, we describe the manifold geometry of the space of trace-class infinite-dimensional covariance operators and associated key statistical properties, under the recently proposed infinite-dimensional version of the Procrustes metric. We identify this space with that of centred Gaussian processes equipped with the Wasserstein metric of optimal transportation. The identification allows us to provide a complete description of those aspects of this manifold geometry that are important in terms of statistical inference,  and establish key properties of the Fr\'echet mean of a random sample of covariances, as well as generative models that are canonical for such metrics and link with the problem of registration of functional data. 
\end{abstract}

\end{frontmatter}

\tableofcontents

\newpage

\section{Introduction}

\subsection*{Background and Contributions}

%HERE: covariances important in FDA, KL expansion etc.
Covariance operators play a central role in \emph{functional data analysis} (\citet{hsing2015theoretical}, \citet{ramsay2005springer}): nonparametric inference on the law of a stochastic process $X$ viewed as a random element of an infinite-dimensional separable Hilbert space $\mathcal{H}$ (most usually $L^2$ or some reproducing kernel Hilbert subspace thereof).  In particular, covariance operators serve as the \emph{canonical means} to study the variation of such random functions. Their spectrum provides a singular system separating the stochastic and functional fluctuations of $X$, allowing for optimal finite dimensional approximations and functional PCA via the Karhunen--Lo\`eve expansion. And, that same singular system arises as the natural means of regularisation for inference problems (such as regression and testing) which are ill-posed in infinite dimensions (\citet{panaretos2013cramer}, \citet{wang2016functional}).
 
There are natural statistical applications where covariances may be the main object of interest in themselves, and may present variation of their own. These typically occur in situations where several different ``populations" of functional data are considered, and there is strong reason to suspect that each population may present different structural characteristics. Each one of $K$ populations is modelled by a prototypical random function $X_k$, with mean function $\mu_k\in\mathcal{H}$ and covariance operator ${\Sigma}_k:\mathcal{H}\times\mathcal{H}\rightarrow\mathcal{H}$ and we are able to observe $N_k$ realisations from each population: $\{X^{i}_{k}:i=1,\dots,N_k; k=1,\dots,K\}$. Examples of such situations include the two (or potentially more) populations of DNA strands considered in \citet{panaretos_jasa}, \citet{panaretos_biometrika}, and \citet{tavakoli:2014}, resulting from different base pair composition of each DNA strand, but clearly extend to much wider contexts.

A classical problem is the case where it is assumed that the different populations differ in their mean structure, leading to what has become known as Functional Analysis of Variance (see \citet{zhang:2013} for an overview). This represents \emph{first-order variation} across populations, as it can be considered as a model of the form
$$X_{k}^{i}(t)=\mu(t)+\mu_k(t)+\varepsilon_i(t),$$
with $\varepsilon_i(t)$ being mean zero and covarying according to some ${\Sigma}$.

An intriguing further type of variation is \emph{second-order variation}, which occurs by assuming that the covariance operators vary across populations, ${\Sigma}_i\neq{\Sigma}_j$ for $i\neq j$. This type of variation is particularly relevant in functional data, as it represents qualitative differences in the smoothness and fluctuation properties of the different populations. Early contributions in this area were motivated through financial  and biophysical applications \citep{benko:2009,panaretos_jasa}. These led to a surge of methods and theory on \emph{second-order variation} of functional populations, in many directions: \citet{horvath2013test}, \citet{paparoditis2014bootstrap}, \citet{gabrys2010tests}, \citet{fremdt2013testing}, \citet{horvath2012inference}, \citet{jaruvskova2013testing}, \citet{coffey2011common}, \citet{kraus2014components}.

What is common to many of these approaches is that the second-order variation is, in a sense, \emph{linear}. That is, the covariance operators are imbedded in the space of Hilbert-Schmidt operators, and statistical inference is carried out with respect to the corresponding metric. This space is, of course, a Hilbert space, and thus methodology of this form can be roughly thought of as modelling the second order variation via \emph{linear perturbations} of an underlying covariance operator:
\[
{\Sigma}_k={\Sigma}+{E}_k.
\]
Here ${E}_k$ would be a random zero-mean self-adjoint trace-class operator, with spectral constraints to assure the positive-definiteness of the left hand side. Being a random trace-class self-adjoint operator, ${E}$ admits its own Karhunen-Lo\`eve expansion, and this is precisely what has been employed in order to extend the linear PCA inferential methods from the case of functions.  However, the restriction ${\Sigma}+{E}_k \succeq 0$  immediately shows that the Hilbert-Schmidt approach has unavoidable weaknesses, as it imbeds covariance operators in a larger linear space, whereas they are not closed under linear operations. Quite to the contrary, covariance operators are fundamentally constrained to obey nonlinear geometries, as they are characterised as the ``squares" of Hilbert-Schmidt class operators.

In the multivariate (finite dimensional) literature this problem has been long known, and well-studied, primarily due to its natural connections with: (1) the problem of \emph{diffusion tensor imaging} (see, e.g., \cite{alexander2005multiple}, \cite{schwartzman2008false}, \cite{dryden2009non}) where it is fundamental in problems of smoothing, clustering, extrapolation, and dimension reduction, to name only a few; and (2) the statistical theory of shape (\citet{dryden1998statistical}), where Gram matrices (by definition non-negative) encode the invariant characteristics of Euclidean configurations under Euclidean motions. Consequently, inference for populations of covariance operators has been investigated under a wide variety of possible geometries for the space of covariance matrices (see, e.g., \citet{dryden2009non} or \citet{schwartzman2006random} for an overview). However, many of these metrics are based on quantities that do not lend themselves directly for generalisation to infinite dimensional spaces (e.g., determinants, logarithms and inverses).

\citet{pigoli2014distances} were the first to make important progress in the direction of considering second-order variation in appropriate nonlinear spaces, motivated by the problem of cross-linguistic variation of phonetics in Romance languages (where the uttering of a short word is modelled as a random function). They paid particular attention to the generalisation of the so-called Procrustes size-and-shape metric (which we will call simply Procrustes metric henceforth, for tidiness), and derived some of its basic properties, with a view towards initiating a programme of non-Euclidean analysis of covariance operators.  In doing so, they (implicitly or explicitly) generated many further interesting research directions on the geometrical nature of this metric, its statistical interpretation, and the properties of Fr\'echet means with respect to this metric. 

The purpose of this paper is to address some of these questions, and further our understanding of the Procrustes metric and the induced statistical models and procedures, thus placing this new research direction in non-Euclidean statistics on a firm footing. The starting point is a relatively straightforward but quite consequential observation: that the Procurstes metric between two covariance operators on $\mathcal{H}$ coincides with the Wasserstein metric between two centred Gaussian processes on $\mathcal{H}$ endowed with those covariances, respectively (Proposition~\ref{prop:ProcWass}, Section~\ref{sec:distances}). This connection allows us to exploit the wealth of geometrical and analytical properties of optimal transportation, and contribute in two ways. On the one hand, by reviewing and collecting some important aspects of Wasserstein spaces, re-interpreted in the Procrustean context, we elucidate key geometrical (Section~\ref{sec:geometry}), topological (Section~\ref{sec:topological}), and computational (Section~\ref{sec:algorithms}) aspects of the space of covariances endowed with the Procrustes metric.  On the other hand, we establish new results related to existence/uniqueness/stability of Fr\'echet means of covariances with respect to the Procrustes metric (Sections~\ref{sec:existUnique}), tangent space principal component analysis and Gaussian multicoupling (Section~\ref{sec:pca}), and generative statistical models compatible with the Procrustes metric and linking with the problem of warping/registration in functional data analysis (Section~\ref{sec:generative}). We conclude by formulating a conjecture on the regularity of the Fr\'echet mean that could have important consequences on statistical inference (Conjecture~\ref{conj:regularity}), and by posing some additional questions for future reseach (Section~\ref{sec:future}). The next paragraph collects the notational conventions employed throughout the paper, while an ancillary section (Section~\ref{sec:aux}) collects some background technical results, for tidiness.

\subsection*{Notation}
Let $\mathcal{H}$ be a real separable Hilbert space with inner product $\langle\cdot,\cdot\rangle:\mathcal{H}\times\mathcal{H}\rightarrow\mathbb{R}$, and induced norm $\|\cdot\|:\mathcal{H}\to[0,\infty)$. Given a bounded linear operator $A:\mathcal{H}\rightarrow\mathcal{H}$, we will denote its trace (when defined) by $\tr  A$ or $\tr( A)$, its adjoint operator by $A^*$, its Moore--Penrose generalised inverse by $A^{-}$, and its inverse by $A^{-1}$, which in general is only defined on a subspace (often dense) of $\X$. The \emph{kernel} of $A$ will be denoted by $\mathrm{ker}(A)=\{v\in\mathcal H:Av=0\}$, and its \emph{range} will be denoted by $\mathrm{range}(A)=\{Av:v\in\mathcal{H}\}$. When $A$ is positive (meaning that it is self-adjoint and $\langle Av,v\rangle\ge0$ for all $v\in\mathcal H$), the unique positive operator whose square equals $A$ will be denoted by either $A^{1/2}$ or $\sqrt{A}$.  For any bounded operator $A$, $A^*A$ is positive.  The identity operator on $\mathcal{H}$ will be denoted by $\mathscr{I}$.  The operator, Hilbert--Schmidt and nuclear norms will respectively be
\[
\hs A\hs_{\infty}=\sup_{\|h\|=1}\|Ah\|,
\quad \hs A \hs_2 =\sqrt{\tr\left(A^*A\right)},
\quad  \hs A \hs_1=\tr\left(\sqrt{A^*A}\right).
\] 
It is well-known that 
\[
\hs A\hs_{\infty} \le
\hs A\hs_2 \le
\hs A\hs_{1}
\]
for any bounded linear operator $A$.  When they are all finite, we say that $A$ is \emph{nuclear} or \emph{trace-class}.      Covariance operators are well-known to be positive and trace-class.

For a pair of elements $f,g\in \mathcal{H}$, the tensor product $f\otimes g:\mathcal{H}\to\mathcal{H}$ is the linear operator defined by
\[
(f\otimes g)u=\langle g,u\rangle f,\qquad u\in \mathcal{H}.
\]
The same notation will be used to denote the tensor product between two operators, so that for operators $A$, $B$, and $G$, one has
\[
(A\otimes B)\,G
=\tr\left(B^*G\right)A.
\]
Henceforth, $\Sigma$ or $\Sigma_i$ will always denote covariance operators.

\section{Procrustes Matching and Optimal Transportation}
\label{sec:distances}
\subsection{The Procrustes Distance Between Non-Negative Matrices and Operators}\label{procrustes_distance}
In classical statistical shape analysis, one often wishes to compare objects in $\mathbb{R}^m$ modulo a symmetry group $G$. To this aim, one chooses a fixed number of $k$ homologous landmarks on each object, represented by $k\times m$ matrices $X_1$ and $X_2$, and contrasts them by the Hilbert-Schmidt (a.k.a. Frobenius) distance of $X_1$ to $X_2$, optimally \emph{matched} relative to the group $G$. This induces a distance on the orbits of $X_1$ and $X_2$ under the group $G$, the latter called the \emph{shapes} of $X_1$ and $X_2$, and usually denoted as $[X_1]$ and $[X_2]$. For instance, if $G$ is the group of rigid motions on $\mathbb{R}^m$, one centres the configurations (so that their column sums are zero) and considers the so-called Procrustes shape-and-size distance  $\min_{U:\,U\transpose U=I} \|X_1 - BX_2\|_2$  (\cite[Definition~4.13]{dryden1998statistical}), henceforth abbreviated to \emph{Procrustes distance}, for simplicity. This distance depends only on the Gram matrices $X_1X_1\transpose $, $X_2X_2\transpose$, which can be thought of as parametrising the shapes $[X_1]$ and $[X_2]$. Since Gram matrices are non-negative, \citet{dryden2009non} considered the Procrustes distance as a metric on covariances $\mathbb{R}^{k\times k}\ni S_1, S_2\succeq 0$,
\begin{equation}\label{finite_procrustes}
\Pi(S_1,S_2)
=\inf_{U: \, U\transpose U=I}\| {S_1}^{1/2} - {S_2}^{1/2}U\|_2.
\end{equation}
The unique non-negative matrix roots $S_i^{1/2}$ in \eqref{finite_procrustes} can be replaced by \emph{any} matrices $Y_i$ such that $S_i=Y_i
Y_i\transpose$, but the former is the canonical choice in the context of covariances (in shape analysis, the $Y_i$ are typically chosen via the Cholesky decomposition, and are thought of as representatives from the corresponding shape equivalence classes).

Covariance operators are trace-class and can be fundamentally seen as ``squares" of operators with finite Hilbert-Schmidt norm. 
In order to analyse linguistic data, \citet{pigoli2014distances} considered the generalisation of the Procrustes distance \eqref{finite_procrustes} to the infinite-dimensional space of covariance operators on the separable Hilbert space $L^2(0,1)$.  Their definition applies readily, though, to any separable Hilbert space $\mathcal{H}$, and we give this more general definition here:

\begin{definition}[Procrustes Metric on Covariance Operators]
For any pair of nuclear and non-negative linear operators $\Sigma_1,\Sigma_2:\mathcal{H}\times\mathcal{H}\rightarrow\mathcal{H}$ on the separable Hilbert space $\mathcal{H}$, we define the Procrustes metric as
\begin{equation}\label{infinite_procrustes}
\Pi(\Sigma_1,\Sigma_2)=
\inf_{U:\,U^* U=\mathscr{I}}\hs\Sigma_1^{1/2}-U\Sigma_2^{1/2}\hs_2,
\end{equation}
where $\{U:\,U^* U=\mathscr{I}\}$ is the set of unitary operators on $\mathcal{H}$.
\end{definition}
Their motivation was mainly the construction of a procedure for testing the equality of two covariance operators on the basis of samples from the underlying two populations, tailored to the curved geometry of the space of covariance operators (as opposed to procedures based on embedding covariances in the linear space of trace-class or Hilbert-Schmidt operators). \citet{pigoli2014distances} consider the behaviour of $\Pi$ when considering finite-dimensional projections of the operators under consideration with progressively increasing dimension, and construct a permutation-based test on the distance between the projections.  They also discuss interpolation, geodesic curves and Fr\'echet means in the space of covariance operators endowed with the distance $\Pi$. In the next three subsections, we show that the distance $\Pi$ can be interpreted as a Wasserstein distance $W$.  This observation will allow us not only to shed new light on the results of \citet{pigoli2014distances}, but also to give a more comprehensive description of the geometry of the space as well as to address some questions that were left open by \citet{pigoli2014distances}.

%finite dim approximation (wrong proof)
%permutation test
%interpolation
%Frechet mean, consistency
%algor for the frechet mean in their supplement

\subsection{The Wasserstein Distance and Optimal Coupling}
%The above formula is well-known as the Wasserstein distance between $\Sigma_1$ and $\Sigma_2$.  In this subsection, we give its definition and some of its properties that will be used in the paper;  
In this subsection, we recall the definition of the Wasserstein distance and review some of its properties that will be used in the paper; we follow \citet{villani2003topics}.  Let $\mu$ and $\nu$ be Borel probability measures on $\mathcal H$ and let $\Gamma(\mu,\nu)$ be the set couplings of $\mu$ and $\nu$.  These are Borel probability measures $\pi$ on $\mathcal H\times\mathcal H$ such that $\pi(E\times \mathcal H)=\mu(E)$ and $\pi(\mathcal H\times F)=\nu(F)$ for all Borel $E,F\subseteq\mathcal H$.  The Wasserstein distance between $\mu$ and $\nu$ is defined as
\[
W^2(\mu,\nu)
=\inf_{\pi\in \Gamma(\mu,\nu)}
\ownint{\mathcal H\times\mathcal H}{}{\|x-y\|^2}{\pi(x,y)}.
\]
The distance is finite when $\mu$ and $\nu$ have a finite second moment, meaning that they belong to the Wasserstein space
\[
\mathcal W(\mathcal H)
=\left\{\mu\textrm{ Borel probability measure on }\mathcal H:\ownint {\mathcal H}{}{\|x\|^2}{\mu(x)}
<\infty\right\}.
\]
This optimisation problem is known as the Monge--Kantorovich problem of optimal transportation, and admits a natural probabilistic formulation. Namely, if $X$ and $Y$ are random elements on $\mathcal H$ with respective probability laws $\mu$ and $\nu$, then the problem translates to the minimisation problem
\[
\inf_{Z_1\stackrel{d}{=}X,\,Z_2\stackrel{d}{=}Y}\mathbb E\|Z_1 - Z_2\|^2
\]
where the infimum is over all random vectors $(Z_1,Z_2)$ in $\mathcal H\times \mathcal H$ such that $X\stackrel{d}{=}Z_1$ and $Y\stackrel{d}{=}Z_2$, marginally.  We sometimes write $W(X,Y)$ instead of $W(\mu,\nu)$.  We say that a coupling $\pi$ is deterministic if it is manifested as the joint distribution of $(X,T(X))$ for some deterministic map $T:\mathcal H\to\mathcal H$, called an \emph{optimal transportation map} (or simply optimal map, for brevity).  In such a case $Y$ has the same distribution as $T(X)$ and we write $\nu=T\#\mu$ and say that $T$ pushes $\mu$ forward to $\nu$ .  If $\mathscr I$ is the identity map on $\mathcal H$, we can write $\pi$ in terms of $T$ as $\pi=(\mathscr I,T)\#\mu$, and we say that $\pi$ is induced from $T$. In order to highlight the fact that the optimal map $T$ transports $\mu$ onto $\nu$, \citet{ambrosio2008gradient} introduced the notation $T\equiv \mathbf{t}_{\mu}^{\nu}$, and we will make use of this notation henceforth.

A simple compactness argument shows that the infimum in the Monge-Kantorovich problem is always attained by some coupling $\pi$, for any marginal pair of measures $\mu,\nu\in \mathcal{W}(\mathcal{H})$.  Moreover, when $\mu$ is sufficiently regular\footnote{In finite dimensions, it suffices that $\mu$ be absolutely continuous with respect to Lebesgue measure.  In infinite dimensions, a Gaussian measure is regular if and only if its covariance operator is injective.  For a more general definition, see \citet[Definition~6.2.2]{ambrosio2008gradient}}, the optimal coupling is unique and given by a deterministic coupling $\pi=(\mathscr I,\mathbf{t}_{\mu}^{\nu})\#\mu$ (by symmetry, if $\nu$ is regular then the optimal coupling is unique too and takes the form $(\mathbf{t}_{\nu}^{\mu},\mathscr I)\#\nu$).

\subsection{Optimal Transportation of Gaussian Processes}\label{gauss_wasserstein_distance}
Despite admitting a useful characterisation as the gradient of a convex function (\citet{brenier1991polar}; \citet{cuesta1989notes}; \citet{knott1984optimal}; \citet{ruschendorf1990characterization}), the optimal transportation map $\mathbf{t}_{\mu}^{\nu}$ (and, consequently, the corresponding Wasserstein distance $W(\mu,\nu)=\sqrt{\ownint{\X}{}{\|x-\mathbf{t}_{\mu}^{\nu}(x)\|^2}{\mu(x)}}$) rarely admit closed-form expressions.  A notable exception is the case where $\mu$ and $\nu$ are Gaussian\footnote{Recall that a random element $X$ in a separable Hilbert space $(\mathcal{H},\langle\cdot,\cdot\rangle)$ is Gaussian with mean $m\in\mathcal{H}$ and covariance $\Sigma:\mathcal{H}\times\mathcal{H}$, if $\langle X,h\rangle\sim N(\langle m,h\rangle,\langle h,\Sigma h\rangle)$ for all $h\in\mathcal{H}$; a Gaussian measure is the law of a Gaussian random element.}. Suppose that $\mu\equiv N(m_1,\Sigma_1)$ and $\nu\equiv N(m_2,\Sigma_2)$ are Gaussian measures. Then
\[
W^2(\mu,\nu)
=\|m_1 - m_2\|^2
+\tr(\Sigma_1) + \tr(\Sigma_2) - 2\tr\sqrt{\Sigma_1^{1/2}\Sigma_2\Sigma_1^{1/2}}.
\]
This was shown by \citet{dowson1982frechet} and \citet{olkin1982distance} in the finite-dimensional case.  For a reference in separable Hilbert spaces, see \citet{cuesta1996lower}.

There is also an explicit expression for the optimal map, but its existence requires some regularity. To simplify the discussion, assume henceforth that the two Gaussian measures $\mu$ and $\nu$ are centered, i.e., $m_1=m_2=0$.  When $\mathcal H=\mathbb R^d$ is finite-dimensional, invertibility of $\Sigma_1$ guarantees the existence and uniqueness of a deterministic optimal coupling of $\mu\equiv N(0,\Sigma_1)$ of $\nu\equiv N(0,\Sigma_2)$, induced by the linear transport map
\[
\topt{\Sigma_1}{\Sigma_2}:=\Sigma_1^{-1/2}(\Sigma_1^{1/2}\Sigma_2\Sigma_1^{1/2})^{1/2}\Sigma_1^{-1/2}.
\]
This formula turns out to be (essentially) valid in infinite dimensional Hilbert spaces $\mathcal{H}$, provided that $\Sigma_1$ is ``more injective" than $\Sigma_2$, but the statement is a bit more subtle:
\begin{proposition}\label{prop:existenceMaps}
Let $\mu\equiv N(0,\Sigma_1)$ and $\nu\equiv N(0,\Sigma_2)$ be centred Gaussian measures in $\mathcal{H}$ and suppose that $\mathrm{ker}(\Sigma_1)\subseteq \mathrm{ker}(\Sigma_2)$ (equivalently, $\overline{\mathrm{range}(\Sigma_1)}\supseteq \overline{\mathrm{range}(\Sigma_2)}$).  Then there exists a linear subspace of $\mathcal{H}$ with $\mu$-measure 1, on which the optimal map is well-defined and is given by the linear operator
\[
\topt{\Sigma_1}{\Sigma_2}=\Sigma_1^{-1/2}(\Sigma_1^{1/2}\Sigma_2\Sigma_1^{1/2})^{1/2}\Sigma_1^{-1/2}.
\]
\end{proposition}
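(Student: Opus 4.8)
The plan is to verify directly that the operator $T:=\topt{\Sigma_1}{\Sigma_2}$ pushes $\mu$ forward to $\nu$ and is the gradient of a convex function, so that the Knott--Smith/Brenier/R\"uschendorf characterisation of optimal maps (cited above) identifies it as \emph{the} optimal transport map. The chief difficulty, and the reason for the careful phrasing of the statement, is that $\Sigma_1^{-1/2}$ is an \emph{unbounded} operator: its natural domain is $\mathrm{range}(\Sigma_1^{1/2})$, which carries $\mu$-measure zero. Hence $T$ need not be bounded and its very definition requires justification on a subspace of full measure.

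First I would reduce to a nondegenerate situation. Since the topological support of a centred Gaussian measure equals the closure of the range of its covariance, $\mu$ is concentrated on $H_1:=\overline{\mathrm{range}(\Sigma_1)}$, and the hypothesis $\overline{\mathrm{range}(\Sigma_2)}\subseteq H_1$ guarantees that $\nu$ is concentrated there too. Writing $P$ for the orthogonal projection onto $H_1$, I may therefore work inside $H_1$, where $\Sigma_1$ is injective with dense range, while recording the two consequences of the hypothesis used repeatedly: $P\Sigma_2=\Sigma_2$ (the range of $\Sigma_2$ lies in $H_1$) and $\Sigma_2 P=\Sigma_2$ (because $\ker\Sigma_1=H_1^\perp\subseteq\ker\Sigma_2$). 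Together these give $P\Sigma_2 P=\Sigma_2$, which is the precise point at which the kernel inclusion enters.

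The main step is to make sense of $T$ and show $T\#\mu=\nu$. I would approximate $T$ by the bounded, self-adjoint, nonnegative truncations $T_n:=P_n\Sigma_1^{-1/2}(\Sigma_1^{1/2}\Sigma_2\Sigma_1^{1/2})^{1/2}\Sigma_1^{-1/2}P_n$, where $P_n$ projects onto the span of the first $n$ eigenvectors of $\Sigma_1$ (so that $\Sigma_1^{-1/2}P_n$ is bounded). For $X\sim\mu$ each $T_nX$ is a well-defined centred Gaussian element, and a direct trace computation using $\Sigma_1^{-1/2}\Sigma_1\Sigma_1^{-1/2}=P$ together with $P\Sigma_2 P=\Sigma_2$ shows $\E\|T_nX\|^2=\tr(T_n\Sigma_1 T_n)\to\tr(\Sigma_2)<\infty$, and more precisely that $\{T_nX\}$ is Cauchy in $L^2(\mu;\X)$. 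Its limit defines $TX$ on a linear subspace of full $\mu$-measure; being an $L^2$-limit of centred Gaussians it is itself centred Gaussian, with covariance $\lim_n T_n\Sigma_1 T_n=T\Sigma_1 T=\Sigma_2$, so $T\#\mu=\nu$. (Equivalently, one may observe that $A^*:=(\Sigma_1^{1/2}\Sigma_2\Sigma_1^{1/2})^{1/2}\Sigma_1^{-1/2}$ extends to a bounded operator and write $T=\Sigma_1^{-1/2}A^*$, so that $TX=\Sigma_1^{-1/2}(A^*X)$ with $A^*X\sim N(0,\Sigma_1^{1/2}\Sigma_2\Sigma_1^{1/2})$; expanding in the eigenbasis of $\Sigma_1$ shows $\sum_i\lambda_i^{-1}\langle A^*X,e_i\rangle^2$ has finite expectation $\tr\Sigma_2$, hence converges $\mu$-a.e.)

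Finally I would establish optimality. The operator $T$ is self-adjoint and nonnegative, so it is the (sub)gradient of the proper, convex, lower semicontinuous potential $\phi(x)=\tfrac12\langle x,Tx\rangle$ (set to $+\infty$ off the domain of $T^{1/2}$). Since the coupling $(\mathscr I,T)\#\mu$ is supported on the graph of $T$, which lies in the subdifferential $\partial\phi$, and has the correct marginals $\mu,\nu$ by the previous step, the cited characterisation of optimal couplings as those concentrated on the subdifferential of a convex function shows that $(\mathscr I,T)\#\mu$ is optimal; hence $T=\topt{\Sigma_1}{\Sigma_2}$, with uniqueness following from the regularity (injectivity) of $\Sigma_1$ on $H_1$. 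I expect the genuine obstacle to be the construction in the third paragraph: controlling the unbounded $\Sigma_1^{-1/2}$ so that $T$ is densely defined, closed and self-adjoint on a set of full measure, and justifying the passage to the limit in the covariance computation.
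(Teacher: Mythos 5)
Your overall architecture (verify $T\#\mu=\nu$, then certify optimality via a convex potential) is sound, but it is worth noting that the paper does not actually prove this proposition: it cites \citet[Proposition~2.2]{cuesta1996lower} for the general case and only gives a self-contained argument when $\Sigma_1$ and $\Sigma_2$ commute, via Kolmogorov's Three Series Theorem. So your proposal amounts to reconstructing the cited proof rather than diverging from an in-paper one. Your push-forward step is correct, and your parenthetical variant is the cleaner of the two you offer: from the polar decomposition of the Hilbert--Schmidt operator $\Sigma_1^{1/2}\Sigma_2^{1/2}$ one gets $(\Sigma_1^{1/2}\Sigma_2\Sigma_1^{1/2})^{1/2}=\Sigma_1^{1/2}\Sigma_2^{1/2}V^*$ for a partial isometry $V$, whence $\Sigma_1^{-1/2}(\Sigma_1^{1/2}\Sigma_2\Sigma_1^{1/2})^{1/2}$ is bounded and its adjoint $A^*$ satisfies $A^*X\sim N(0,\Sigma_1^{1/2}\Sigma_2\Sigma_1^{1/2})$; then $\mathbb{E}\sum_i\lambda_i^{-1}\langle A^*X,e_i\rangle^2=\tr\Sigma_2<\infty$, and since the terms are nonnegative, monotone convergence alone (no three-series argument needed) gives almost sure finiteness, so $D=(A^*)^{-1}(\mathrm{range}\,\Sigma_1^{1/2})$ is a Borel linear subspace of full measure on which $T$ is defined. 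This is precisely the paper's commuting-case argument generalised, with the kernel hypothesis entering exactly where you said, through $P\Sigma_2P=\Sigma_2$. (One small slip: $T_n=P_n\Sigma_1^{-1/2}(\cdot)\Sigma_1^{-1/2}P_n$ must be read with the bounded extension $\Sigma_1^{-1/2}P_n$ on \emph{both} sides; as literally written the left factor applies $\Sigma_1^{-1/2}$ to a vector not known to lie in its domain.)

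The soft spot you flagged in the last paragraph is real and should not be glossed: on the full-measure set $D$ you do not know that $T$ is (the restriction of) a self-adjoint operator, since $D$ contains points outside $\mathrm{range}(\Sigma_1^{1/2})$, where symmetry and nonnegativity of the form $\langle Tx,x\rangle$ are not available; hence ``$\mathrm{graph}(T)\subseteq\partial\phi$'' for your proposed $\phi$ does not come for free, and ``densely defined, closed and self-adjoint'' is both delicate and stronger than needed. Two standard patches close this. First, each bounded $T_n$ is nonnegative and self-adjoint, hence the gradient of the continuous convex quadratic $x\mapsto\tfrac12\langle T_nx,x\rangle$, so $(\mathscr I,T_n)\#\mu$ is optimal between its own marginals; since $T_nX\to TX$ in $L^2$, the plans and marginals converge in Wasserstein distance, and optimality (equivalently, cyclical monotonicity of supports) is stable under such limits, so $(\mathscr I,T)\#\mu$ is optimal. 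Second, and even more economically given what the paper states just before the proposition: compute $\mathbb{E}\langle X,TX\rangle=\lim_n\tr(T_n\Sigma_1)=\lim_n\tr\bigl(P_n(\Sigma_1^{1/2}\Sigma_2\Sigma_1^{1/2})^{1/2}P_n\bigr)=\tr(\Sigma_1^{1/2}\Sigma_2\Sigma_1^{1/2})^{1/2}$, so that $\mathbb{E}\|X-TX\|^2=\tr\Sigma_1+\tr\Sigma_2-2\tr(\Sigma_1^{1/2}\Sigma_2\Sigma_1^{1/2})^{1/2}$, which is exactly the displayed Gaussian Wasserstein formula; the coupling attains the known optimal value and no convexity argument is required at all. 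With either repair your proof is complete and self-contained, which is more than the paper itself provides.
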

Proposition~\ref{prop:existenceMaps} is established by \citet[Proposition~2.2]{cuesta1996lower}.  The same reference also shows that $\mathrm{ker}(\Sigma_1)\subseteq \mathrm{ker}(\Sigma_2)$ is indeed a \emph{necessary} condition in order that the optimal map exist. In general, the linear map $\topt{\Sigma_1}{\Sigma_2}$ is an unbounded operator and \emph{cannot} be extended to the whole of $\mathcal H$. Note that we've used the obvious switch in notation $\topt{\Sigma_1}{\Sigma_2}$ in lieu of $\topt{\mu}{\nu}$ when $\mu\equiv N(0,\Sigma_1)$ and $\nu\equiv N(0,\Sigma_2)$.

In the special case where $\Sigma_1$ and $\Sigma_2$ commute ($\Sigma_1\Sigma_2=\Sigma_2\Sigma_1$), the proof of Proposition~\ref{prop:existenceMaps} is quite simple, and indeed instructive in highlighting the subtleties involved in infinite dimensions.  Assume without loss of generality that $\Sigma_1$ is injective (otherwise replace $\X$ by the closed range of $\Sigma_1$).  The domain of definition of $\Sigma_1^{-1/2}$ is the range of $\Sigma_1^{1/2}$, which is dense in $\X$;  however this range has $\mu$-measure zero.  The problem is compensated by the compactness of $\Sigma_2^{1/2}$.  Let $\{e_k\}$ be an orthonormal basis of $\X$ composed of the eigenvectors of $\Sigma_1$ and $\Sigma_2$ (they share the same eigenvectors, since they commute) with eigenvalues $a_k$ and $b_k$.  Then $\topt{\Sigma_1}{\Sigma_2}$ simplifies to $\Sigma_2^{1/2}\Sigma_1^{-1/2}$, and is defined for all $x=\sum x_ke_k\in\X$ such that
\[
\sum_{k=1}^\infty (x_kb_k^{1/2}/a_k^{1/2})^2
=\sum_{k=1}^\infty x_k^2b_k/a_k
\]
is finite.  If $X\sim N(0,\Sigma_1)$, then $X_k=\langle X,e_k\rangle$ are independent, and by Kolmogorov's Three Series Theorem (\citet[Theorem~2.5.4]{durrett2010probability}) the above series converges almost surely because $\E X_k^2=a_k$ and $\sum \E X_k^2b_k/a_k=\sum b_k=\tr\Sigma_2<\infty$ because $\Sigma_2$ is trace-class;  the other two series in the theorem are also easily verified to converge.  We see that $\topt{\Sigma_1}{\Sigma_2}$ is bounded if and only if $b_k/a_k$ is bounded, which may or may not be the case.

\subsection{Procrustes Covariance Distance and Gaussian Optimal Transportation}

We now connect the material in Subsections \ref{procrustes_distance}--\ref{gauss_wasserstein_distance}, to make the following observation:

\begin{proposition}\label{prop:ProcWass}
The Procrustes distance between two trace-class covariance operators $\Sigma_1$ and $\Sigma_2$ on $\mathcal{H}$ coincides with the Wasserstein distance between two second-order Gaussian processes $N(0,\Sigma_1)$ and $N(0,\Sigma_2)$ on $\mathcal{H}$,
\begin{align*}
\Pi(\Sigma_1,\Sigma_2)&= \inf_{R:\,R^* R=\mathscr{I}}\hs\Sigma_1^{1/2}-U\Sigma_2^{1/2}\hs_2\\
&= \sqrt{\tr(\Sigma_1) + \tr(\Sigma_2) - 2\tr\sqrt{\Sigma_2^{1/2}\Sigma_1\Sigma_2^{1/2}}}= W(N(0,\Sigma_1),N(0,\Sigma_2)).
\end{align*}
\end{proposition}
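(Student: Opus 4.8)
The plan is to establish the two equalities in the displayed chain separately. The rightmost equality is essentially free: it is the closed-form expression for the Wasserstein distance between centred Gaussian measures recalled in Subsection~\ref{gauss_wasserstein_distance} (with $m_1=m_2=0$), once one observes that $\tr\sqrt{\Sigma_1^{1/2}\Sigma_2\Sigma_1^{1/2}}=\tr\sqrt{\Sigma_2^{1/2}\Sigma_1\Sigma_2^{1/2}}$, because these two operators are $A^*A$ and $AA^*$ for $A:=\Sigma_2^{1/2}\Sigma_1^{1/2}$ and hence share the same nonzero spectrum. The entire content of the proposition therefore lies in the leftmost equality, relating the Procrustes infimum to the explicit trace formula, and this is what I would prove in detail.

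First I would expand the squared Hilbert--Schmidt norm. Since each $\Sigma_i$ is non-negative and trace-class, $\Sigma_i^{1/2}$ is self-adjoint and Hilbert--Schmidt with $\hs\Sigma_i^{1/2}\hs_2^2=\tr\Sigma_i<\infty$, so all quantities below are finite; in particular $A=\Sigma_2^{1/2}\Sigma_1^{1/2}$ is trace-class as a product of two Hilbert--Schmidt operators. Using $U^*U=\mathscr I$, the self-adjointness of $\Sigma_i^{1/2}$, and the identity $\tr(B^*)=\tr(B)$ for the (real) trace, one obtains
\[
\hs\Sigma_1^{1/2}-U\Sigma_2^{1/2}\hs_2^2=\tr\Sigma_1+\tr\Sigma_2-2\tr\!\left(\Sigma_1^{1/2}U\Sigma_2^{1/2}\right).
\]
By cyclicity of the trace, $\tr(\Sigma_1^{1/2}U\Sigma_2^{1/2})=\tr(UA)$, so minimising the Procrustes objective over unitaries is equivalent to maximising $\tr(UA)$ over unitaries.

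The heart of the argument is then the identity $\sup_{U^*U=\mathscr I}\tr(UA)=\hs A\hs_1$. Writing a singular value decomposition $A=\sum_k s_k\,(g_k\otimes f_k)$ with $s_k\ge0$ and orthonormal systems $\{f_k\},\{g_k\}$, one computes $\tr(UA)=\sum_k s_k\langle f_k,Ug_k\rangle$, and Cauchy--Schwarz (a form of the von Neumann trace inequality) yields the upper bound $\tr(UA)\le\sum_k s_k=\hs A\hs_1$. For the matching lower bound I would construct unitaries $U_n$ sending $g_k\mapsto f_k$ for $k\le n$ and acting as an arbitrary unitary identification of the orthogonal complements of $\mathrm{span}\{g_1,\dots,g_n\}$ and $\mathrm{span}\{f_1,\dots,f_n\}$; then $\tr(U_nA)\ge\sum_{k\le n}s_k-\sum_{k>n}s_k\to\hs A\hs_1$. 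Combining the two bounds and recalling $\hs A\hs_1=\tr\sqrt{A^*A}=\tr\sqrt{\Sigma_2^{1/2}\Sigma_1\Sigma_2^{1/2}}$ delivers the trace formula, hence the first equality.

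The main obstacle is precisely this lower bound in infinite dimensions. Because the unitary group is not compact, and because $\mathrm{ker}(A)$ and $\mathrm{ker}(A^*)$ (the orthogonal complements of $\overline{\mathrm{span}}\{f_k\}$ and $\overline{\mathrm{span}}\{g_k\}$) may have different dimensions, there need not exist a single unitary realising $Ug_k=f_k$ for every $k$ simultaneously, so the infimum defining $\Pi$ (equivalently the supremum of $\tr(UA)$) need not be attained. The finite-rank approximation above sidesteps this: it only requires that the complements of two $n$-dimensional subspaces of a separable space admit \emph{some} unitary identification, which always holds, while the trace-class property of $A$ forces the tail $\sum_{k>n}s_k$ to vanish. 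I would further note that attainment corresponds to these kernel dimensions agreeing, which is consistent with the regularity hypothesis $\mathrm{ker}(\Sigma_1)\subseteq\mathrm{ker}(\Sigma_2)$ under which an optimal transport map exists in Proposition~\ref{prop:existenceMaps}.
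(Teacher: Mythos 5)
Your proposal is correct, and its overall skeleton matches the paper's: expand the squared Hilbert--Schmidt norm to reduce the problem to maximising $\tr(UA)$ over unitaries, with $A=\Sigma_2^{1/2}\Sigma_1^{1/2}$ trace-class, then identify the supremum with $\hs A\hs_1=\tr\sqrt{\Sigma_1^{1/2}\Sigma_2\Sigma_1^{1/2}}$ and quote the Gaussian Wasserstein formula for the final equality. Where you genuinely diverge is in the key maximisation step. The paper writes a singular value decomposition $\Sigma_2^{1/2}\Sigma_1^{1/2}=UC^{1/2}V$ with $U,V$ \emph{unitary} and argues that $\tr(VR^*UC^{1/2})$ is maximised by choosing $R$ so that $VR^*U=\mathscr I$ --- i.e.\ it implicitly assumes the supremum is attained. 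In infinite dimensions the polar factor is in general only a partial isometry, extendable to a unitary precisely when $\dim\ker A=\dim\ker A^*$, so the paper's one-line argument is slightly too quick as stated (though harmless for the value of the infimum). Your treatment addresses exactly this point: the von Neumann-type upper bound $\tr(UA)\le\sum_k s_k$ together with the approximating unitaries $U_n$ (sending $g_k\mapsto f_k$ for $k\le n$ and identifying the infinite-dimensional complements arbitrarily) shows $\sup_U\tr(UA)=\hs A\hs_1$ without any attainment claim, the trace-class tail $\sum_{k>n}s_k\to0$ doing the work. What the paper's route buys is brevity and transparency (it is fully rigorous in finite dimensions, and whenever the unitary extension exists); what yours buys is a proof valid verbatim on any separable $\mathcal H$, plus the correct diagnosis of when the Procrustes infimum is actually achieved. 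Two small remarks: your final identification $\tr\sqrt{A^*A}=\tr\sqrt{\Sigma_2^{1/2}\Sigma_1\Sigma_2^{1/2}}$ silently routes through your opening observation that $A^*A$ and $AA^*$ share nonzero spectrum --- worth making explicit at that point --- and note that the paper's set $\{U:U^*U=\mathscr I\}$ literally consists of isometries rather than unitaries; your argument adapts, since the upper bound uses only $\|Ug_k\|\le1$ and your $U_n$ are unitary anyway.
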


\begin{proof}
Following \citet{pigoli2014distances}, we write
\[
\Pi^2(\Sigma_1,\Sigma_2)
=\inf_{R} \tr[(\sqrt{\Sigma_1} - \sqrt{\Sigma_2}R)^*(\sqrt{\Sigma_1} - \sqrt{\Sigma_2}R)]
=\tr\Sigma_1 + \tr\Sigma_2 - 2\sup_{R} \tr(R^*\Sigma_2^{1/2}\Sigma_1^{1/2}).
\]
Let $C=[\Sigma_2^{1/2}\Sigma_1^{1/2}]^*\Sigma_2^{1/2}\Sigma_1^{1/2}=\Sigma_1^{1/2}\Sigma_2\Sigma_1^{1/2}$ and the singular value decomposition $\Sigma_2^{1/2}\Sigma_1^{1/2}=U C^{1/2} V$ for $U$ and $V$ unitary.  Then $\tr(R^*\Sigma_2^{1/2}\Sigma_1^{1/2})=\tr(VR^*UC^{1/2})$ is maximised when $VR^*U$ is the identity (since $\{VR^*U:R^*R=\mathscr{I}\}$ is precisely the collection of unitary operators, and $C^{1/2}$ is positive).  We thus have
\[
\Pi^2(\Sigma_1,\Sigma_2)
=\tr(\Sigma_1) + \tr(\Sigma_2) - 2\tr\left[\sqrt{\Sigma_2^{1/2}\Sigma_1\Sigma_2^{1/2}}\right].
\]
\end{proof}
%Identifying a covariance operator $\Sigma$ with a zero mean Gaussian measure with the same covariance, we abuse notation and write $W(\Sigma_1,\Sigma_2)$ for the Wasserstein distance between the measures.  

It is worth point out that if $\Sigma_1$ and $\Sigma_2$ happen to commute, then the product $\Sigma_1^{1/2}\Sigma_2^{1/2}$ is self adjoint, so that  $\sqrt{(\Sigma_1^{1/2}\Sigma_2^{1/2})^*\Sigma_1^{1/2}\Sigma_2^{1/2}}=\Sigma_1^{1/2}\Sigma_2^{1/2}$ and the Wasserstein distance reduces to the Hilbert--Schmidt distance of the covariance roots:
\[
W^2(N(0,\Sigma_1),N(0,\Sigma_2))
=  \hs \Sigma_1^{1/2} \hs_2^2+\hs \Sigma_2^{1/2} \hs_2^2-2\langle \Sigma_1^{1/2}, \Sigma_2^{1/2} \rangle_{HS}=\hs \Sigma_1^{1/2} -\Sigma_2^{1/2}\hs_2^2,
\]
with the optimal map being $\Sigma_2^{1/2}\Sigma_1^{-1/2}$.

We shall now take advantage of the vast wealth of knowledge about optimal transportation theory in order to gain further insight on the geometry and the topology of the space of covariance operators endowed with the Procrustes metric.

\section{The Tangent Bundle}\label{sec:geometry}
In this section we review some results from the book of \citet{ambrosio2008gradient}, where it is shown how the Wasserstein distance $W$ induces a manifold geometry on the Wasserstein space $\mathcal W(\mathcal H)$.  We then translate these results into geometrical properties of the space of covariance operators, equipped with the Procrusted distance (by identifying the latter with the subspace of $\mathcal W(\mathcal H)$ that consists of centred Gaussian measures; see \citet{takatsu2011wasserstein} for a detailed description of this subspace in the finite dimensional case).  Let $\mu\in\mathcal W(\mathcal H)$ and introduce the $L_2$-like space and norm of Borel functions $f:\mathcal H\to\mathcal H$ by
\[
\|f\|_{\mathcal L_2(\mu)}
=\left(  \ownint {\mathcal H}{}{\|f(x)\|^2}{\mu(x)}  \right)^{1/2}
%<\infty,
,
\qquad 
\mathcal L_2(\mu)
=\{f:\|f\|_{\mathcal L_2(\mu)}
<\infty\}.
\]
%\end{defn}

%Although the Wasserstein space $\mathcal W_p(\X)$ is nonlinear in terms of measures, it \textit{is} linear in terms of maps.  Indeed, if $\mu\in\mathcal W_p(\X)$ and $T_i:\X\to\X$ are such that $\|T_i\|\in L_p(\mu)$, then $(\alpha T_1 + \beta T_2)\#\mu\in \mathcal W_p(\X)$ for all $\alpha,\beta\in\R$.  Later, in Section~\ref{sec:measurability}, we shall see that $\mathcal W_p(\X)$ is in fact homeomorphic to a subset of the space of such functions.  The goal of this section is to exploit the linearity of the latter in order to define the tangent bundle of $\mathcal W_p$.  This in particular will be used for deriving differentiability properties of the Wasserstein distance in Subsection~\ref{subsec:diffWassSpace}.  We assume here that $\X$ is a Hilbert space and, for simplicity only, that $p=2$;  the results below can be extended to any $p>1$, see Ambrosio, Gigli \& Savar\'e \cite{ambgigsav}.  We recall that absolutely continuous measures are assumed to be so with respect to Lebesgue measure if $\X=\R^d$ and otherwise refer to Definition~\ref{def:ac}.

Let $\mu,\nu\in \mathcal W(\mathcal H)$ be such that the optimal map from $\mu$ to $\nu$, $\topt\mu\nu$, exists.  Recalling that $\mathscr I:\mathcal H\to\mathcal H$ is the identity map, we can define a curve
\[
\mu_t=\left[\mathscr I +t(\topt {\mu}{\nu} - \mathscr I)\right]
\#\mu,\qquad t\in[0,1].
\]
This curve, known as McCann's interpolation (\citet[Equation~7]{mccann1997convexity}, is a constant speed geodesic in that %.  As hinted in the introduction to this section, it is constructed via classical linear interpolation of the transport maps $\topt\gamma\mu$ and the identity.  Clearly
$\mu_0=\mu$, $\mu_1=\nu$ and
% \eqref{eq:positivecurvature},
%\begin{alignat*}{3}
%%\begin{split}
%W(\gamma_t,\gamma)
%&\le \sqrt{\ownint {\mathcal H}{}{\left[t(\topt {\gamma}{\mu} - \mathscr I)\right]^2}{}\gamma}
%&&=tW(\gamma,\mu);\\
%%,\qquad\textrm{and}\qquad
%W(\gamma_t,\mu)
%&\le \sqrt{\ownint {\mathcal H}{}{\left[(1-t)(\topt {\gamma}{\mu} - \mathscr I)\right]^2}{}\gamma}
%&&=(1-t)W(\gamma,\mu).
%%\end{split}
%\end{alignat*}
%It follows from the triangle inequality in $\mathcal W$ that these inequalities must hold as equalities.  Taking this one step further, we see that
\[
W(\mu_t,\mu_s)
=(t-s)W(\mu,\nu),
\qquad 0\le s\le t\le 1.
\]
%In other words, McCann's interpolation is a \textbf{constant-speed geodesic} in $\mathcal W(\mathcal H)$.

\noindent The {tangent space} of $\mathcal W(\mathcal H)$ at $\mu$ is (\citet[Definition 8.5.1]{ambrosio2008gradient})
\[
\Tan_\mu
=\overline{\{t(\mathbf t-\mathscr I):\mathbf t\textrm{ uniquely optimal between }\mu \textrm{ and }\mathbf t\#\mu; t>0\}}^{\mathcal L_2(\mu)}.
\]
Since $\mathbf t$ is uniquely optimal, $\mathbf t\#\mu\in \mathcal W(\mathcal H)$ as well and $x\mapsto\|\mathbf t(x)\|$ is in $\mathcal L_2(\mu)$, so $\Tan_\mu\subseteq \mathcal L_2(\mu)$. Since optimality of $\mathbf t$ is independent of $\mu$, the only part of this definition that depends on $\mu$ is the closure operation.  Although not obvious from the definition, this is a linear space.\footnote{There is an equivalent definition in terms of gradients, in which linearity is clear, see \cite[Definition~8.4.1]{ambrosio2008gradient}:  when $\mathcal H=\R^d$, it is $\Tan_\mu = \overline{\{\nabla f:f\in C_c^\infty(\R^d)\}}^{\mathcal L_2(\mu)}$ (compactly supported $C^\infty$ functions).  When $\mathcal H$ is a separable Hilbert space, one takes $C_c^\infty$ functions that depend on finitely many coordinates, called \textbf{cylindrical functions} \cite[Definition~5.1.11]{ambrosio2008gradient}.  The two definitions of the tangent space coincide by \cite[Theorem~8.5.1]{ambrosio2008gradient}.)}

The exponential map ${\exp}_\mu:\Tan_\mu \to \mathcal W(\X)$ at $\mu$ is given by
\[
{\exp}_{\mu}(t(\mathbf t - \mathscr I))
={\exp}_{\mu}([t\mathbf t + (1-t)\mathscr I] - \mathscr I)
= [t\mathbf t + (1-t)\mathscr I]\#\mu
\quad(t\in\R).
\]
It is surjective if $\mu$ is regular. Consequently, if $\mu$ is regular, the (right) inverse of the exponential map, the log map ${\log}_\mu:\mathcal W(\X) \to \Tan_\mu$, is well-defined defined throughout $\mathcal W(\X)$, and given by
\[
\log_{\mu}(\nu)
=\topt{\mu}{\nu} - \mathscr I.
\]
In particular, one has
\[
\exp_\mu(\log_\mu(\nu))=\nu
,\quad \nu\in \mathcal W,
\qquad \textrm{and}\qquad
\log_\mu(\exp_\mu(t(\mathbf t - \mathscr I)))
= t(\mathbf t - \mathscr I)
\quad(t\in[0,1]),
\]
because convex combinations of optimal maps are optimal maps as well, and so McCann's interpolant $\left[\mathscr I+t(\topt{\mu}{\nu} - \mathscr I)\right]\#\mu$ is mapped bijectively to the line segment $t(\topt\mu\nu - \mathscr I)\in \Tan_{\mu}$ through the log map.

Let us now translate this geometric discussion to the space of covariance operators equipped with the Procrustes metric $\Pi$ (by implicitly focussing on centred Gaussian measures in $\mathcal{W}(\mathcal{H})$).  In this case, writing $\Tan_\Sigma$ for $\Tan_{N(0,\Sigma)}$, a unique optimal map $\mathbf t$ is a positive, possibly unbounded operator such that $\mathbf t\Sigma\mathbf t$ is trace-class.  In other words, $\Sigma^{1/2}\mathbf t$ is Hilbert--Schmidt, which is equivalent to $\Sigma^{1/2}(\mathbf t- \mathscr I)$ being Hilbert--Schmidt.  We consequently obtain the description of the tangent space at $\Sigma$ as 
\[
\Tan_\Sigma
=\overline{\left\{
t(S-\mathscr I):  t>0,\, S\succeq 0,\, \hs \Sigma^{1/2} (S - \mathscr I)\hs_2<\infty\right\}}
%=\overline{\left\{\Sigma^{-1/2}S: \hs S\hs_2<\infty,\, (\Sigma^{-1/2}S)^*=\Sigma^{-1/2}S\right\}}
%\textcolor{red}{
=\overline{\left\{
Q:  Q=Q^*,\, \hs \Sigma^{1/2} Q\hs_2<\infty\right\}}
%}
,
\]
%First we see that the tangent space $\Tan_\Sigma$ at an injective covariance $\Sigma$ can be identified with the set
%$$\textcolor{red}{\Tan_\Sigma=\overline{\left\{\Sigma^{-1/2}Q:  \hs Q \hs_{2}<\infty,\, Q^*=Q \right\}}}$$
where the closure is with respect to the inner product on $\Tan_\Sigma$, defined as
\begin{equation}\label{tangent_inner_product}
%\textcolor{red}{
\langle A,B\rangle_{\Tan_\Sigma}
=\ownint{\mathcal H}{}{\langle Ax, Bx\rangle}{\mu(x)}
=\tr( A\Sigma B) = \mathbb{E}[\langle AX,BX \rangle]
,\qquad
\mbox{where }\, X\sim\mu\equiv N(0,\Sigma).
%}
\end{equation}
(For the second equality in the definition of $\Tan_\Sigma$, notice that if $Q$ is a bounded self adjoint operator, then $S=\mathscr I+Q/t$ is positive when $t>\hs Q\hs_\infty$;  unbounded $Q$'s can then be approximated.)  When equipped with this inner product, $\Tan_\Sigma$ is a Hilbert space.  Note that $\Tan_\Sigma$ certainly contains all bounded self-adjoint operators on $\mathcal{H}$, but also certain unbounded ones.  For example, if $\Sigma^{1/3}$ is trace-class, then the tangent space inner product is well defined when taking $A=B=\Sigma^{-1/3}$, which is an unbounded operator.

The exponential map on $\Tan_\Sigma$ is given by $\exp_{\Sigma}(A)=(A+\mathscr I)\Sigma (A+\mathscr I)$.  Furthermore, the condition $\mathrm{ker}(\Sigma_0)\subseteq\mathrm{ker}(\Sigma_1)$ (equivalently, $\overline{\mathrm{range}(\Sigma_0)}\supseteq\overline{\mathrm{range}(\Sigma_1)}$) is 
\begin{enumerate}
\item necessary and sufficient for the existence of the log map of $\Sigma_1$ at $\Sigma_0$, given by
\[
\log_{\Sigma_0}\Sigma_1= \topt{0}{1} - \mathscr I
=\Sigma_0^{-1/2}(\Sigma_0^{1/2}\Sigma_1\Sigma_0^{1/2})^{1/2}\Sigma_0^{-1/2} - \mathscr I,
\]
when it exists;
\item sufficient for the existence of a unique (unit speed) geodesic from $\Sigma_0$ to $\Sigma_1$ given by
\[
\Sigma_t
=[t\topt01 + (1-t)\mathscr I]\Sigma_0[t\topt01 + (1-t)\mathscr I]
=t^2\Sigma_1 + (1-t)^2\Sigma_0 + t(1-t)[\topt01\Sigma_0 + \Sigma_0\topt01],
\]
where again $\topt 01=\Sigma_0^{-1/2}(\Sigma_0^{1/2}\Sigma_1\Sigma_0^{1/2})^{1/2}\Sigma_0^{-1/2}$.
\end{enumerate}
Both points follow from the manifold properties of Wasserstein space discussed earlier in this subsection, by taking $\mu\equiv N(0,\Sigma_0)$ and $\nu\equiv N(0,\Sigma_1)$ and using Proposition~\ref{prop:existenceMaps} and the remarks on necessity thereafter.
%\textcolor{red}{Finally, we have observed through simulations that, for $\Sigma_0$ and $\Sigma_1$ injective, the geodesic $\Sigma_t$ coincides with the interpolating curve presented in \citet[4.3]{pigoli2014distances}}.

\section{Topological Properties}\label{sec:topological}

%\begin{itemize}
%\item Since we have the identification with space of Gaussian processes, we can use this to get the topological structure under procrustes metric.

%\item Show that the topology is the same as that induced by the trace norm on the covariances, or equivalently the Hilbert-Schmidt norm on their roots.
%\end{itemize}

The topological properties of the Wasserstein distance are well understood, as is the topic of weak convergence of Gaussian processes.  This knowledge can thus be used in order to understand the topology induced by the Procrustes distance.  Recall that a sequence of measures $\mu_n$ converges to $\mu$ in distribution (or narrowly)\footnote{This is often called weak convergence, but we will avoid this terminology in order to avoid confusion:  weak convergence of covariance operators is not equivalent to convergence in distribution of the corresponding measures.} if $\ownint {}{}f{\mu_n}\to \ownint {}{}{f}\mu$ for all continuous bounded $f:\mathcal H\to\mathbb R$.

\begin{proposition}[Procrustes Topology]
\label{prop:convEquiv}
Let $\{\Sigma_n\}_{n=1}^{\infty},\Sigma$ be covariance operators on $\mathcal{H}$.  The following are equivalent:% for measures $\mu_n=N(0,\Sigma_n)$, $\mu=N(0,\Sigma)$.
\begin{enumerate}%[(1)]
\item $ N(0,\Sigma_n)\stackrel{n\rightarrow\infty}{\longrightarrow} N(0,\Sigma)$ in distribution.
\item $\Pi(\Sigma_n,\Sigma)\stackrel{n\rightarrow\infty}{\longrightarrow} 0$.
\item $\hs\sqrt{\Sigma_n} - \sqrt{\Sigma}\hs_2 \stackrel{n\rightarrow\infty}{\longrightarrow} 0$.
\item $\hs\Sigma_n - \Sigma\hs_{1} \stackrel{n\rightarrow\infty}{\longrightarrow} 0$.
\end{enumerate}
\end{proposition}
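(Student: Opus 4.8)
The plan is to lean entirely on the identification $\Pi(\Sigma_n,\Sigma)=W(N(0,\Sigma_n),N(0,\Sigma))$ from Proposition~\ref{prop:ProcWass} and to close the single cycle of implications
\[
(1)\Rightarrow(4)\Rightarrow(3)\Rightarrow(2)\Rightarrow(1),
\]
so that all four statements become equivalent. Three of these arrows are comparatively soft, and essentially the whole difficulty is concentrated in $(1)\Rightarrow(4)$.

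First I would dispose of the easy arrows. For $(4)\Rightarrow(3)$ I would invoke the Powers--St\o rmer inequality $\hs\sqrt{A}-\sqrt{B}\hs_2^2\le\hs A-B\hs_1$, valid for non-negative trace-class $A,B$, which gives $\hs\sqrt{\Sigma_n}-\sqrt{\Sigma}\hs_2^2\le\hs\Sigma_n-\Sigma\hs_1\to0$ at once. The arrow $(3)\Rightarrow(2)$ is immediate, since the particular choice $U=\mathscr{I}$ in the infimum defining $\Pi$ gives $\Pi(\Sigma_n,\Sigma)\le\hs\sqrt{\Sigma_n}-\sqrt{\Sigma}\hs_2$. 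Finally, $(2)\Rightarrow(1)$ is a generic feature of optimal transport: convergence in the Wasserstein metric always forces convergence in distribution (\citet[Theorem~7.12]{villani2003topics}), and by Proposition~\ref{prop:ProcWass} statement $(2)$ is precisely $W(N(0,\Sigma_n),N(0,\Sigma))\to0$.

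The crux is $(1)\Rightarrow(4)$, where narrow convergence of the Gaussians must be upgraded all the way to nuclear-norm convergence of the covariances. I would proceed in three steps. (i) Testing narrow convergence against the continuous functionals $x\mapsto\langle x,h\rangle$ shows that $N(0,\langle h,\Sigma_n h\rangle)\to N(0,\langle h,\Sigma h\rangle)$, hence $\langle\Sigma_n h,h\rangle\to\langle\Sigma h,h\rangle$ for every $h$, and polarisation yields $\Sigma_n\to\Sigma$ in the weak operator topology. (ii) A narrowly convergent sequence on the Polish space $\mathcal H$ is tight (Prokhorov), and tightness of a family of centred Gaussians is equivalent to a uniform control of covariance tails, $\lim_{K\to\infty}\sup_n\tr\big((\mathscr{I}-P_K)\Sigma_n\big)=0$, where $P_K$ projects onto the span of the first $K$ basis vectors; in particular $\tr\Sigma_n\to\tr\Sigma$, so no variance escapes to infinity. (iii) Positivity then lets me promote weak operator convergence to strong convergence: splitting $(\Sigma_n-\Sigma)x$ through $P_K$ and estimating the tail via $\|(\mathscr{I}-P_K)\Sigma_n x\|\le\sqrt{\tr((\mathscr{I}-P_K)\Sigma_n)}\,\|\Sigma_n^{1/2}x\|$ shows $\Sigma_n x\to\Sigma x$ for every $x$. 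With strong convergence $\Sigma_n\to\Sigma$ and $\tr\Sigma_n\to\tr\Sigma$ both in hand, I would conclude $\hs\Sigma_n-\Sigma\hs_1\to0$ by a non-commutative Scheff\'e argument (Gr\"umm's convergence theorem for trace-class operators).

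The main obstacle is exactly step (iii) together with the uniform-tail estimate in (ii): narrow convergence controls the finite-dimensional marginals and, through tightness, prevents leakage of mass to infinity, but converting this into convergence in the strong nuclear norm --- rather than merely weakly or in trace --- is where positivity and the Gr\"umm/Scheff\'e mechanism are indispensable. I note that the Hilbert-space Radon--Riesz property offers an alternative packaging of the same point: since $\hs\sqrt{\Sigma_n}\hs_2^2=\tr\Sigma_n\to\tr\Sigma=\hs\sqrt{\Sigma}\hs_2^2$, it would suffice to establish weak Hilbert--Schmidt convergence of the roots $\sqrt{\Sigma_n}\rightharpoonup\sqrt{\Sigma}$ and appeal to uniform convexity to obtain $(3)$ directly; the difficulty merely migrates to identifying the weak limit of the roots, which is again a manifestation of the failure of squaring to be weakly continuous.
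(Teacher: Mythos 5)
Your proposal is correct, and it distributes the work quite differently from the paper. The paper's proof treats $(4)\Rightarrow(1)\iff(3)$ by citation (Examples~3.8.15 and 3.8.13(iii) in \citet{bogachev1998gaussian}), obtains $(1)\Rightarrow(2)$ from uniform exponential moments via Fernique's theorem (\cite[Theorem~3.8.11]{bogachev1998gaussian}, or alternatively the Gaussian fourth-moment bound of Corollary~\ref{cor:Gauss4mom}) combined with \cite[Theorem~7.12]{villani2003topics}, and proves $(2)\Rightarrow(4)$ by an explicit coupling argument: writing $\Sigma_n-\Sigma=\mathbb{E}X_n\otimes(X_n-X)+\mathbb{E}(X_n-X)\otimes X$ for an optimal coupling and applying Jensen and Cauchy--Schwarz, which yields the \emph{quantitative} bound $\hs\Sigma_n-\Sigma\hs_1\le C(\Sigma)\,W(\mu_n,\mu)$. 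You instead close the cycle $(1)\Rightarrow(4)\Rightarrow(3)\Rightarrow(2)\Rightarrow(1)$ and concentrate all the difficulty in $(1)\Rightarrow(4)$, which you resolve by purely operator-theoretic means: weak operator convergence of $\Sigma_n$ from the one-dimensional Gaussian marginals, Prokhorov tightness upgraded to the uniform covariance-tail criterion (which is the same Gaussian fact, \cite[Example~3.8.13(iv)]{bogachev1998gaussian}, that the paper invokes in Section~\ref{sec:projections}), the upgrade from weak to strong operator convergence via the estimate $\|(\mathscr{I}-P_K)\Sigma_n x\|\le\sqrt{\tr((\mathscr{I}-P_K)\Sigma_n)}\,\|\Sigma_n^{1/2}x\|$ (which checks out, since the operator norm of $(\mathscr{I}-P_K)\Sigma_n^{1/2}$ is dominated by its Hilbert--Schmidt norm), and finally Gr\"umm's convergence theorem, whose hypotheses (strong convergence of self-adjoint operators plus $\tr\Sigma_n\to\tr\Sigma$) you have legitimately established. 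Your $(4)\Rightarrow(3)$ via Powers--St{\o}rmer replaces the paper's citation of Bogachev, and your soft arrows $(3)\Rightarrow(2)$ (take $U=\mathscr{I}$) and $(2)\Rightarrow(1)$ (easy direction of \cite[Theorem~7.12]{villani2003topics}) match the standard facts the paper uses. The trade-off: the paper's coupling argument buys a local Lipschitz control of the trace norm by the Procrustes distance, which is of independent use, and it avoids heavy operator-theoretic machinery by outsourcing to Bogachev; your route avoids Fernique's theorem entirely, is more self-contained modulo two classical inequalities (Powers--St{\o}rmer and Gr\"umm), and makes explicit why positivity is the mechanism that converts weak information into nuclear-norm convergence---but it is purely qualitative.
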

In particular, sets of covariance opertators are pre-compact with respect to $\Pi$ if and only if the set of corresponding centred Gaussian measures with those covariances is uniformly tight.  We also remark that convergence in operator norm is \emph{not} sufficient for any of (1)-(4): to obtain a counterexample, take $\Sigma=0$ and let $\Sigma_n$ have $n$ eigenvalues equal to $1/n$ and all the others zero.
\begin{proof}
Write $\mu_n\equiv N(0,\Sigma_n)$ and $\mu\equiv N(0,\Sigma)$ for tidiness and recall that $\Pi(\Sigma_n,\Sigma)=W(\mu_n,\mu)$. For the implications (4)$\Longrightarrow$(1)$\iff$(3) see Examples~3.8.15 and 3.8.13(iii) in \citet{bogachev1998gaussian}.  By \cite[Theorem~3.8.11]{bogachev1998gaussian}, if (1) holds, then the measures $(\mu_n)$ have uniform exponential moments, and by Theorem~7.12 in \cite{villani2003topics} (1) and (2) are equivalent (see Corollary~\ref{cor:Gauss4mom} for a more elementary proof that does not involve Fernique's theorem).  To conclude it suffices to show that (2) yields (4).  Let $X\sim \mu$, $X_n\sim \mu_n$ (defined on the same probability space) such that $W^2(\mu_n,\mu)=\mathbb E\| X_n - X\|^2\to0$.  Notice that $\Sigma_n=\mathbb EX_n\otimes X_n$.  Invoking Jensen's inequality to
\[
\Sigma_n - \Sigma
=\mathbb EX_n\otimes (X_n - X) + \mathbb{E}(X_n - X)\otimes X
\]
yields (recall that $\hs f\otimes g\hs_1 = \|f\|\|g\|$;  see Lemma~\ref{lem:tensornorm} below)
\[
\hs \Sigma_n - \Sigma\hs_1
\le \mathbb E\hs X_n\otimes (X_n - X)\hs_1
+ \mathbb E \hs (X_n - X)\otimes X\hs_1
=\mathbb E\|X_n\|\|X_n - X\| + \|X_n - X\|\|X\|.
\]
When $n$ is sufficiently large $\mathbb E\|X_n\|^2 \le 1+\mathbb E\|X\|^2$ and then the right-hand side is
\[
\le \sqrt{\mathbb E\|X_n - X\|^2}\left(\sqrt {1+\mathbb E\|X\|^2} + \sqrt{\mathbb E\|X\|^2}\right)
=C(\Sigma)W(\mu_n,\mu),
\]
where $C(\Sigma)=\sqrt{1+\tr\Sigma} + \sqrt{\tr\Sigma}$, and this vanishes as $n\to\infty$.
\end{proof}
More is known about the topology of Wasserstein space;  for instance, the exponential and log maps given in Section~\ref{sec:geometry} are continuous, so $\W(\X)$ is homeomorphic to an infinite-dimensional convex subset of a Hilbert space $\mathcal L_2(\mu)$ (for any regular measure $\mu$); see the dissertation \citet[Lemmas~3.4.4 and 3.4.5]{zemel2017thesis} or the forthcoming book \citet{panaretos2018introduction}.

\section{Finite Rank Approximations}\label{sec:projections} 

\citet{pigoli2014distances} considered the validity of approximating the Procrustes distance $\Pi$ between two infinite-dimensional operators by the distance between finite-dimensional projections thereof (in the sense of convergence of the latter to the former). Though this validity can be obtained from Proposition \ref{prop:convEquiv}, use of the Wasserstein interpretation of $\Pi$ provides a straightforward calculation of the projection error and an elementary proof of convergence under projections forming an approximate identity in $\mathcal{H}$ (whether the projections are finite dimensional or not).  If the projections are indeed finite dimensional, one can furthermore establish a stronger form of validity: uniform convergence over compacta.  %The construction does not require Gaussianity and will thus be formulated for arbitrary measures in Wasserstein space.

Let $\mu\in \mathcal W(\X)$ with covariance $\Sigma$ and $\mathscr P$ be a projection operator ($\mathscr P^*=\mathscr P=\mathscr P^2$).  Then $\mathscr P$ is an optimal map from $\mu$ to $\mathscr P\#\mu$ and so
\[
W^2(\mu,\mathscr P\#\mu)
=\ownint {\X}{}{\|x-\mathscr Px\|^2}{\mu(x)}
=\tr\left\{(\mathscr I-\mathscr P)\Sigma(\mathscr I-\mathscr P)\right\}
=\tr\left\{(\mathscr I-\mathscr P)\Sigma\right\}.
\]
This is true regardless of $\mu$ being Gaussian, but taking $\mu$ to be $N(0,\Sigma)$, in particular, yields the explicit error
$$\Pi^2(\Sigma,\mathscr{P}\Sigma\mathscr{P}) =\tr\left\{(\mathscr I-\mathscr P)\Sigma\right\},$$
where $\mathscr{P}\Sigma\mathscr{P}$ is the projection of $\Sigma$ onto the range of $\mathscr{P}$. This indeed converges to zero when $\mathscr{P}_n$ is an approximate identity, in the sense of $\mathscr{P}_n$ converging \emph{strongly} to the identity: a sequence of operators $T_n$ converges to $T$ \emph{strongly} if $T_nx\to Tx$ for all $x\in \X$ (\citet[p.\ 198]{stein2009real}).\footnote{This is much weaker than convergence in operator norm, but stronger than requiring that $\innprod {T_nx}y\to\innprod {Tx}y$ for all $x,y\in H$, which is called \emph{weak} convergence of $T_n$ to $T$.\label{footnote:strongly}}

\begin{lemma}\label{lem:strong_convergence}
Let $\mathscr P_n$ be a sequence of projections that converges strongly to the identity.  Then $\mathscr P_n\#\mu\to \mu$ in $\mathcal W(\X)$ for any $\mu\in\mathcal W(\X)$, and consequently $\Pi(\Sigma,\mathscr{P}_n\Sigma\mathscr{P}_n)\rightarrow 0$.
\end{lemma}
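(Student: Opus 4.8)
The plan is to reduce the Wasserstein convergence to a scalar integral and then invoke dominated convergence. First I would recall the identity established immediately before the lemma: since each $\mathscr{P}_n$ is a projection, it is an optimal map from $\mu$ to $\mathscr{P}_n\#\mu$, so that
\[
W^2(\mu,\mathscr{P}_n\#\mu)
=\ownint{\X}{}{\|x-\mathscr{P}_n x\|^2}{\mu(x)}.
\]
In fact only the trivial coupling bound $W^2(\mu,\mathscr{P}_n\#\mu)\le\ownint{\X}{}{\|x-\mathscr{P}_n x\|^2}{\mu(x)}$ is needed, which comes from the fact that $(\mathscr{I},\mathscr{P}_n)\#\mu\in\Gamma(\mu,\mathscr{P}_n\#\mu)$. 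Thus it suffices to show the right-hand integral tends to $0$. (Note also that $\|\mathscr{P}_n x\|\le\|x\|$ gives $\mathscr{P}_n\#\mu\in\mathcal{W}(\X)$, so the quantities are finite.)

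Next, strong convergence $\mathscr{P}_n\to\mathscr{I}$ means $\mathscr{P}_n x\to x$ for every $x\in\X$, so the integrand $\|x-\mathscr{P}_n x\|^2\to0$ pointwise in $x$. To pass the limit inside the integral I would dominate: any projection satisfies $\hs\mathscr{P}_n\hs_\infty\le1$, whence $\|x-\mathscr{P}_n x\|\le\|x\|+\|\mathscr{P}_n x\|\le2\|x\|$ and therefore $\|x-\mathscr{P}_n x\|^2\le4\|x\|^2$. The dominating function $4\|x\|^2$ is $\mu$-integrable precisely because $\mu\in\mathcal{W}(\X)$, i.e.\ by the defining finite-second-moment property of the Wasserstein space. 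The dominated convergence theorem then yields $\ownint{\X}{}{\|x-\mathscr{P}_n x\|^2}{\mu(x)}\to0$, hence $W(\mu,\mathscr{P}_n\#\mu)\to0$, which is exactly $\mathscr{P}_n\#\mu\to\mu$ in $\mathcal{W}(\X)$.

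For the stated consequence I would specialise to $\mu\equiv N(0,\Sigma)$. Pushing a centred Gaussian forward by the bounded self-adjoint operator $\mathscr{P}_n$ yields again a centred Gaussian, with covariance $\mathscr{P}_n\Sigma\mathscr{P}_n^*=\mathscr{P}_n\Sigma\mathscr{P}_n$; that is, $\mathscr{P}_n\#\mu=N(0,\mathscr{P}_n\Sigma\mathscr{P}_n)$. Invoking the identification of $\Pi$ with $W$ on centred Gaussians (Proposition~\ref{prop:ProcWass}), we conclude $\Pi(\Sigma,\mathscr{P}_n\Sigma\mathscr{P}_n)=W(\mu,\mathscr{P}_n\#\mu)\to0$.

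The only point requiring genuine care is the interchange of limit and integral: strong operator convergence delivers merely pointwise (in $x$) decay of the integrand, and \emph{not} the operator-norm or uniform control that would make the conclusion immediate. Indeed, the remark following Proposition~\ref{prop:convEquiv} shows that operator-norm convergence is strictly weaker than Procrustes convergence, so one cannot hope to bypass an integration argument. What rescues the proof is the uniform majorant $4\|x\|^2$ together with $\mu\in\mathcal{W}(\X)$, so that the finite-second-moment hypothesis built into the Wasserstein space is exactly the ingredient that upgrades pointwise convergence of the integrand to convergence of the integral.
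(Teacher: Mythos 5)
Your proof is correct and takes essentially the same approach as the paper's: the paper likewise uses that $\mathscr P_n$ optimally couples $\mu$ with $\mathscr P_n\#\mu$, so that $W^2(\mu,\mathscr P_n\#\mu)=\ownint{\X}{}{\|x-\mathscr P_n x\|^2}{\mu(x)}$, and then applies the dominated convergence theorem using $\mathscr P_n x\to x$ and $\|\mathscr P_n x\|\le\|x\|$, finally specialising to $\mu\equiv N(0,\Sigma)$. Your version merely spells out details the paper leaves implicit (the majorant $4\|x\|^2$, integrability from $\mu\in\mathcal W(\X)$, and the pushforward identity $\mathscr P_n\#N(0,\Sigma)=N(0,\mathscr P_n\Sigma\mathscr P_n)$ combined with Proposition~\ref{prop:ProcWass}).
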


The setting considered in \citet{pigoli2014distances} is indeed a special case of Lemma \ref{lem:strong_convergence}:  let $\{e_k\}_{k\geq 1}$ be an orthonormal basis of $\X$ and define $\mathscr P_n=\sum_{j=1}^{n}e_j\otimes e_j$ as the projection onto the span of $\{e_1,\dots,e_n\}$.  Then $\mathscr P_n$ converges strongly to the identity as $n\rightarrow\infty$.

\begin{proof}[Proof of Lemma \ref{lem:strong_convergence}]
Since $\mathscr P_nx\to x$ for all $x$ and $\|\mathscr P_nx\|\le\|x\|$, the result that $\mathscr P_n\#\mu\to \mu$ in $\mathcal W(\X)$  follows from the dominated convergence theorem. Taking $\mu\equiv N(0,\Sigma)$ then completes the proof.
\end{proof}
When focussing on finite dimensional projections, a stronger statement is possible, if one considers compact sets:
\begin{proposition}\label{prop:unifConv}
Let $\{e_k\}_{k\geq 1}$ be an orthonormal basis of $\mathcal{H}$ and $\mathscr P_n=\sum_{j=1}^{n}e_j\otimes e_j$ be the projection on the span of $\{e_1,\dots,e_n\}$.  Let %$A\subset H$ be a collection of vectors and 
$\mathcal B$ be a collection of positive bounded operators satisfying
\begin{equation}\label{equi_small_tails}
%\sup_{m\in A}
%\|(\mathscr I - \mathscr P_n)m\|
%\to0
%\qquad
%\textrm{and}
%\qquad
\sup_{\Sigma\in \mathcal B}
\sum_{j=n+1}^\infty
\innprod {\Sigma e_j}{e_j}
\to0
,
\qquad\qquad
\textrm{as }
n\to\infty.
\end{equation}
Then,
\[
\sup_{\Sigma_1,\Sigma_2\in\mathcal B}
|\Pi(\mathscr P_n\Sigma_1\mathscr P_n,\mathscr P_n\Sigma_2\mathscr{P}_n) - \Pi(\Sigma_1,\Sigma_2)|
\to0
,
\qquad
n\to\infty.
\]
\end{proposition}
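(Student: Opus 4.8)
The plan is to exploit two facts already at our disposal: that $\Pi$ is a bona fide metric (being the Wasserstein distance between centred Gaussians, by Proposition~\ref{prop:ProcWass}), and that the projection error admits the exact formula $\Pi^2(\Sigma,\mathscr P_n\Sigma\mathscr P_n)=\tr\{(\mathscr I-\mathscr P_n)\Sigma\}$ derived just before Lemma~\ref{lem:strong_convergence}. A preliminary observation is that the hypothesis \eqref{equi_small_tails} already forces each $\Sigma\in\mathcal B$ to be trace-class: since $\Sigma$ is positive, $\tr\Sigma=\sum_{j}\innprod{\Sigma e_j}{e_j}$, and finiteness of this sum (guaranteed because its tails tend to $0$) makes $\Sigma$ a legitimate covariance operator, so that $\Pi$ is well-defined on all pairs from $\mathcal B$.

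First I would reduce the two-operator difference to single-operator projection errors via the triangle inequality. For any four covariance operators the metric property gives $|\Pi(A,B)-\Pi(C,D)|\le\Pi(A,C)+\Pi(B,D)$. Taking $A=\mathscr P_n\Sigma_1\mathscr P_n$, $C=\Sigma_1$, $B=\mathscr P_n\Sigma_2\mathscr P_n$ and $D=\Sigma_2$ yields
\[
|\Pi(\mathscr P_n\Sigma_1\mathscr P_n,\mathscr P_n\Sigma_2\mathscr P_n)-\Pi(\Sigma_1,\Sigma_2)|
\le \Pi(\mathscr P_n\Sigma_1\mathscr P_n,\Sigma_1)+\Pi(\mathscr P_n\Sigma_2\mathscr P_n,\Sigma_2),
\]
which decouples the problem into controlling how well each $\Sigma_i$ is approximated by its own finite-rank projection.

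Next I would insert the exact projection error. Since $\mathscr P_n\Sigma\mathscr P_n$ is precisely the projection of $\Sigma$ onto the range of $\mathscr P_n$, the formula recalled above gives
\[
\Pi(\mathscr P_n\Sigma\mathscr P_n,\Sigma)
=\sqrt{\tr\{(\mathscr I-\mathscr P_n)\Sigma\}}
=\sqrt{\sum_{j=n+1}^\infty\innprod{\Sigma e_j}{e_j}},
\]
the last equality because $\mathscr I-\mathscr P_n$ is the projection onto $\overline{\mathrm{span}}\{e_{n+1},e_{n+2},\dots\}$. Bounding each of the two summands in the previous display by the supremum over $\mathcal B$ then gives
\[
\sup_{\Sigma_1,\Sigma_2\in\mathcal B}
|\Pi(\mathscr P_n\Sigma_1\mathscr P_n,\mathscr P_n\Sigma_2\mathscr P_n)-\Pi(\Sigma_1,\Sigma_2)|
\le 2\sqrt{\sup_{\Sigma\in\mathcal B}\sum_{j=n+1}^\infty\innprod{\Sigma e_j}{e_j}},
\]
and the right-hand side tends to $0$ by assumption \eqref{equi_small_tails}.

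There is essentially no analytic obstacle here: the entire argument is a clean consequence of the triangle inequality together with the exact, basis-diagonal form of the projection error. The only point requiring attention is the uniformity, and this is handled automatically because the hypothesis \eqref{equi_small_tails} is stated precisely as a uniform bound on the tail sums $\sum_{j>n}\innprod{\Sigma e_j}{e_j}$, which are exactly the squared projection errors appearing in the bound. Thus the \emph{difficulty}, such as it is, lies only in recognising that condition \eqref{equi_small_tails} has been tailored to match the quantity controlling $\Pi(\mathscr P_n\Sigma\mathscr P_n,\Sigma)$; once this is seen, the proof is immediate.
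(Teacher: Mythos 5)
Your proof is correct and follows essentially the same route as the paper's: the paper's ``it suffices to show that $W(\mu,\mathscr P_n\#\mu)\to0$ uniformly'' is exactly your triangle-inequality reduction $|\Pi(A,B)-\Pi(C,D)|\le\Pi(A,C)+\Pi(B,D)$ made implicit, followed by the same exact projection-error identity $\Pi^2(\Sigma,\mathscr P_n\Sigma\mathscr P_n)=\tr\{(\mathscr I-\mathscr P_n)\Sigma\}=\sum_{j>n}\innprod{\Sigma e_j}{e_j}$, whose supremum over $\mathcal B$ vanishes by hypothesis \eqref{equi_small_tails}. Your added remark that \eqref{equi_small_tails} already forces each $\Sigma\in\mathcal B$ to be trace-class is a small but welcome clarification the paper leaves tacit.
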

%Said differently, the convergence is uniformly stable on the set of Gaussian measures corresponding to a such a set $\mathcal B$ of covariance operators.  It is also easy to extend this to nonzero means;  the additional condition is that $\sup_{m}\|(\mathscr I - \mathscr P_n)m\|\to0$ for the means $m$ of the measures in $\mathcal K$.
\begin{proof}
Let $\mathcal K\subset \mathcal W(\X)$ be a collection of measures with $m(\mu)\in A$ and 
$\Sigma(\mu)\in\mathcal B$ for all $\mu\in\mathcal K$. It suffices to show that $W(\mu,\mathscr P_n\#\mu)\to0$ uniformly and indeed
\[
W^2(\mu,\mathscr P_n\#\mu)
=
%\| (\mathscr I-\mathscr P_n)m(\mu)\|^2
%+
\tr(\mathscr I-\mathscr P_n)\Sigma(\mu)
=
%\| (\mathscr I - \mathscr P_n)m(\mu)\|^2
%+
\sum_{j=n+1}^\infty \innprod {\Sigma(\mu) e_j}{e_j}
\]
vanishes uniformly as $n\to\infty$.
\end{proof}
The collection $\mathcal B$ of covariances of a tight set of centred Gaussian measures satisfies the tail condition \eqref{equi_small_tails} with respect to \emph{any} orthonormal basis $\{e_k\}_{k\ge1}$ of $\X$ (\cite[Example 3.8.13(iv)]{bogachev1998gaussian}).  As per Proposition~\ref{prop:convEquiv}, the tightness condition admits three alternative equivalent formulations in purely operator theory terms.  The first is that $\mathcal B$ be compact with respect to the distance $\Pi$.  The second is that $\mathcal B$ be of the form $\mathcal{B}=\{A^2: A\in \mathcal{A}\}$ for $\mathcal{A}$ a compact set of positive Hilbert--Schmidt operators.  The third is that $\mathcal B$ be compact with respect to the trace norm.

%\textcolor{red}{
%The generative model of Section~\ref{sec:generative} provides a mean to construct such a set.  Let $\Sigma$ be a covariance operator and construct the set $\mathcal B=T\Sigma T$ for a collection $\mathcal T$ of positive bounded operators $T$.  Then $\mathcal B$ satisfies the tail condition if $\langle \Sigma Te_k,Te_k\rangle/\langle \Sigma e_k,e_k\rangle\le c$ for all $T$ and all $k$.  This will certainly be the case if $T$ and $\Sigma$ commute, with $c=\sup_T\hs T\hs_\infty^2$.  More general examples can be constructed, however.  Suppose that $(e_k)$ is an orthonormal basis of eigenvectors of $\Sigma$ with eigenvalues $\lambda_k$.  Assume that:
%\begin{enumerate}
%\item the eigenvalues decay at most exponentially fast:  For all $k$ (large), $\lambda_k/\lambda_{k+1}\le C_1$;
%\item the operators $T$ are bounded uniformly:  $\hs T\hs_\infty\le C_2$;  and
%\item the operators $T$ are ``backwards compatible" with $\Sigma$:  there exists a finite integer $M$ such that for all $k$ (large) and all $T$, $\langle Te_k,e_{k-j}\rangle=0$ for all $j>M$.
%\end{enumerate}
%One can then verify that $\langle \Sigma Te_k,Te_k\rangle/\langle \Sigma e_k,e_k\rangle\le C_2C_1^{M+2}$, so the collection $\mathcal B=\{T\Sigma T\}_{T\in \mathcal T}$ satisfies the tail condition of Proposition~\ref{prop:unifConv} and the uniform convergence holds.
%}

\section{Existence and Uniqueness of Fr\'echet Means}\label{sec:existUnique}
The most basic statistical task in a general metric space is that of obtaining a notion of average.  If $\Sigma_1,\dots,\Sigma_N$ are covariance operators, their mean can be modelled as a Fr\'echet mean (\citet{frechet1948elements}) with respect to the Procrustes metric (equivalenly, a Wasserstein barycentre of corresponding centred Gaussian measures), defined as the minimiser of the Fr\'echet functional
\[
F(\Sigma)
=\frac1{2N}\sum_{i=1}^N
\Pi^2(\Sigma,\Sigma_i)
=\frac1{2N}\sum_{i=1}^N
W^2(N(0,\Sigma),N(0,\Sigma_i)).
\]
One can also can consider the Fr\'echet mean of a random covariance operator $\mathscr A$ as the minimiser of $\Sigma\mapsto F(\Sigma)=\frac12\mathbb E\Pi^2(\Sigma,\mathscr A)$;  the empirical measure can be recovered from this when $\mathscr A$ has the uniform distribution on the finite set $\{\Sigma_1,\dots,\Sigma_N\}$. See Section \ref{sec:generative} for a more thorough discussion of the population case. 
  The Fr\'echet mean of arbitrary measures in $\mathcal W(\mathcal H)$ can be defined in the same way.  Unlike the linear mean, existence and uniqueness of Fr\'echet means in general metric space is a rather delicate matter (see, e.g., \citet{bhattacharya2003large,bhattacharya2005large} and \citet{karcher1977riemannian}).  In the particular case of the Wasserstein space, however, existence and uniqueness can be established under rather mild assumptions.  For the finite-dimensional case, such conditions were studied by \citet{agueh2011barycenters}.  In particular, it is known that the Fr\'echet mean of Gaussian measures is a Gaussian measure, and so there is no ambiguity as to whether we minimise $F$ over Gaussian measures or arbitrary measures.

\citet{pigoli2014distances} et al also considered the Fr\'echet mean with respect to $\Pi$, but working with their formulation $\Pi(\Sigma_1,\Sigma_2)=
\inf_{U:\,U^* U=\mathscr{I}}\hs\Sigma_1^{1/2}-U\Sigma_2^{1/2}\hs_2$ made it difficult to deal with existence and uniqueness.  We now show how this can be done easily using the Wasserstein interpretation.  We begin with existence, which holds for a general collection of measures.  The proof relies upon the notion of multicouplings.
\begin{definition}[multicouplings]
Let $\mu_1,\dots,\mu_N\in\mathcal W(\mathcal H)$.  A multicoupling of $(\mu_1,\dots,\mu_N)$ is a Borel measure on $\mathcal H^N$ with marginals $\mu_1,\dots,\mu_N$.
%\textcolor{red}{strictly speaking we need to say that the marginals are ordered $\pi(A\times \mathcal H\times\dots\times \mathcal H)=\mu_1(A)$ etc.  but this would make the definition clumsy and in fact doesn't really matter}
\end{definition}
An optimal multicoupling of $\mu_1,\dots,\mu_N$ is a multicoupling $\pi$ that minimises
\[
G(\pi) 
=\frac 1 {2N^2} \ownint {\mathcal H^N}{}{\sum_{i<j} \|x_i - x_j\|^2}{\pi(x_1,\dots,x_N)}
=\ownint {\mathcal H^N}{}{\frac 1{2N} \sum_{i=1}^N \|x_i - \overline x\|^2}{\pi(x)}.
\]
We shall discuss the probabilistic interpretation of multicoupling in more detail in Section~\ref{sec:pca};  at this stage we merely use it as a tool for deriving analytical properties of Fr\'echet means.  When $N=2$, multicouplings are simply couplings and finding an optimal multicoupling is the optimal transport problem.  On $\mathbb R^d$, multicouplings were studied by \citet{gangbo1998optimal} (also see \citet{zemel2017fr}).  In analogy with the optimal transport problem, an optimal multicoupling always exists, and if $\mu_1$ is regular an optimal multicoupling takes the form $(\mathscr I,S_2,\dots,S_N)\#\mu_1$ for some functions $S_i:\mathbb R^d\to\mathbb R^d$, where
\[
%\textcolor{red}{
(\mathscr I,S_2,\dots,S_N)\#\mu_1(B_1\times\hdots\times B_N)
=\mu_1(\{x\in B_1:S_2(x)\in B_2,\dots,S_N(x)\in B_N\})
=\mu_1\left(\bigcap_{i=1}^NS_i^{-1}(B_i)\right)
%\mathscr I\#\mu_1(B_1)\times S_2\#\mu_1(B_1)\times\hdots\times S_N\#\mu_1(B_N)}
\]
for any Borel-rectangle $B_1\times\hdots\times B_N$, and $S_1=\mathscr I$. The relationship between multicouplings and Fr\'echet mean becomes clear in the following lemma.  It is a slight refinement of Proposition~4.2 in \citet{agueh2011barycenters}, and we provide a proof for completeness.

\begin{lemma}[Fr\'echet means and multicouplings]\label{lem:FrechetMulti}
Let $\mu^1,\dots,\mu^N\in \mathcal W$.  Then $\mu$ is a Fr\'echet mean of $(\mu^1,\dots,\mu^N)$ if and only if there exists a multicoupling $\pi\in \mathcal W _2(\mathcal H^N)$ of $(\mu^1,\dots,\mu^N)$ such that
\[
\mu
=
M_N \# \pi
,
\qquad
M_N:\mathcal H^N \to \mathcal H
,
\qquad
M_N(x_1,\dots,x_N)
=\overline x
=
\frac 1N \sum_{i=1}^N x_i.
\]
\end{lemma}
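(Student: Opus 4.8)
The plan is to reduce the biconditional to the single scalar identity
\[
\inf_{\mu\in\mathcal W}F(\mu)
=\inf_{\pi}G(\pi),
\]
the second infimum running over all multicouplings of $(\mu^1,\dots,\mu^N)$, and then to track the cases of equality. The bridge between $F$ and $G$ is the elementary variance decomposition: for any $y\in\mathcal H$ and any $x_1,\dots,x_N\in\mathcal H$,
\[
\frac1{2N}\sum_{i=1}^N\|x_i-y\|^2
=\frac1{2N}\sum_{i=1}^N\|x_i-\overline x\|^2+\tfrac12\|y-\overline x\|^2
\ge\frac1{2N}\sum_{i=1}^N\|x_i-\overline x\|^2,
\]
with equality if and only if $y=\overline x$. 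This is precisely the statement that the barycentre map $M_N$ is the pointwise minimiser sitting inside the integrand of $G$.

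First I would prove $\inf F\le\inf G$. Given any multicoupling $\pi$, set $\mu=M_N\#\pi$, which lies in $\mathcal W(\mathcal H)$ since $\pi\in\mathcal W_2(\mathcal H^N)$. Pushing $\pi$ through $(x_1,\dots,x_N)\mapsto(\overline x,x_i)$ yields a (generally suboptimal) coupling of $\mu$ and $\mu^i$, so $W^2(\mu,\mu^i)\le\int\|\overline x-x_i\|^2\,d\pi$; summing over $i$ gives $F(M_N\#\pi)\le G(\pi)$, and choosing an optimal $\pi$ (which exists, as recalled before the lemma) proves the bound. The reverse inequality $\inf G\le\inf F$ uses the gluing lemma: for a given $\mu$, pick optimal couplings $\pi_i\in\Gamma(\mu,\mu^i)$ and glue them over their common marginal $\mu$ into a measure $\gamma$ on $\mathcal H^{N+1}$ whose $(0,i)$-marginal is $\pi_i$. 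Its projection $\pi$ onto the last $N$ coordinates is a multicoupling, and the variance inequality applied with $y$ equal to the zeroth coordinate gives
\[
G(\pi)=\int\frac1{2N}\sum_{i=1}^N\|\overline x-x_i\|^2\,d\gamma
\le\int\frac1{2N}\sum_{i=1}^N\|x_0-x_i\|^2\,d\gamma=F(\mu).
\]
Combining the two bounds establishes the identity.

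With the identity in hand, both directions follow. For the ``if'' direction, if $\pi$ is an \emph{optimal} multicoupling and $\mu=M_N\#\pi$, then $F(\mu)\le G(\pi)=\inf G=\inf F$, so $\mu$ minimises $F$ and is a Fr\'echet mean. For the ``only if'' direction, let $\mu$ be a Fr\'echet mean and construct $\gamma$ and $\pi$ as in the gluing step. Then
\[
\inf F=F(\mu)=\int\frac1{2N}\sum_{i=1}^N\|x_0-x_i\|^2\,d\gamma
\ge G(\pi)\ge\inf G=\inf F,
\]
so every inequality is an equality; in particular $\int\tfrac12\|x_0-\overline x\|^2\,d\gamma=0$, forcing $x_0=\overline x$ $\gamma$-almost surely. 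Taking the zeroth-coordinate marginal then gives $\mu=M_N\#\pi$, while $G(\pi)=\inf G$ shows this $\pi$ is optimal.

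The main obstacle is the measure-theoretic gluing step and its equality analysis in the infinite-dimensional setting: one must invoke the gluing lemma for couplings on the Polish space $\mathcal H$, check that the glued measure $\gamma$ has finite second moment so that $\pi\in\mathcal W_2(\mathcal H^N)$ and all integrals are finite, and justify passing from equality of the integrals to the almost-sure identity $x_0=\overline x$. Everything else is the pointwise variance identity and bookkeeping. One genuine subtlety worth flagging is that optimality of $\pi$ is indispensable in the ``if'' direction: an arbitrary multicoupling only gives $F(M_N\#\pi)\le G(\pi)$, and $M_N\#\pi$ need not be a Fr\'echet mean unless $\pi$ attains $\inf G$ (for instance, the anti-diagonal coupling of two identical two-point measures sends $M_N$ to a point mass that is not the Fr\'echet mean).
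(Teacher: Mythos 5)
Your proof is correct and follows essentially the same route as the paper's: both reduce the lemma to the identity $\inf F = \inf G$, prove $\inf F \le \inf G$ by pushing a multicoupling forward under $(x\mapsto x_i, M_N)$, prove $\inf G \le \inf F$ by gluing the optimal pairwise couplings $\pi_i\in\Gamma(\mu,\mu^i)$ over their common marginal, and settle both implications through the equality case of the variance decomposition (which the paper phrases as the inequality being an equality if and only if $y=M_N(x)$ $\eta$-almost surely). Your closing caveat is also on target: the paper's own proof likewise establishes the \emph{if} direction only for an \emph{optimal} multicoupling (``Finally, if $\pi$ is optimal\dots''), while its \emph{only if} construction yields a $\pi$ that is automatically optimal, so the lemma's literal wording should indeed be read with that qualification, exactly as your two-point counterexample shows.
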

\begin{proof}
Let $\pi$ be an arbitrary multicoupling of $(\mu^1,\dots,\mu^N)$ and set $\mu=M_N\#\pi$.  Then $(x\mapsto x_i,M_N)\#\pi$ is a coupling of $\mu^i$ and $\mu$, and therefore
\[
\ownint {\mathcal H^N}{}{\|x_i - M_N(x)\|^2}{\pi(x)}
\ge
W^2(\mu,\mu_i).
\]
Summation over $i$ gives $F(\mu)\le G(\pi)$ and so $\inf F\le \inf G$.

For the other inequality, let $\mu\in\mathcal W$ be arbitrary.  For each $i$ let $\pi^i$ be an optimal coupling between $\mu$ and $\mu^i$.  Invoking the gluing lemma (Ambrosio \& Gigli \cite[Lemma 2.1]{ambrosio2013user}), we may glue all $\pi^i$'s using their common marginal $\mu$.  This procedure constructs a measure $\eta$ on $\mathcal H^{N+1}$ with marginals $\mu_1,\dots,\mu_N,\mu$ and its relevant projection $\pi$ is then a multicoupling of $\mu_1,\dots,\mu_N$.

Since $\mathcal H$ is a Hilbert space, the minimiser of $y\mapsto \sum \|x_i - y\|^2$ is $y=M_N(x)$.  Thus
\[
F(\mu)
=
\frac 1{2N} \ownint {\mathcal H^{N+1}}{}{\sum_{i=1}^N \|x_i - y\|^2} {\eta(x,y)}
\ge
\frac 1{2N} \ownint {\mathcal H^{N+1}}{}{\sum_{i=1}^N \|x_i - M_N(x)\|^2} {\eta(x,y)}
=
G(\pi).
\]
In particular, $\inf F\ge \inf G$ and combining this with the established converse inequality we see that $\inf F=\inf G$.  Observe also that the last displayed inequality holds as equality if and only if $y=M_N(x)$ $\eta$-almost surely, in which case $\mu=M_N\#\pi$.  Therefore if $\mu$ does not equal $M_N\#\pi$, then $F(\mu)>G(\pi)\ge F(M_N\#\pi)$, and $\mu$ cannot be optimal.  Finally, if $\pi$ is optimal, then
\[
F(M_N\#\pi)
\le
G(\pi)
=\inf  G
=\inf F
\]
establishing optimality of $\mu=M_N\#\pi$ and completing the proof.
\end{proof}
\begin{corollary}[Fr\'echet means and moments]\label{cor:FrechetMom}
Any finite collection of measures $\mu^1,\dots,\mu^N\in \mathcal W(\mathcal H)$ admits a Fr\'echet mean $\mu$, for all $p\ge1$
\[
\ownint {\mathcal H}{}{\|x\|^p}{\mu(x)}
\le
\frac 1N  \sum_{i=1}^N
\ownint {\mathcal H}{}{\|x\|^p}{\mu^i(x)},
\]
and when $p>1$ equality holds if and only if $\mu^1=\dots=\mu^N$.  In particular, any collection $\Sigma^1,\dots,\Sigma^N$ of covariance operators admits a Fr\'echet mean $\overline\Sigma$ with respect to the Procrustes distance $\Pi$, and $\tr\overline\Sigma\le N^{-1}\sum_{i=1}^N\tr\Sigma^i$.
\end{corollary}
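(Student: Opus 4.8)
The plan is to derive everything from Lemma~\ref{lem:FrechetMulti}, which reduces the construction of a Fr\'echet mean to the construction of an \emph{optimal multicoupling}. First I would establish that an optimal multicoupling $\pi$ of $(\mu^1,\dots,\mu^N)$ exists. The collection of all multicouplings is tight, since each prescribed marginal $\mu^i$ is a single Borel measure on the Polish space $\mathcal H$ and hence tight; by Prokhorov's theorem this collection is relatively compact for narrow convergence, and it is clearly narrowly closed. The functional $G$ is narrowly lower semicontinuous (its integrand is nonnegative and continuous) and finite at, e.g., the product coupling (all $\mu^i$ have finite second moment, so every multicoupling lies in $\mathcal W_2(\mathcal H^N)$), so the infimum defining $G$ is attained at some $\pi$. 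By Lemma~\ref{lem:FrechetMulti}, $\mu:=M_N\#\pi$ is then a Fr\'echet mean.

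For the moment bound I would use the representation $\mu=M_N\#\pi$ together with the change of variables
\[
\ownint{\mathcal H}{}{\|x\|^p}{\mu(x)}
=\ownint{\mathcal H^N}{}{\|N^{-1}(x_1+\dots+x_N)\|^p}{\pi(x)}.
\]
Applying the triangle inequality $\|N^{-1}\sum x_i\|\le N^{-1}\sum\|x_i\|$ followed by convexity of $t\mapsto t^p$ on $[0,\infty)$ (Jensen), one obtains the pointwise bound $\|N^{-1}\sum x_i\|^p\le N^{-1}\sum\|x_i\|^p$; integrating against $\pi$ and using that its $i$-th marginal is $\mu^i$ yields the claimed inequality $\int\|x\|^p\,d\mu\le N^{-1}\sum_i\int\|x\|^p\,d\mu^i$.

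For the equality statement with $p>1$ I would track the two inequalities above. Since $t\mapsto t^p$ is \emph{strictly} convex for $p>1$, equality in the integrated bound forces, $\pi$-almost surely, both that $\|x_i\|$ is independent of $i$ and that the $x_i$ are nonnegatively proportional (equality in the triangle inequality in a Hilbert space); together these give $x_1=\dots=x_N$ $\pi$-a.s., whence the marginals coincide, $\mu^1=\dots=\mu^N$. Conversely, if all $\mu^i$ equal a common $\mu^*$, then $F(\mu)=\tfrac12 W^2(\mu,\mu^*)$ is uniquely minimised at $\mu^*$, and the moment identity is then immediate.

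Finally, to specialise to covariances I would take $\mu^i=N(0,\Sigma^i)$. As recalled above, the Fr\'echet mean of Gaussians is Gaussian; its mean $\int x\,d\mu(x)=\int M_N(x)\,d\pi(x)=N^{-1}\sum_i\int x\,d\mu^i(x)=0$, so $\mu=N(0,\overline\Sigma)$ for some covariance $\overline\Sigma$, which by Proposition~\ref{prop:ProcWass} is a Fr\'echet mean for $\Pi$. Taking $p=2$ and using $\int\|x\|^2\,dN(0,\Sigma)=\tr\Sigma$ turns the moment bound into $\tr\overline\Sigma\le N^{-1}\sum_i\tr\Sigma^i$. The main obstacle is the first step---securing the optimal multicoupling in the infinite-dimensional setting, where the relevant compactness must be argued via tightness of the marginals rather than local compactness of $\mathcal H$; once that is in hand, the remaining steps are routine applications of Jensen's inequality and its equality case.
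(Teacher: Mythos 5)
Your proposal is correct and takes essentially the same approach as the paper: it represents a Fr\'echet mean as $\mu=M_N\#\pi$ for an optimal multicoupling $\pi$ (Lemma~\ref{lem:FrechetMulti}) and obtains the moment bound by integrating the pointwise convexity inequality for $x\mapsto\|x\|^p$, with strict convexity settling the equality case when $p>1$. Your only additions are the explicit Prokhorov/tightness argument for existence of an optimal multicoupling and the spelled-out Gaussian specialisation via centredness and $\tr\Sigma=\int_{\mathcal H}\|x\|^2\,\mathrm{d}N(0,\Sigma)(x)$, both of which the paper leaves implicit and both of which are sound.
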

\begin{proof}
Let $\pi$ be a multicoupling of $\mu^1,\dots,\mu^N$ such that $\mu=M_N\#\pi$ (Lemma~\ref{lem:FrechetMulti}).  Then
\[
\ownint {\mathcal H}{}{\|x\|^p}{\mu(x)}
=\ownint {\mathcal H^N}{}{\left\|\frac 1N \sum_{i=1}^N x_i\right\|^p}{\pi(x)}
\le
\frac 1N \sum_{i=1}^N \ownint {\mathcal H^N}{}{\|x_i\|^p}{\pi(x)}
=\frac 1N \sum_{i=1}^N \ownint {\mathcal H}{}{\|x\|^p}{\mu^i(x)}.
\]
The statement about equality follows from strict convexity of $x\mapsto\|x\|^p$ if $p>1$.
\end{proof}
We next turn to uniqueness of Fr\'echet means.  The proof follows from strict convexity of the Fr\'echet functional $F$ that manifests as soon as enough non-degeneracy is present.  We remark first that if $\mu$ is a Gaussian measure with covariance $\Sigma$, then $\mu$ is regular if and only if $\Sigma$ is injective, and when this holds, for any $\nu\in \W(\mathcal H)$ (Gaussian or not) the optimal map $\topt\mu\nu$ exists.

\begin{proposition}
\label{prop:uniqueFrechet}
%Let $\mathscr B$ be a random covariance operator that is injective with positive probability.  Then $\mathscr A\mapsto F(\mathscr A)$ is strictly convex, and so admits at most one minimiser.
Let $\mu_1,\dots,\mu_N\in \W(\X)$ and assume that $\mu_1$ is regular.  Then the Fr\'echet functional is strictly convex, and the Fr\'echet mean of $\mu_1,\dots,\mu_N$ is unique.  In particular, the Fr\'echet mean of a collection of covariance operators is unique if at least one of the operators is injective.
\end{proposition}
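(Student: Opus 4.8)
The plan is to establish the headline claim directly: I will show that $F$ is strictly convex along the \emph{linear} (mixture) interpolation of measures, and then combine this with convexity of $\W(\X)$ under mixtures and existence from Corollary~\ref{cor:FrechetMom}. Concretely, suppose $\rho^0,\rho^1\in\W(\X)$ are both Fr\'echet means with $\rho^0\ne\rho^1$, and set $\rho_t=(1-t)\rho^0+t\rho^1$. Since $\rho_t\in\W(\X)$ it is an admissible competitor, so it suffices to produce a $t\in(0,1)$ with $F(\rho_t)<(1-t)F(\rho^0)+tF(\rho^1)=\min F$, which is absurd. Thus $\rho^0=\rho^1$, and with existence this gives uniqueness.

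The convexity half is the elementary mixture bound for $W^2$: for each $i$ pick optimal couplings $\pi_i^0\in\Gamma(\mu_i,\rho^0)$ and $\pi_i^1\in\Gamma(\mu_i,\rho^1)$, note that $(1-t)\pi_i^0+t\pi_i^1\in\Gamma(\mu_i,\rho_t)$, and obtain
\[
W^2(\mu_i,\rho_t)\le\int\|x-y\|^2\,\mathrm d\big[(1-t)\pi_i^0+t\pi_i^1\big]=(1-t)W^2(\mu_i,\rho^0)+tW^2(\mu_i,\rho^1).
\]
Averaging over $i$ yields $F(\rho_t)\le(1-t)F(\rho^0)+tF(\rho^1)$. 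It therefore remains to upgrade this to a \emph{strict} inequality for the single index $i=1$, and this strictness step is the main obstacle: because $\W(\X)$ is positively curved, $W^2(\cdot,\mu_i)$ is \emph{not} geodesically convex, so the strictness cannot come from any geodesic second-order term and must instead be extracted from the rigidity that regularity of $\mu_1$ imposes on optimal couplings.

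Here regularity does the work. Since $\mu_1$ is regular, the optimal maps $T^j=\topt{\mu_1}{\rho^j}$ exist (Proposition~\ref{prop:existenceMaps} and the remark preceding the statement), the \emph{unique} optimal coupling of $\mu_1$ and $\rho^j$ is the graph $(\mathscr I,T^j)\#\mu_1$, and likewise the optimal coupling of $\mu_1$ and $\rho_t$ is unique and deterministic over the $\mu_1$-marginal. If the $i=1$ bound held with equality, then the mixture $(1-t)(\mathscr I,T^0)\#\mu_1+t(\mathscr I,T^1)\#\mu_1$ would itself be optimal for $(\mu_1,\rho_t)$, hence deterministic over $\mu_1$; but its conditional law given the $\mu_1$-coordinate $x$ is $(1-t)\delta_{T^0(x)}+t\delta_{T^1(x)}$, which for $t\in(0,1)$ is a Dirac mass only where $T^0(x)=T^1(x)$. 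This would force $T^0=T^1$ $\mu_1$-almost everywhere, i.e. $\rho^0=T^0\#\mu_1=T^1\#\mu_1=\rho^1$, contradicting $\rho^0\ne\rho^1$. Hence the $i=1$ term is strictly convex along mixtures, $F$ is strictly convex, and the Fr\'echet mean is unique.

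For the covariance statement, take $\mu_i=N(0,\Sigma_i)$: a centred Gaussian is regular exactly when its covariance is injective, so if some $\Sigma_i$ (relabel it $\Sigma_1$) is injective then $\mu_1$ is regular. Since $\Pi(\Sigma,\Sigma_i)=W(N(0,\Sigma),N(0,\Sigma_i))$ (Proposition~\ref{prop:ProcWass}) and the Wasserstein barycentre of Gaussians is itself Gaussian (as recalled at the start of this section), uniqueness of the minimizing measure transfers to uniqueness of the minimizing covariance operator.
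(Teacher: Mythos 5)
Your proposal is correct and takes essentially the same route as the paper's proof: weak convexity of $W^2(\cdot,\mu_i)$ along mixtures via interpolation of couplings, upgraded to strictness at $i=1$ because regularity of $\mu_1$ forces the unique optimal coupling with any target to be deterministic over $\mu_1$, while a nontrivial mixture of two distinct graph couplings is not induced by a map. Your explicit conditional-law computation merely spells out the paper's one-line assertion that the mixture coupling cannot be induced from a map, and your final transfer to covariance operators matches the paper's remark that Gaussian barycentres are Gaussian.
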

Uniqueness in fact holds at the population level as well:  the condition is that the random covariance operator be injective with positive probability.  On $\mathbb R^d$ this was observed by \citet{bigot2012characterization} in a parametric setting, and extended to the nonparametric setting by \citet{zemel2017fr};  the analytical idea dates back to \citet{alvarez2011uniqueness}.

\begin{proof}
We first establish weak convexity of the squared Wasserstein distance. Let $\nu_1,\nu_2,\mu\in\W(\X)$ and let $\pi_i$ be an optimal coupling of $\nu_i$ and $\mu$.  For any $t\in(0,1)$ the linear interpolant $t\pi_1+(1-t)\pi_2$ is a coupling of $t\nu_1 + (1-t)\nu_2$ and $\mu$.  This yields the weak convexity
\begin{equation}\label{eq:FrechetFunctionalConvex}
W^2(t\nu_1+(1-t)\nu_2,\mu)
\le \ownint {\X^2}{}{\|x - y\|^2}{[t\pi_1+(1-t)\pi_2](x,y)}
=tW^2(\nu_1,\mu)
+(1-t)W^2(\nu_2,\mu).
\end{equation}
Now if $\mu$ is regular, then both couplings $\pi_i$ are induced by maps $T_i=\topt\mu{\nu_i}$.  If $\nu_1\ne\nu_2$, then $t\pi_1+(1-t)\pi_2$ is not induced from a map, and consequently cannot be the optimal coupling of $t\nu_1+(1-t)\nu_2$ and $\mu$.  Thus the inequality above is strict and $W^2(\cdot,\mu)$ is strictly convex.  The proposition now follows upon noticing that the Fr\'echet functional is a sum of $N$ squared Wasserstein distances that are all convex, one of them strictly.
\end{proof}
For statistical purposes existence and uniqueness are not sufficient, and one needs to find a constructive way to evaluate the Fr\'echet mean of a given collection of covariance operators. \citet{pigoli2014distances} propose using the classical generalised Procrustes algorithm.  The Wasserstein formalism gives rise to another algorithm that can be interpreted as steepest descent in Wasserstein space, while still admitting a Procrustean interpretation (solving successive pairwise \emph{transport} rather than \emph{alignment} problems).  We will elaborate on these algorithms in Section~\ref{sec:algorithms}.  In practice, implementing these algorithms will require finite-dimensional versions of the operators.  This raises the question of stability of the Fr\'echet mean under projections, which is the topic of the next section.

%\begin{itemize}
%\item Most basic statistical problem on a manifold: existence, uniqueness, and determination of Frechet mean.

%\item One of the main focal points of Aston et al., who suggested the use of the classical Procrustes algorithm for the determination of the Frechet mean, but did not establish existence and uniqueness. We further discuss the issue of determination in Section \ref{sec:algorithms}, but first establish existence and uniqueness.
%\end{itemize}

\section{Stability of Fr\'echet Means}\label{sec:stability}

%\begin{itemize}
%\item As explained earlier, in practice, any computation of a Frechet mean will need to be carried out in terms of finite dimensional data.

%\item Therefore, further to knowing that the metric itself is stable to projection (whether in a pointwise or uniform sense) we need to consider whether the Frechet mean will be similarly stable.

%\item This is a much more subtle question because it involves issues of... blah blah

%\item State and prove theorem here.
%\end{itemize}

When analysing functional data, one seldom has access to the genuinely infinite-dimensional objects (see, e.g., \citet{hsing2015theoretical}, \citet{yao2005functional,yao2005functional2}, \citet{descary2016functional}).  In practice, the observed curves are discretised at some level and the data at hand represent finite-dimensional approximations, potentially featuring some additional level of smoothing.   It is therefore important to establish some amount of continuity of any inferential procedure with respect to progressively finer such approximations. In the present context, it is important to verify that the Fr\'echet mean remains stable as the discretisation becomes finer and finer, and as smoothing parameters decay.  Stability of the Procrustes distance itself (as, e.g., in Section~\ref{sec:projections}), does not immediately yield stability of the Fr\'echet means -- the latter amounts to argmin theorems (\citet{van1996weak}), whose validity requires further assumptions.
Our understanding of the topology of the Wasserstein space, however, allows to deduce this stability of the Fr\'echet means.  We note that the question of convergence of Fr\'echet means on locally compact spaces was studied by \citet{gouic2016existence}, but their results cannot be applied in our setup, since $\mathcal H$ is not locally compact.
\begin{theorem}
[Fr\'echet means and projections]
\label{thm:convNfixed}
Let $\Sigma^1,\dots,\Sigma^N$ be covariance operators with $\Sigma^1$ injective, and let $\{\Sigma^i_k:i\leq N,\,k\geq 1\}$ be sequences such that $\Sigma^i_k\stackrel{k\rightarrow\infty}{\longrightarrow}\Sigma^i$ in trace norm (equivalently, in Procrustes distance).  Then (any choice of) the Fr\'echet mean of $\Sigma^1_k,\dots,\Sigma^N_k$ converges in trace norm to that of $\Sigma^1,\dots,\Sigma^N$.
\end{theorem}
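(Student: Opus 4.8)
The plan is to run the standard argmin (consistency) argument, adapted to the Procrustes/Wasserstein geometry, whose two ingredients are precompactness of the sequence of approximate Fr\'echet means and continuity of the Fr\'echet functionals together with uniqueness of the limiting minimiser. Throughout write $F_k(\Sigma)=\frac1{2N}\sum_{i=1}^N\Pi^2(\Sigma,\Sigma^i_k)$ and $F(\Sigma)=\frac1{2N}\sum_{i=1}^N\Pi^2(\Sigma,\Sigma^i)$, let $\overline\Sigma_k$ be any Fr\'echet mean of $\Sigma^1_k,\dots,\Sigma^N_k$ (which exists by Corollary~\ref{cor:FrechetMom}), and let $\overline\Sigma$ be the Fr\'echet mean of $\Sigma^1,\dots,\Sigma^N$, which is \emph{unique} because $\Sigma^1$ is injective (Proposition~\ref{prop:uniqueFrechet}). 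The goal is to show $\overline\Sigma_k\to\overline\Sigma$ in $\Pi$, equivalently in trace norm by Proposition~\ref{prop:convEquiv}.

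First I would establish precompactness of $\{\overline\Sigma_k\}_k$ in $\Pi$. The naive route through the trace bound $\tr\overline\Sigma_k\le N^{-1}\sum_i\tr\Sigma^i_k$ of Corollary~\ref{cor:FrechetMom} fails, since boundedness of traces does not yield precompactness; this is precisely the counterexample recorded after Proposition~\ref{prop:convEquiv}. Instead I would argue by uniform tightness. Each convergent sequence $\{\Sigma^i_k\}_k\cup\{\Sigma^i\}$ is compact in $\Pi$, so the finite union $\mathcal B=\bigcup_{i=1}^N(\{\Sigma^i_k\}_k\cup\{\Sigma^i\})$ is compact, and by Proposition~\ref{prop:convEquiv} the family $\{N(0,\Sigma):\Sigma\in\mathcal B\}$ is uniformly tight. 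Given $\varepsilon>0$, choose a compact $K\subseteq\X$ with $N(0,\Sigma)(K)\ge1-\varepsilon/N$ for all $\Sigma\in\mathcal B$; replacing $K$ by its closed convex hull keeps it compact by Mazur's theorem and only increases these masses, so $K$ may be taken convex. Representing $\overline\Sigma_k$ through an optimal multicoupling $\pi_k$ as $N(0,\overline\Sigma_k)=M_N\#\pi_k$ (Lemma~\ref{lem:FrechetMulti}), a union bound over the $N$ marginals gives $\pi_k(K^N)\ge1-\varepsilon$, while convexity of $K$ forces $M_N(K^N)\subseteq K$, whence $N(0,\overline\Sigma_k)(K)=\pi_k(M_N^{-1}(K))\ge1-\varepsilon$ for every $k$. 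Thus $\{N(0,\overline\Sigma_k)\}_k$ is uniformly tight, and Proposition~\ref{prop:convEquiv} returns the desired precompactness (the limit of covariances being again a positive trace-class operator, hence a covariance).

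With precompactness secured, I would pass to subsequences. Let $\overline\Sigma_{k_j}\to\Sigma^\star$ in $\Pi$ along some subsequence. Since $\Pi$ is a metric, the triangle inequality gives $|\Pi(\overline\Sigma_{k_j},\Sigma^i_{k_j})-\Pi(\Sigma^\star,\Sigma^i)|\le\Pi(\overline\Sigma_{k_j},\Sigma^\star)+\Pi(\Sigma^i_{k_j},\Sigma^i)\to0$, so $F_{k_j}(\overline\Sigma_{k_j})\to F(\Sigma^\star)$, and the same reasoning yields $F_{k_j}(\overline\Sigma)\to F(\overline\Sigma)$. Minimality of $\overline\Sigma_{k_j}$ for $F_{k_j}$ gives $F_{k_j}(\overline\Sigma_{k_j})\le F_{k_j}(\overline\Sigma)$, and letting $j\to\infty$ produces $F(\Sigma^\star)\le F(\overline\Sigma)=\min F$. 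Hence $\Sigma^\star$ is itself a Fr\'echet mean of $\Sigma^1,\dots,\Sigma^N$, and uniqueness (Proposition~\ref{prop:uniqueFrechet}) forces $\Sigma^\star=\overline\Sigma$. As every subsequence of $\{\overline\Sigma_k\}$ admits a further subsequence converging to the same limit $\overline\Sigma$, the whole sequence converges to $\overline\Sigma$ in $\Pi$, i.e.\ in trace norm.

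The main obstacle is the precompactness step. As the counterexample after Proposition~\ref{prop:convEquiv} shows, the uniform trace bound does not by itself confine the means to a compact set, so one genuinely needs the tightness characterisation of $\Pi$-compactness together with the convexity (Mazur) trick to propagate tightness of the marginals through the barycentre map $M_N$. Once this is in place, the remainder is the soft metric-space argmin principle, relying only on continuity of $\Pi$ and uniqueness of the limiting minimiser, and presents no further difficulty.
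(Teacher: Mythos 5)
Your proof is correct and follows essentially the same route as the paper's: the precompactness of the means via the multicoupling representation $N(0,\overline\Sigma_k)=M_N\#\pi_k$ (Lemma~\ref{lem:FrechetMulti}), the convex compact set trick, and the equivalence of tightness with $\Pi$-precompactness (Proposition~\ref{prop:convEquiv}) is exactly the paper's Step~1, and the conclusion via uniqueness (Proposition~\ref{prop:uniqueFrechet}) and the subsequence principle is its Step~4. The only difference is cosmetic: where the paper derives a second-moment bound (its Step~2) in order to establish uniform Lipschitz continuity of the functionals $F_k$ on a Wasserstein ball (its Step~3), you obtain $F_{k_j}(\overline\Sigma_{k_j})\to F(\Sigma^\star)$ and $F_{k_j}(\overline\Sigma)\to F(\overline\Sigma)$ directly from the reverse triangle inequality along the convergent subsequence --- a mild streamlining of the soft step, valid here because the Gaussian structure already upgrades distributional limits to $\Pi$-limits, rather than a genuinely different argument.
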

Of course, the result can be phrased in terms of Gaussian measures $\{N(0,\Sigma^i):i\leq N\}$ and sequences $\{N(0,\Sigma^i_k):i\leq N,\,k\geq 1\}$, and their Wasserstein barycentres. But, in fact, the result holds far more generally in the space $\mathcal{W}(\X)$.  Let $\mu^i_k\to \mu^i$ in Wasserstein distance.  If for some $\epsilon>0$ and all $i=1,\dots,N$
\[
\sup_{k}\ownint{\mathcal H}{}{\|x\|^{2+\epsilon}}{\mu^i_k(x)}
<\infty
\]
then the statement of the theorem holds.  The proof follows by modifying the exponent in the definition of $R^i$ in step 2 to $2+\epsilon$ and noticing that the resulting moment bound for $\overline\mu_k$ supplemented by their tightness yields convergence in Wasserstein distance;  see the discussion following \cite[Equation~(5.3)]{zemel2017fr} (and replace 3 by $2+\epsilon$).

\begin{proof}[Proof of Theorem~\ref{thm:convNfixed}]
Denote the corresponding measures Gaussian measures $\{N(0,\Sigma^i):i\leq N\}$ and sequences $\{N(0,\Sigma^i_k):i\leq N,\,k\geq 1\}$ by $\{\mu^i:i\leq N\}$ and $\{\mu^i_k:i\leq N,\,k\geq 1\}$ and let $\overline\mu_k$ denote any Fr\'echet mean of $\mu^1_k,\dots,\mu^N_k$.

\textbf{Step 1:}  tightness of $(\overline\mu_k)$.  The entire collection $\mathcal K=\{\mu^i_k\}$ is tight, since all the sequences converge in distribution (Proposition~\ref{prop:convEquiv}).  For any $\epsilon>0$ there exists a compact $K_\epsilon\subset \mathcal H$ such that $\mu(K_\epsilon)\ge 1-\epsilon/N$ for all $\mu\in\mathcal K$.  Replacing $K_\epsilon$ by its closed convex hull (Lemma~\ref{lem:compactHull}), we may assume it to be convex as well.

Let $\pi_k$ be any multicoupling of $(\mu^1_k,\dots,\mu^N_k)$.  Then the marginal constraints of $\pi_k$ imply that $\pi_k(K_\epsilon^N)\ge 1-\epsilon$.  By Lemma~\ref{lem:FrechetMulti}, $\overline\mu_k$ must take the form $M_N\#\pi_k$ for some multicoupling $\pi_k$.  Convexity of $K_\epsilon$ implies that $M_N^{-1}(K_\epsilon)\supseteq K_\epsilon^N$, and so
\[
\overline\mu_k(K_\epsilon)
=
\pi_k(M_N^{-1}(K_\epsilon))
\ge
\pi_k(K_\epsilon^N)
\ge 1-\epsilon.
\]
With tightness of $(\overline\mu_k)$ established, we may now assume that (up to subsequences) $\overline\mu_k$ converge in distribution to a limit $\overline\mu$.  Since $\overline\mu_k$ are Gaussian, they also converge in Wasserstein distance by Proposition~\ref{prop:convEquiv}.

\textbf{Step 2:}  a moment bound for $\overline\mu_k$.  Let $R^i=\ownint {\mathcal H}{}{\|x\|^2}{\mu^i(x)}$ denote the second moment of $\mu^i$.  Since the second moments can be interpreted as a (squared) Wasserstein distance to the Dirac mass at 0 (or by Theorem~7.12 in \cite{villani2003topics}), the second moment of $\mu^i_k$ converges to $R^i$ and so for $k$ large it is smaller than $R^i+1$.  By Corollary~\ref{cor:FrechetMom}, for $k$ large
\[
\ownint {\mathcal H}{}{\|x\|^2}{\overline\mu_k(x)}
\le
\frac 1N \sum_{i=1}^N R^i + 1
\le
\max(R^1,\dots,R^N) + 1
:= R + 1.
\]
%Since $(\gamma_k)$ is a sequence of Gaussian measures, it is precompact in $\W$ by Corollary~\ref{cor:Gauss4mom}.
%\begin{rem}
%This last argument works in much more general circumstances;  if we assume that for some function $f:\R_+\to\R_+$ that goes to %infinity
%\[
%\sup_{\mu\in\mathcal K}
%\ownint {}{} {\|x\|^2 f(\|x\|)}{\mu(x)}
%<
%\infty
%\]
%(for example if the $2+\epsilon$-moments are uniformly bounded), then $\mathcal K$ will be precompact in $\W$.  I think that such an $f$ must exist but could not prove it.
%\end{rem}

\textbf{Step 3:}  the limit $\overline\mu$ is a Fr\'echet mean of $(\mu^i)$.  By the moment bound above, the Fr\'echet means $\overline\mu_k$ can be found (for $k$ large) in the Wasserstein ball
\[
B
=
\{\mu\in \W:
W^2(\mu,\delta_0)
\le R+1
\},
\]
with $\delta_0$ a Dirac measure at the origin.  If $\mu,\nu\in B$ then, since $\mu^i_k\in B$ for $k$ large,
\[
|
F_k(\mu)
-
F_k(\nu)
|
\le
\frac 1{2N}
\sum_{i=1}^N
[W(\mu,\mu^i_k) + W(\nu,\mu^i_k)]
W(\mu,\nu)
\le 2 \sqrt {R+1}\ 
W(\mu,\nu).
\]
In other words, all the $F_k$'s are uniformly Lipschitz on $B$.  Suppose now that $\overline\mu_k\to\overline\mu$ in $\W$.  Let $\mu\in B$, $\epsilon>0$ and $k_0$ such that $W(\overline\mu_k,\overline\mu)<\epsilon/(2\sqrt{R+1})$ for all $k\ge k_0$.  Since $F_k\to F$ pointwise we may assume that $|F(\mu) - F_k(\mu)| < \epsilon$ when $k\ge k_0$ and the same holds for $\mu=\overline\mu$.  Then for all $k\ge k_0$
\[
\epsilon + F(\mu)
\ge F_k(\mu)
\ge F_k(\overline\mu_k)
\ge F_k(\overline\mu) - \epsilon
\ge F(\overline\mu) - 2\epsilon.
\]
Since $\epsilon$ is arbitrary we see that $\overline\mu$ minimises $F$ over $B$ and hence over the entire Wasserstein space $\W(\X)$.

\textbf{Step 4:}  conclusion.  We have shown or assumed that
\begin{itemize}
\item the sequence $(\overline\mu_k)$ is precompact in $\W(\X)$;
\item each of its limits is a minimiser of $F$;
\item there is only one minimiser of $F$.
\end{itemize}
The combination of these three facts implies that $\overline\mu_k$ must converge to the minimiser of $F$.
\end{proof}

\section{Computation of Fr\'echet Means: Procrustes Algorithms and Gradient Descent}
\label{sec:algorithms}

%\begin{itemize}
%\item Given discretely observed functional data (projections), Aston et al recommended determining the Frechet mean. They reported good performance of the algorithm.

%\item The Wasserstein interpretation allows us to immediately deduce the convergence of this algorithm to the unique Frechet mean. This follows from the work of Alvarez et al, and Zemel and Panaretos. We review this work here. (mention any noteworthy insights)

%\item A further interpretation is possible, using the work of Zemel and Panaretos: the Procrustes algorithm is quite simply gradient descent with respect to the Frechet functional, in the Wasserstein geometry.
%\end{itemize}

 Fr\'echet means rarely admit closed-form expressions, and the Procrustes space of covariances on $\mathcal{H}$ (equivalently, the Wasserstein space of Gaussian measures on $\mathcal{H}$) is no exception (but see Section~\ref{sec:fixedpoint} for the issue of characterisation).  In order to compute the Fr\'echet mean in practice, one needs to resort to numerical schemes at some level, and such schemes would need to be applied to finite-dimensional versions of the covariances, resulting from the necessarily discrete nature of observation and/or smoothing.

Let $\Sigma_1,\dots,\Sigma_N$ be covariance operators, of which one seeks to find a Fr\'echet mean $\overline \Sigma$.  \citet{pigoli2014distances} suggested an iterative procedure, motivated by generalised Procrustes analysis (\citet{gower1975generalized}; \citet{dryden1998statistical}), for finding $\mathscr L=\overline \Sigma^{1/2}$, that we summarise as follows.  The initial point $\mathscr L^0$ is the average of $L_i^0=L_i=\Sigma_i^{1/2}$.  At step $k$, one computes, for each $i$, the unitary operator $R_i$ that minimises $\hs \mathscr L^{k-1} - \mathscr L_i^{k-1}R_i\hs$, and then sets $\mathscr L_i^k=\mathscr L_i^{k-1}R_i$.  After this, one defines $\mathscr L^k$ as the average of $\{\mathscr L_1^k,\dots,\mathscr L_N^k\}$ and repeats until convergence.  The advantage of this algorithm is that it only involves successively matching pairs of operators (minimising $\hs \mathscr L^{k-1} - \mathscr L_i^{k-1}R_i\hs$), for which there is an explicit solution in terms of the SVD of the product of the operators in question.  \citet{pigoli2014distances} report good empirical performance of this algorithm (and some of its variants) on discretised versions of the operators {if the initial point is chosen as $n^{-1}\sum_{i=1}^n\Sigma_i^{1/2}$}, and conjecture that an infinite-dimensional implementation of their algorithm would also exhibits favourable performance. {It is not clear whether the algorithm converges, though, when the operators $\{\Sigma_1,\dots,\Sigma_N\}$ do not commute}.

The great advantage of the procedure of \citet{pigoli2014distances} is precisely that it only involves successive averaging of solutions of pairwise matching problems until convergence.  The Wasserstein formalism allows one to construct an alternative algorithm, that is also similar in spirit to generalised Procrustes analysis: instead of averaging pairwise SVD matchings, one averages pairwise optimal transport maps. This algorithm was proposed independently and concurrently by \citet{zemel2017fr} and \citet{alvarez2016fixed} in a finite dimensional setting.  Though it can be applied to any finite collection of measures in Wasserstein space, we shall outline it here in the setup of covariance operators only (i.e., for collections of centred Gaussian measures).  Let $\Sigma^0$ be an injective initial point and suppose that the current iterate at step $k$ is $\Sigma^k$.  For each $i$ compute the optimal maps from $\Sigma^k$ to each of the prescribed operators $\Sigma_i$, namely $\topt{\Sigma^k}{\Sigma_i}=(\Sigma^k)^{-1/2}[(\Sigma^k)^{1/2}\Sigma_i(\Sigma^k)^{1/2}]^{1/2}(\Sigma^k)^{-1/2}$.  Define their average $T_k=N^{-1}\sum_{i=1}^N\topt{\Sigma^k}{\Sigma_i}$, a positive (possibly unbounded) operator, and then set the next iterate to $\Sigma^{k+1}=T_k\Sigma^kT_k$. 

In terms of the manifold geometry of covariances under the Procrustes metric (see Section \ref{sec:geometry}), the algorithm starts with an initial guess of the Fr\'echet mean; it then lifts all observations to the tangent space at that initial guess via the log map, and averages linearly on the tangent space; this linear average is then retracted onto the manifold via the exponential map, providing the next guess, and iterates.

In finite dimensions\footnote{i.e. when the operators $\Sigma_1,\dots,\Sigma_N$ are of finite rank, which will always be the case in practice, as explained in the first paragraph of this Section}, the Wasserstein-inspired algorithm is shown \cite{zemel2017fr,alvarez2016fixed} to converge to the unique Fr\'echet mean $\overline\Sigma$ of $\Sigma_1,\dots,\Sigma_N$ provided one of them is injective, and this independently of the initial point.  Moreover,  \citet{alvarez2016fixed} show $\tr\Sigma^k$ to be increasing in $k$, and \citet{zemel2017fr} show that the optimal maps  $\topt{\Sigma^k}{\Sigma_i}$ converge uniformly over compacta to $\topt{\overline\Sigma}{\Sigma_i}$ as $k\to\infty$. In fact, in finite dimensions, \citet{zemel2017fr} demonstrate that this algorithm is classical steepest descent in Wasserstein space (in our setting, it is steepest descent in the space of covariances endowed with the Procrustes metric $\Pi$).  

{Compared to the procedure of \citet{pigoli2014distances}, the Wassestein-inspired algorithm appears to be numerically more stable. For example, we observed through simulations that it is less sensitive to the initial point.}
Finally, it is worth mentioning that when the covariance operators commute, either algorithm converges to the Fr\'echet mean after a single iteration.\footnote{In the Wasserstein-inspired algorithm this requires to start from any positive linear combination of the operators themselves, or positive powers thereof.}

\section{Tangent Space PCA, Optimal Multicoupling, and Amplitude vs Phase Variation}\label{sec:pca}
Once a Fr\'echet mean of a given sample of covariance operators is found, the second order statistical analysis is to understand the variation of the sample around this mean.  The optimal value of the Fr\'echet functional gives a coarse measure of variance (as a sum of squared distances of the observation from their mean), but it is desirable to find a parsimonious representation for the main sources/paths of variation in the sample, analogous to Principal Component Analysis (PCA) in Euclidean spaces \citep{jolliffe2002principal} and \emph{functional} versions thereof in Hilbert spaces \citep{panaretos2013cramer}.

One way of carrying out PCA in non-Euclidean spaces is by working on the tangent space (\citet{huckemann2010intrinsic}, \citet{fletcher2004principal} and \citet{dryden2009non}).  In the setup of covariance operators with the Procrustes distance $\Pi$, this can be done in closed form.  Using the log map at $\overline \Sigma$, one lifts the data $\Sigma_1,\dots,\Sigma_n$ to the points $\log_{\overline \Sigma}(\Sigma_i)=\topt{\overline \Sigma}{\Sigma_i} - \mathscr I=\overline\Sigma^{-1/2}[\overline\Sigma^{1/2}\Sigma_i\overline\Sigma^{1/2}]^{1/2}\overline\Sigma^{-1/2} - \mathscr I$ in the tangent space at the Fr\'echet mean, $\Tan_{\overline \Sigma}$ (see Section~\ref{sec:geometry}).  One can then carry out linear PCA of the data at the level of the tangent space.  The resulting components, orthogonal segments in $\Tan_{\overline \Sigma}$, can then be retracted to the space of covariance operators by means of the exponential map $\exp_{\overline \Sigma}$.  These would give principal geodesics that explain variation in the data;  retracting linear combinations of the principal components would result in principal submanifolds.

Since the tangent space is a linear approximation to the manifold, the success of tangent space PCA in explaining the variability of the sample around its mean depends on the quality of the approximation.  In finite dimensions, the typical difficulty comes from the cut locus of the manifold;  the log map is not defined on the entire manifold, and one often needs to assume that the spread of the observations around the mean is not too large.  In the Wasserstein space, this is actually not a problem, since the exponential map is surjective under sufficient injectivity (see Section~\ref{sec:geometry}).  The difficulty here is of a rather different nature, and amounts precisely to verification that required injectivity takes place.  The issue is that the log map $\log_{\overline \Sigma}(\Sigma_i)$ at $\overline \Sigma$ is well-defined \emph{if and only if} $\mathrm{ker}(\overline\Sigma)\subseteq \mathrm{ker}(\Sigma_i)$ (Proposition~\ref{prop:existenceMaps}, and discussion thereafter; equivalently, one requires $\overline{\mathrm{range}(\overline\Sigma)}\supseteq \overline{\mathrm{range}(\Sigma_i)}$).  In finite dimensions, we know that if one $\Sigma_i$ is injective (nonsingular), then so is the Fr\'echet mean, so the log map is well defined.  We conjecture that the same result holds in infinite dimensions, and leave this question for future work (see Section~\ref{sec:future}).

It is important to remark that in practice, the tangent space PCA will only be employed at a discretised level (and thus in finite dimensions), where it is indeed feasible and guaranteed to make sense.  The question is whether the procedure remains stable as the dimensionality of the discretisation grows to infinity.  The stability of the Wasserstein distance (Section~\ref{sec:projections}) and the Fr\'echet mean (Theorem~\ref{thm:convNfixed}) suggests that this should be so, but a rigorous proof amounts to establishing injectivity as in the preceding paragraph.

Tangent space PCA pertains to the collection of observations $\Sigma_1,\dots,\Sigma_n$ as a whole, and is consequently intimately related to multicoupling of the corresponding measures, admitting a further elegant interpretation.  Recall from Section~\ref{sec:existUnique} that the problem of optimal multicouplings consists of minimising the functional
\[
G(\pi)
=\frac 1 {2n^2} \ownint {\mathcal H^n}{}{\sum_{i<j} \|x_i - x_j\|^2}{\pi(x_1,\dots,x_n)}
%=\ownint {\mathcal H^n}{}{\frac 1{2n} \sum_{i=1}^n \|x_i - \overline x\|^2}{\pi(x)}.
\]
over all Borel measures $\pi$ on $\X^n$ having $\mu_1,\dots,\mu_n$ as marginals.  In other words, we seek to multicouple (the centred Gaussian measures corresponding to) $\Sigma_1,\dots,\Sigma_n$ as closely as possible, that is, in such a way that the sum of pairwise squared distances between the covariances is minimal.  The probabilistic interpretation is that one is given random variables $X_i\sim \mu_i$ and seeks to construct a random vector $(Y_1,\dots,Y_n)$ on $\X^n$ such that $Y_i\stackrel{d}{=}X_i$ marginally, and such that
\[
\mathbb E\sum_{i<j} \|Y_i - Y_j\|
\qquad
\textrm{is minimal}.
\]
Intuitively, one wishes to construct a vector on $\mathcal{H}^n$, whose coordinates are maximally correlated, subject to having prescribed marginal distibutions.  In Lemma~\ref{lem:FrechetMulti} we have seen that an optimal multicoupling yields the Fr\'echet mean.  However, as observed in \citet{zemel2017fr}, the proof actually allows to go in the other direction, and \emph{deduce an optimal multicoupling from the Fr\'echet mean}.   In the probabilistic terminology, we can write this down formally as follows:
%\begin{lemma}%\label{lem:frechetgivesmulti}
%Suppose that $\mu_1,\dots,\mu_n\in \W(\X)$ have a regular Fr\'echet mean $\overline\mu$.  Let $Z\sim \mu$ and define a random vector on $\X^n$ by
%\[
%(Y_1,\dots,Y_n)
%,\qquad
%Y_i = \topt{\overline\mu}{\mu_i}(Z)
%%=\overline\Sigma^{-1/2}[\overline\Sigma^{1/2}\Sigma_i\overline\Sigma^{1/2}]^{1/2}\overline\Sigma^{-1/2}Z
%,\qquad i=1,\dots,n.
%\]
%Then, the joint law of $(Y_1,\dots,Y_n)$ is an optimal multicoupling of $\mu_1,\dots,\mu_n$.
%\end{lemma}
\begin{lemma}\label{lem:frechetgivesmulti}
Let $\Sigma_1,\dots,\Sigma_n$ with injective Fr\'echet mean $\overline \Sigma$.  Let $Z\sim N(0,\overline\Sigma)$ and define a random Gaussian vector on $\X^n$ by
\[
(Y_1,\dots,Y_n)
,\qquad
Y_i = \topt{\overline\Sigma}{\Sigma_i}(Z)
=\overline\Sigma^{-1/2}[\overline\Sigma^{1/2}\Sigma_i\overline\Sigma^{1/2}]^{1/2}\overline\Sigma^{-1/2}Z
,\qquad i=1,\dots,n.
\]
Then, the joint law of $(Y_1,\dots,Y_n)$ is an optimal multicoupling of $\Sigma_1,\dots,\Sigma_n$.
\end{lemma}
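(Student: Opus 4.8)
The plan is to recognise the joint law of $(Y_1,\dots,Y_n)$ as precisely the multicoupling that the proof of Lemma~\ref{lem:FrechetMulti} produces when that argument is run ``in reverse'', starting from the Fr\'echet mean itself, and then to read off optimality directly from the chain of inequalities established there. Write $\mu_i\equiv N(0,\Sigma_i)$ and $\overline\mu\equiv N(0,\overline\Sigma)$, set $T_i=\topt{\overline\Sigma}{\Sigma_i}$, and let $\pi=(T_1,\dots,T_n)\#\overline\mu$, which is exactly the joint law of $(Y_1,\dots,Y_n)$ when $Z\sim\overline\mu$.

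First I would verify that $\pi$ is a bona fide multicoupling. Since $\overline\Sigma$ is injective, $\overline\mu$ is regular, so $\mathrm{ker}(\overline\Sigma)=\{0\}\subseteq\mathrm{ker}(\Sigma_i)$ and Proposition~\ref{prop:existenceMaps} guarantees that each optimal map $T_i$ is well-defined $\overline\mu$-almost everywhere and pushes $\overline\mu$ forward to $\mu_i$. Hence the $i$-th marginal of $\pi$ is $\mu_i$, as required of a multicoupling.

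For optimality I would invoke the second half of the proof of Lemma~\ref{lem:FrechetMulti} with the choice $\mu=\overline\mu$. Because $\overline\mu$ is regular, the optimal coupling of $\overline\mu$ with each $\mu_i$ is unique and deterministic, namely $(\mathscr I,T_i)\#\overline\mu$; gluing these along their common marginal $\overline\mu$ therefore produces, unambiguously, the measure $\eta$ equal to the law of $(T_1(Z),\dots,T_n(Z),Z)=(Y_1,\dots,Y_n,Z)$, whose projection onto the first $n$ coordinates is exactly $\pi$. That proof's inequality chain then yields $F(\overline\mu)\ge G(\pi)$. On the other hand $G(\pi)\ge\inf G$ trivially, $\inf G=\inf F$ was established in the proof of Lemma~\ref{lem:FrechetMulti}, and $F(\overline\mu)=\inf F$ because $\overline\mu$ is a Fr\'echet mean. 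Combining,
\[
\inf F=F(\overline\mu)\ge G(\pi)\ge\inf G=\inf F,
\]
so every inequality is an equality and $G(\pi)=\inf G$; that is, $\pi$ is an optimal multicoupling.

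The only genuine subtlety, and the step I would treat most carefully, is the identification of the glued measure $\eta$ with the law of $(Y_1,\dots,Y_n,Z)$: the gluing lemma merely returns \emph{some} measure with the prescribed two-dimensional marginals, and it is the regularity of $\overline\mu$ (hence the determinism and uniqueness of each optimal coupling $(\mathscr I,T_i)\#\overline\mu$) that pins this measure down to the one in which every coordinate is the deterministic image $T_i(Z)$ of the single Gaussian $Z$. Once this identification is secured the conclusion is immediate, and no computation with the explicit operator form of $T_i$ is needed.
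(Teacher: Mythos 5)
Your proof is correct and takes essentially the same approach as the paper: the paper supplies no separate argument for Lemma~\ref{lem:frechetgivesmulti}, stating instead that the proof of Lemma~\ref{lem:FrechetMulti} can be run in the other direction from the Fr\'echet mean, which is precisely what you do, using the regularity of $N(0,\overline\Sigma)$ (via Proposition~\ref{prop:existenceMaps}) to identify the glued measure with the deterministic multicoupling $(\topt{\overline\Sigma}{\Sigma_1},\dots,\topt{\overline\Sigma}{\Sigma_n})\#N(0,\overline\Sigma)$. Your careful treatment of that identification, together with the chain $\inf F=F(\overline\mu)\ge G(\pi)\ge\inf G=\inf F$, is exactly the intended argument.
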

We can reformulate Lemma~\ref{lem:frechetgivesmulti} as an optimisation problem on the space of covariance operators on the tensor product Hilbert space $\X^n$.  Define the coordinate projections $p_i:\X^n\to \X$ by $\pi_i(h_1,\dots,h_n)=h_i$.  The problem is to construct a covariance operator $\mathbf \Sigma$ on $\X^n$ that, under the marginal constraints $p_i\mathbf \Sigma p_i^*=\Sigma_i$, maximises
\[
\tr\left[\sum_{i=1}^n p_i\mathbf \Sigma p_i^*
-\sum_{i\ne j}p_i\mathbf \Sigma p_j^*\right]
=\sum_{i=1}^n\tr\Sigma_i
-\sum_{i\ne j}\tr[p_i\mathbf \Sigma p_j^*].
\]
Since the $\Sigma_i$'s are given, one equivalently seeks to minimise the last displayed double sum.  According to the lemma, the optimal $\mathbf \Sigma$ is the covariance operator of the random vector $(Y_1,\dots,Y_n)$ defined in the statement.

%It is instructive to formulate Lemma~\ref{lem:frechetgivesmulti} in probabilistic terms.  Let $X_1,\dots,X_n$ be Gaussian processes corresponding to $\Sigma_1,\dots,\Sigma_n$.  Then an optimal multicoupling is constructed as follows.  Let $Z$ be a Gaussian process with covariance operator $\overline \Sigma$, and define the Gaussian process $\mathbf X=(\topt{\overline\Sigma}{\Sigma_1}(Z),\dots,\topt{\overline\Sigma}{\Sigma_n}(Z))$ on $\mathcal H^n$, whose law is an optimal multicoupling.

The probabilistic formulation highlights the interpretation of tangent space PCA in terms of functional data analysis, in particular in terms of the problem of \emph{phase variation} (or warping), and its solution, the process of \emph{registration} (or synchronisation).  Consider a situation where the variation of a random process $X$ arises via both \emph{amplitude} and \emph{phase} (see \citet[Section~2]{panaretos2016amplitude}):
%textcolor{red}{CITE PANARETOS \& ZEMEL ANNALS PAPER, IN PARTICULAR THE SECTION WHERE WE GIVE AN INTRODUCTION TO PHASE VARIATION FOR FUNCTIONAL DATA}:
\begin{enumerate}

\item First, one generates the realisation of a Gaussian process $X\sim N(0,\Sigma)$, viewed via the Karhunen--Lo\`eve expansion as
\[
X=\sum_{n=1}^{\infty}\sigma^{1/2}_n\xi_n\varphi_n
\]
for $\{\sigma_n,\varphi_n\}$ the eigenvalue/eigenfunction pairs of $\Sigma$, and $\xi\stackrel{iid}{\sim}N(0,1)$ an iid sequence of real standard Gaussian variables. This is the \emph{amplitude} variation layer, as corresponds to a superposition of random $N(0,\sigma_n)$ amplitude fluctuations  around fixed (deterministic) modes $\phi_n$.

\item Then, one warps the realisation $X$ into $\widetilde{X}$, by applying a positive definite operator $T$ (usually uncorrelated with $X$),
$$\widetilde{X}=TX=\sum_{n=1}^{\infty}\sigma^{1/2}_n\xi_nT\varphi_n$$
with the condition on $T$ that $\hs T\Sigma T\hs_1<\infty$, to guarantee that the resulting $\widetilde{X}$ has finite variance. This is the \emph{phase} variation layer, since it emanates from deformation fluctuations of the modes $\varphi_n$. The term \emph{phase} comes from the case $\mathcal{H}=L^2[0,1]$, where $\widetilde{X}(x)=(TX)(x)=\int_0^1 \tau(x,y)X(y)\mathrm{d}y=\sum_{n=1}^{\infty}\sigma^{1/2}_n\xi_n\int_0^1 \tau(x,y)\varphi_n(y)\mathrm{d}y$ can be seen to be variation attributable to the ``$x$-axis" (ordinate), contrasted to amplitude variation which is attributable to the ``$y$-axis" (abcissa).

\end{enumerate}

\noindent At the level of covariances, if $T$ is uncorrelated with $X$, then $\widetilde{X}$ has covariance $T\Sigma T$ conditional on $T$, which is a geodesic perturbation of $\Sigma$: it corresponds to the retraction (via the exponential map) of a \emph{linear perturbation} of $\Sigma$ on the tangent space $\Tan_\Sigma$. If this perturbation is ``zero mean" on the tangent space (i.e. $\mathbb{E}[T_k]=\mathscr{I}$), then one expects $\Sigma$ to be a Fr\'echet mean of the random operator $T\Sigma T$. So, if one gets to observe multiple such perturbations $\Sigma_k=T_k\Sigma T_k$, the tangent space PCA provides a means of \emph{registration} of $\{\Sigma_1,\dots,\Sigma_k\}$: the approximate recovery of $\Sigma$ and $\{T_k\}_{k=1}^n$, allowing for the separation of the amplitude from the phase variation (which, if left unaccounted, would have detrimental effects to statistical inference). This intuition is made precise in the next section, where phase variation is used as a means of suggesting a \emph{canonical generative model}: a statistical model behind the observed covariances $\{\Sigma_1,\dots,\Sigma_n\}$ that is naturally compatible with the use of the Procrustes distance.

Before moving on to this, we remark that, in a sense, we have come full circle.  The Procrustes distance of \citet{pigoli2014distances} is motivated by the Procrustes distance in shape theory, and is thus connected to the optimal simultaneous registration of multiple Euclidean point configurations, subjected to random isometries.  And, our interpretation of this distance, shows that it is connected to the optimal simultaneous registration of multiple Gaussian processes, subjected to random transportation deformations.

\section{Generative Models, Random Deformations, and Registration}\label{sec:generative}

An important question that has not yet been addressed regards the choice of the Procrustes distance for statistical purposes on covariance operators: \emph{why} would one choose this specific metric rather than another one? As the space of covariance operators is infinite-dimensional, there are naturally many other distances with which one can endow it.  For the statistician, the specific choice of metric on a space implicitly assumes a certain data generating mechanism for the sample at hand, and it is therefore of interest to ask what kind of generative model is behind the Procrustes distance. In the Introduction, we noticed that the use of a Hilbert-Schmidt distance on covariances implicitly postulates that second-order variation arises via \emph{additive perturbations},
$$\Sigma_k=\Sigma+E_k$$
for ${E}_k$ being zero mean self adjoint perturbations. Furnished with the insights of the optimal transportation perspective, particularly those gained in the last section (Section \ref{sec:pca}), we now show that the natural generative model associated with the Procrustes distance is one of \emph{random deformations} (a.k.a warping or phase variation), and is intimately related to the registration problem in functional data.  Suppose that $X$ is a Gaussian process with covariance $\Sigma$ and let $T$ be a random positive bounded operator on $\X$.  Conditional upon $T$, $TX$ is a Gaussian process with covariance $T\Sigma T^*$.  It is quite natural (and indeed necessary for identifiability) to assume that $\Sigma$ is the ``correct" (or \emph{template}) covariance, in the sense that the expected value of $T$ is the identity.  In other words, the covariance of $TX$ is the $\Sigma$ ``on average".  The conjugation perturbations 
$$\Sigma_k= T_k\Sigma T^*_k$$ 
then yield a generative model that is \emph{canonical} for the Procrustes metric, as we now rigorously show:
\begin{theorem}[Generative Model]\label{thm:identifiability}
Let $\Sigma$ be a covariance operator and let $T:\X\to\X$ be a random\footnote{In the sense that $T$ is Bochner measurable from a probability space $\Omega$ to $B(\X)$.  In particular, it is separately valued, namely $T(\Omega)$ is a separable subset of (the nonseparable) $B(\X)$.} positive linear map with $\mathbb E\hs T\hs_{\infty}^2<\infty$ and mean identity.  Then the random operator $T\Sigma T^*$ has $\Sigma$ as Fr\'echet mean in the Procrustes metric,
\[
\mathbb{E}[\Pi^2(\Sigma,T\Sigma T^*)]
\le
\mathbb{E}[\Pi^2(\Sigma',T\Sigma T^*)],
\]
for all non-negative nuclear operators $\Sigma'$.
\end{theorem}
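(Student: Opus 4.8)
The plan is to prove the inequality realisation-by-realisation and then integrate, exploiting the fact that a \emph{positive} operator is automatically an optimal transport map. Write $\mu=N(0,\Sigma)$, $\nu=N(0,\Sigma')$, and, for a fixed realisation of the random operator $T$, $\rho=N(0,T\Sigma T^*)$; by Proposition~\ref{prop:ProcWass} it suffices to show $\mathbb{E}[W^2(\nu,\rho)]\ge \mathbb{E}[W^2(\mu,\rho)]$. The first observation is that, since $X\sim\mu$ implies $TX\sim\rho$, the linear map $T$ pushes $\mu$ forward to $\rho$; and because $T$ is positive and self-adjoint, $T$ is the (Riesz) gradient of the convex function $u(x)=\tfrac12\langle Tx,x\rangle$. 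Hence the coupling $(\mathscr I,T)\#\mu$ is supported on the graph of a gradient of a convex function and is therefore optimal, so that $W^2(\mu,\rho)=\int_{\X}\|x-Tx\|^2\,\mathrm d\mu(x)$.

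Next I would make this optimality quantitative through Kantorovich duality. Setting $\varphi(x)=\tfrac12\|x\|^2-u(x)$ and $\chi(y)=\tfrac12\|y\|^2-u^*(y)$, with $u^*$ the convex conjugate of $u$, Young's inequality $u(x)+u^*(y)\ge\langle x,y\rangle$ shows that $\varphi(x)+\chi(y)\le\tfrac12\|x-y\|^2$, so $(\varphi,\chi)$ is an admissible dual pair; moreover equality holds along the graph $y=Tx$, which certifies that $\tfrac12 W^2(\mu,\rho)=\int\varphi\,\mathrm d\mu+\int\chi\,\mathrm d\rho$. Since the very same pair $(\varphi,\chi)$ is admissible for \emph{any} marginals, weak duality gives $\tfrac12 W^2(\nu,\rho)\ge\int\varphi\,\mathrm d\nu+\int\chi\,\mathrm d\rho$. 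Subtracting and noting that the $\chi$-integral against the common marginal $\rho$ cancels, I obtain the realisation-wise variational inequality
\[
\tfrac12 W^2(\nu,\rho)-\tfrac12 W^2(\mu,\rho)
\ge \int_{\X}\varphi\,\mathrm d\nu-\int_{\X}\varphi\,\mathrm d\mu
=\tfrac12\tr[(\mathscr I-T)(\Sigma'-\Sigma)],
\]
where the last equality uses $\int\|x\|^2\,\mathrm d\nu=\tr\Sigma'$, $\int\langle Tx,x\rangle\,\mathrm d\nu=\tr(T\Sigma')$ and the analogous identities for $\mu$.

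The final step is to take expectations over $T$. The right-hand side is dominated in absolute value by $(1+\hs T\hs_\infty)\hs\Sigma'-\Sigma\hs_1$, which is integrable because $\mathbb E\hs T\hs_\infty^2<\infty$; Fubini's theorem therefore permits interchanging expectation and trace, and the mean-identity hypothesis $\mathbb E T=\mathscr I$ yields $\mathbb E\,\tr[(\mathscr I-T)(\Sigma'-\Sigma)]=\tr[(\mathscr I-\mathbb E T)(\Sigma'-\Sigma)]=0$. Hence $\mathbb E[W^2(\nu,\rho)]\ge\mathbb E[W^2(\mu,\rho)]$, which is exactly the assertion. Finiteness of $\mathbb E[\Pi^2(\Sigma,T\Sigma T^*)]$, needed for the statement to be meaningful, follows from the same bound via $W^2(\mu,\rho)\le\int_{\X}\|x-Tx\|^2\,\mathrm d\mu\le\hs\mathscr I-T\hs_\infty^2\,\tr\Sigma$. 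Conceptually this is the population analogue of the first-order optimality condition $\mathbb E[\topt{\Sigma}{T\Sigma T^*}]=\mathscr I$, since here the optimal map is exactly $T$.

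The main obstacle is the per-realisation lower bound on $W^2(\nu,\rho)$: unlike the upper bound, which any coupling supplies, a lower bound genuinely requires a dual certificate, so the crux is to produce the explicit potentials $(\varphi,\chi)$ and verify both their admissibility and the optimality of $(\mathscr I,T)\#\mu$ for $(\mu,\rho)$. I note that this route never uses injectivity of $\Sigma$, so no regularity assumption on $\mu$ is needed; the only remaining care is the finiteness of $\chi$ on the support of $\rho$ (immediate, since $u^*(Tx)=\tfrac12\langle Tx,x\rangle$ there), the measurability of $T\mapsto\varphi$, and the integrability bookkeeping for Fubini, all of which are routine given $\mathbb E\hs T\hs_\infty^2<\infty$.
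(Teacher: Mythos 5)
Your proposal is correct and takes essentially the same route as the paper's own proof: the paper also certifies the inequality realisation-by-realisation via Kantorovich duality with the potential $\varphi(x)=\tfrac12\langle Tx,x\rangle$ (your $u$) and its Legendre transform, using strong duality at $(\Sigma,T\Sigma T^*)$ and weak duality at $(\Sigma',T\Sigma T^*)$, with the integrals against the common marginal $T\Sigma T^*$ cancelling, and then takes expectations using $\mathbb ET=\mathscr I$. Your measurability and Fubini bookkeeping (domination by $(1+\hs T\hs_\infty)\hs\Sigma'-\Sigma\hs_1$, interchange of Bochner expectation with the trace functional) corresponds exactly to the paper's simple-operator approximation argument, so the two proofs differ only in presentation, not in substance.
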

The assumption that $\mathbb E\hs T\hs_{\infty}^2<\infty$ guarantees that the Fr\'echet functional $\E \Pi^2(A,T\Sigma T)$ is finite for any covariance (non-negative and nuclear) operator $\Sigma$. For measures on $\mathbb R^d$ with compact support, the result in Theorem~\ref{thm:identifiability} holds in a more general Wasserstein setup, where $\mu$ is a fixed measure and $T$ is a random optimal map with mean identity (\citet{bigot2012characterization}; \citet{zemel2017fr}).
\begin{proof}[Proof of Theorem \ref{thm:identifiability}]
We use the Kantorovich duality (\citet[Theorem 5.10]{villani2003topics}) as in Theorem~5 in \citet{zemel2017fr}.  Define the function $\varphi(x)=\langle Tx,x\rangle/2$ and its Legendre transform $\varphi^*(y)=\sup_{x\in \X} \langle x,y\rangle - \varphi(x)$.  We abuse notation for the interest of clarity, and write $d\Sigma(x)$ for integration with respect to the corresponding measure.  The strong and weak Kantorovich duality yield
\begin{align*}
W^2(N(0,\Sigma),N(0,T\Sigma T))
&=\ownint {\X}{}{\left(\frac12\|x\|^2 - \varphi(x)\right)}{\Sigma(x)}
+\ownint {\X}{}{\left(\frac12\|y\|^2 - \varphi^*(y)\right)}{T\Sigma T(x)}
;\\
W^2(N(0,\Sigma'),N(0,T\Sigma T))
&\ge\ownint {\X}{}{\left(\frac12\|x\|^2 - \varphi(x)\right)}{\Sigma'(x)}
+\ownint {\X}{}{\left(\frac12\|y\|^2 - \varphi^*(y)\right)}{T\Sigma T(x)}.
\end{align*}
Taking expectations, using Fubini's theorem and noting that $\E \varphi(x)=\|x\|^2/2$ because $\mathbb ET=\mathscr I$ formally proves the result; in particular this provides a proof for empirical Fr\'echet means (when $T$ takes finitely many values).

To make the calculations rigorous we modify the construction in \cite{zemel2017fr} to adapt for the unboundedness of the spaces.  Let $\Omega$ be the underlying probability space and $B(\mathcal H)$ the set of bounded operators on $\mathcal H$ with the operator norm topology.  We assume that $T:\Omega\to B(\X)$ is Bochner measurable with (Bochner) mean $\mathscr I$.  Then (the measure corresponding to) $T\Sigma T:\Omega\to \W(\X)$ is measurable because it is a (Lipschitz) continuous function of $T$.  To see this notice that
\[
W^2(N(0,S\Sigma S^*),N(0,T\Sigma T^*))
\le \ownint{\X}{}{\|S(x) - T(x)\|^2}{\Sigma(x)}
=\tr(S-T)\Sigma (S^*-T^*)
\le \hs S-T\hs_{\infty}^2\tr\Sigma.
\]
Similarly,
\[
\left|\ownint {\X}{}{\langle (T-S)x,x\rangle}{\Sigma'(x)}\right|
=\left|\tr(T-S)\Sigma'\right|
\le \hs T-S\hs_{\infty}\tr\Sigma'
\]
so the integrals with respect to $\varphi$ are measurable (from $\Omega$ to $\mathbb R$) for all $\Sigma'$, and integrable because $\mathbb E\hs T\hs_{\infty}<\infty$.  Since $W^2(\Sigma',T\Sigma T)$ is measurable and integrable, so are the integrals with respect to $\varphi^*$ (as a difference between integrable functionals).  To conclude the proof it remains to show that for all $\Sigma'$
\[
\E\ownint {\X}{}{ \langle Tx,x\rangle}{\Sigma'(x)}
=\ownint {\X}{}{ \langle (\E T)x,x\rangle}{\Sigma'(x)}.
\]
This is clearly true if $T$ is simple (takes finitely many values).  Otherwise, we can find a sequence of simple $T_n:\Omega\to B(\X)$ such that $\hs T_n - T\hs_\infty\to0$ almost surely and in expectation.  This Fubini equality holds for $T_n$ and
\begin{align*}
\left|\E\ownint {\X}{}{ \langle Tx,x\rangle}{\Sigma'(x)}
- \E\ownint {\X}{}{ \langle T_nx,x\rangle}{\Sigma'(x)}\right|
=|\E\tr(T-T_n)\Sigma'|
\le\tr\Sigma'\E \hs T - T_n\hs_{\infty};\\
\left|\ownint {\X}{}{ \langle (\E T)x,x\rangle}{\Sigma'(x)}
-\ownint {\X}{}{ \langle (\E T_n)x,x\rangle}{\Sigma'(x)}\right|
=|\tr (\E T - \E T_n)\Sigma'|
\le \tr\Sigma'\hs \E T - \E T_n\hs_{\infty}.
\end{align*}
By approximation the Fubini equality holds for $T$, completing the proof.
\end{proof}
The reader may have noticed that the proof relies on optimal transport arguments that do not make specific use of linearity of $T$ or Gaussianity. In the Gaussian case, however, the Fr\'echet functional can be evaluated explicitly due to the formula of the Wasserstein distance.  For the reader's convenience we outline another, more constructive proof of Theorem~\ref{thm:identifiability}.  The argument is fully rigorous in finite dimensions, and could probably be modified with additional effort to be valid in infinite dimensions.
%\begin{theorem}%\label{thm:identifiability}
%Suppose that $\X=\mathbb R^d$.  Let $\Sigma$ be a covariance matrix and let $T:\mathbb R^d\to\mathbb R^d$ be a random positive linear map with mean identity.  Then the random operator $T\Sigma T^*$ has $\Sigma$ as Fr\'echet mean.
%\end{theorem}
%In finite dimensions covariance operators and positive operators are the same thing;  this is not true in infinite dimensions, which is the reason we use different terminologies for $\Sigma$ and $T$.
\begin{proof}[Alternative proof of Theorem~\ref{thm:identifiability} in finite dimensions]
We first evaluate the Fr\'echet functional as%calculate the term $\mathbb{E}d^2(T\Sigma_X T^*,\Sigma_X)$:
\begin{eqnarray*}
\mathbb{E}W^2(T\Sigma T^*,\Sigma)&=&\tr(\Sigma)+\mathbb{E}\tr(T\Sigma T^*)-2\mathbb{E}\tr(\Sigma^{1/2}T\Sigma T^*\Sigma^{1/2})^{1/2}\\
&=&\tr(\Sigma)+\tr( \mathbb{E}[T\otimes T]\Sigma)-2\mathbb{E}\tr(\Sigma^{1/2} T\Sigma^{1/2})\\
&=&\tr(\Sigma)+\tr((\mathscr{I}\otimes \mathscr{I}+\cov(T))\Sigma)-2\tr(\Sigma^{1/2} \mathbb{E}T\Sigma^{1/2})\\
&=&2\tr(\Sigma)+\tr(\cov(T)\Sigma)-2\tr(\Sigma)\\
&=&\tr(\cov(T)\Sigma_X ).
\end{eqnarray*}
We have used the fact that $T$ is self-adjoint, and that $\mathbb{E}T=\mathscr{I}$, so that $\mathbb{E}[T\otimes T]=\mathbb{E}T\otimes \mathbb{E}T+\mbox{cov}(T)=\mathscr{I}\otimes \mathscr{I}+\cov(T)$.  Keeping the above result in mind, we now compute the functional at an arbitrary $\Sigma'$:
%concentrate on the term $\mathbb{E}d^2(T\Sigma_X T^*,\Sigma_Y)$.
\begin{eqnarray*}
\mathbb{E}W^2(T\Sigma T^*,\Sigma')
&=&\tr(\Sigma')+\mathbb{E}\tr(T\Sigma T^*)-2\mathbb{E}\tr(\Sigma'^{1/2} T\Sigma T^*\Sigma'^{1/2})^{1/2}\\
%&=&\tr\Sigma_Y\mathbb{E}T)+\tr(\Sigma_X)+\tr(\Sigma_T\Sigma_X)-2\mathbb{E}\tr(\Sigma_Y T\Sigma_X T)^{1/2}\\
&=&
%\underset{=\mathbb{E}d^2(T\Sigma_X T^*,\Sigma_X)}{%\underbrace{
\tr(\cov(T)\Sigma)
%}}
+ \mathbb{E}\Big\{
% \underset{\geq 0\textcolor{red}{??}\textcolor{red}{??}}{\underbrace{
 \tr(\Sigma'^{1/2}T\Sigma'^{1/2})+\tr(\Sigma^{1/2}T\Sigma^{1/2}) -2 \tr(\Sigma'^{1/2} T\Sigma T\Sigma'^{1/2})^{1/2}
 %}}
  \Big\}.
\end{eqnarray*}
To prove that $F(\Sigma')\ge F(\Sigma)$ it suffices to show that the term inside the expectation is nonnegative;  we shall do this by interpreting it as the Wasserstein distance between $B=T^{1/2}\Sigma' T^{1/2}$ and $A=T^{1/2}\Sigma T^{1/2}$.  Write $B_1=\Sigma'^{1/2}T\Sigma'^{1/2}$, $A_1=\Sigma^{1/2}T\Sigma^{1/2}$.  Then the formula for the Wasserstein distance says that $2\tr (A^{1/2}BA^{1/2})^{1/2}\le \tr A + \tr B=\tr A_1 + \tr B_1$ (see \citet{dowson1982frechet}).  We thus only need to show that $\tr(A^{1/2}BA^{1/2})^{1/2}=\tr(\Sigma'^{1/2} T\Sigma T\Sigma'^{1/2})^{1/2}$.  Up until now, everything holds in infinite dimensions, but the next argument assumes finite dimensions.  In particular, will establish that $\tr(A^{1/2}BA^{1/2})^{1/2}=\tr(\Sigma'^{1/2} T\Sigma T\Sigma'^{1/2})^{1/2}$ by showing that these \emph{matrices} are conjugate.  Assume firstly that $\Sigma$, $\Sigma'$ and $T$ are invertible and write
\[
D = \Sigma'^{1/2} T\Sigma T\Sigma'^{1/2}
%=\Sigma'^{-1/2}[\Sigma'T\Sigma T]\Sigma^{1/2}
%=\Sigma'^{-1/2}T^{-1/2}[T^{1/2}\Sigma'T^{1/2}T^{1/2}\Sigma T^{1/2}]T^{1/2}\Sigma'^{1/2}
=\Sigma'^{-1/2}T^{-1/2}[BA]T^{1/2}\Sigma'^{1/2}
=\Sigma'^{-1/2}T^{-1/2}A^{-1/2}[A^{1/2}BA^{1/2}]A^{1/2}T^{1/2}\Sigma^{1/2}.
\]
Thus the positive matrices $D$ and $A^{1/2}BA^{1/2}$ have the same eigenvalues.  This also holds true for their square roots, that consequently have the same trace.  Since singular matrices can be approximated by nonsigular ones, this nonnegativity extends to the singular case as well, and so the proof is valid without restriction when $\X=\mathbb R^d$ is finite-dimensional.%in the general case.
%
%\begin{lemma}
%Let $A,B,C$ be positive definite matrices ($A$ and $C$ nonsingular), then
%\[
%\tr(ACBC)^{1/2}
%=\tr(A^{1/2}CBCA^{1/2})^{1/2}
%\le\sqrt{ \tr(AC)\tr(BC)}
%\]
%\end{lemma}
%Proof  The matrices $D=A^{1/2}CBCA^{1/2}$ and $E=C^{1/2}AC^{1/2}C^{1/2}BC^{1/2}$, and we can define a square root of $E$ as $(C^{1/2}A^{1/2})^{-1}\sqrt D(C^{1/2}A^{1/2})$, the unique positive-diagonalisable operator whose square equals $E$.  Thus
%\[
%2\tr(ACBC)
%=2\tr(C^{1/2}AC^{1/2}C^{1/2}BC^{1/2})
%\le\tr(C^{1/2} A C^{1/2}) + \tr(C^{1/2} B C^{1/2})
%=\tr(AC) + \tr(BC) 
%\]
\end{proof}
%We conjecture that the theorem holds true also in infinite dimensions, at least when $T$ is assumed bounded (and perhaps boundedly invertible).  Indeed, the only part of the proof that requires scrutiny is the conjugation idea, because the inverses may fail to be defined.  We leave filling this technical gap as an open question for future work.

When the law of the random deformation $T$ is finitely supported,  we can get a stronger result than that of Theorem \ref{thm:identifiability}, without the assumption that $T$ is bounded.  This is not merely a technical improvement, as in many cases the optimal maps will, in fact, be unbounded (see the discussion after Equation \eqref{tangent_inner_product}).  This stronger result will also be used in Proposition~\ref{prop:fixedpoint} to obtain a (partial) characterisation of the sample Fr\'echet mean.
\begin{theorem}
\label{thm:identifiabilityEmpirical}
Let $\Sigma$ be a covariance operator corresponding to a centred Gaussian measure $\mu\equiv N(0,\Sigma)$ and let $D\subseteq \X$ be a dense linear subspace of $\mu$-measure one.  If $T_1,\dots,T_n:D\to\X$ are (possibly unbounded) linear operators such that $0\le\langle T_ix,x\rangle$, $\sum T_i(x)=nx$ and $\langle T_ix,y\rangle=\langle x,T_iy\rangle$ for all $i$ and all $x,y\in D$, then $\Sigma$ is a Fr\'echet mean of the finite collection $\{T_i\Sigma T_i:i=1,\dots,n\}$.
\end{theorem}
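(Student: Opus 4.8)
The plan is to recast everything via optimal transport and run a Kantorovich-duality argument with a single, well-chosen potential for each datum, in the spirit of the proof of Theorem~\ref{thm:identifiability}. By Proposition~\ref{prop:ProcWass} the Fr\'echet functional is $F(\Sigma')=\frac{1}{2n}\sum_{i=1}^n W^2(N(0,\Sigma'),\nu_i)$ with $\nu_i:=N(0,T_i\Sigma T_i)$, and the goal is to show $F(\Sigma)\le F(\Sigma')$ for every nuclear $\Sigma'\succeq0$. Each $T_i$ is positive and symmetric on $D$, hence is the gradient of the convex quadratic $\varphi_i(x)=\tfrac12\langle T_ix,x\rangle$, and it pushes $\mu:=N(0,\Sigma)$ forward to $\nu_i$. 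A useful preliminary is that the positivity and averaging hypotheses combine to give $0\le\langle T_ix,x\rangle\le\sum_j\langle T_jx,x\rangle=n\|x\|^2$ for all $x\in D$; a Cauchy--Schwarz estimate on the associated symmetric bilinear form, together with density of $D$, then shows that each $T_i$ extends to a bounded positive self-adjoint operator with $\hs T_i\hs_\infty\le n$. This guarantees that each $T_i\Sigma T_i$ is genuinely trace-class, that $\varphi_i$ is finite and continuous on all of $\X$, and that $F$ is finite --- so that no boundedness need be assumed a priori.

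The core of the proof is a duality bookkeeping carried out for each $i$. Since $\varphi_i$ is convex with $\nabla\varphi_i=T_i$ pushing $\mu$ forward to $\nu_i$, the map $T_i$ is optimal between $\mu$ and $\nu_i$ (Brenier--Knott--Smith in Hilbert space, as invoked in Section~\ref{gauss_wasserstein_distance}); thus $(\varphi_i,\varphi_i^*)$ is a pair of optimal Kantorovich potentials and strong duality holds with equality,
\[
\tfrac12 W^2(\mu,\nu_i)=\ownint{\X}{}{\Big(\tfrac12\|x\|^2-\varphi_i(x)\Big)}{\mu(x)}+c_i,\qquad c_i:=\ownint{\X}{}{\Big(\tfrac12\|y\|^2-\varphi_i^*(y)\Big)}{\nu_i(y)}.
\]
Evaluating $\varphi_i^*$ along the optimal graph (where the Fenchel equality $\varphi_i^*(T_ix)=\varphi_i(x)$ holds) shows that $c_i$ is finite. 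For an arbitrary competitor $\rho':=N(0,\Sigma')$ the same pair $(\varphi_i,\varphi_i^*)$ remains admissible by the Fenchel--Young inequality, so weak duality gives
\[
\tfrac12 W^2(\rho',\nu_i)\ge\ownint{\X}{}{\Big(\tfrac12\|x\|^2-\varphi_i(x)\Big)}{\rho'(x)}+c_i.
\]
Crucially, $c_i$ depends only on $\nu_i$ and is identical in both displays, so it cancels upon comparison.

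Summing the two families of relations over $i$ and dividing by $n$, the $c_i$ terms reassemble into $\frac1n\sum_i c_i$ on both sides, while the remaining integrands combine into $\tfrac12\|x\|^2-\frac1n\sum_i\varphi_i(x)$. Here the averaging hypothesis enters decisively: $\frac1n\sum_i\varphi_i(x)=\frac1{2n}\langle\sum_i T_ix,x\rangle=\tfrac12\|x\|^2$ identically, so this integrand vanishes and both integrals are zero. We conclude that $F(\Sigma)=\frac1n\sum_i c_i$ and $F(\Sigma')\ge\frac1n\sum_i c_i$, whence $F(\Sigma')\ge F(\Sigma)$ for every nuclear $\Sigma'\succeq0$; that is, $\Sigma$ is a Fr\'echet mean of $\{T_i\Sigma T_i\}$.

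The hard part will be the rigorous justification of strong duality in infinite dimensions --- that $\varphi_i$ really is an optimal potential and $T_i$ the optimal map between the two possibly degenerate centred Gaussians $\mu$ and $\nu_i$ --- together with the attendant measurability and integrability bookkeeping (finiteness of $\ownint{\X}{}{\varphi_i}{\rho'}$ and $\ownint{\X}{}{\varphi_i^*}{\nu_i}$, and legitimacy of the Fubini-type manipulations), exactly the care exercised in the proof of Theorem~\ref{thm:identifiability}. I note finally that, since the averaging and positivity constraints in fact bound each $T_i$ in operator norm by $n$, the statement could alternatively be deduced from Theorem~\ref{thm:identifiability} applied to the random operator taking each value $T_i$ with probability $1/n$ (which has mean $\mathscr I$ and $\E\hs T\hs_{\infty}^2\le n^2<\infty$); the direct duality argument is nonetheless preferable, being self-contained and transparent about the role of the averaging constraint.
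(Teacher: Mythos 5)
Your proof is correct, and its duality core is the same as the paper's: the paper proves Theorem~\ref{thm:identifiabilityEmpirical} by rerunning the Kantorovich-duality argument of Theorem~\ref{thm:identifiability} with the potentials $\varphi_i(x)=\tfrac12\langle T_ix,x\rangle$. Where you genuinely diverge is in how you handle the one new difficulty, the unboundedness of the $T_i$. The paper keeps the operators unbounded: it observes that $\varphi_i$ is convex on $D$ with subgradient $T_ix$ there, runs the duality with all $\varphi_i$-integrals restricted to $D$ (finitely many integrals, so no measurability issues arise), obtains $F(\mu)\le F(\nu)$ for every $\nu$ with $\nu(D)=1$, and then removes that restriction by density of $D$ together with continuity of the Fr\'echet functional in the Wasserstein distance. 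You instead show that the hypotheses preclude unboundedness altogether: positivity plus $\sum_j T_j=n\mathscr I$ give $0\le\langle T_ix,x\rangle\le n\|x\|^2$ on $D$, and Cauchy--Schwarz for the positive symmetric bilinear form $(x,y)\mapsto\langle T_ix,y\rangle$ yields $|\langle T_ix,y\rangle|\le n\|x\|\|y\|$, hence $\|T_ix\|\le n\|x\|$ on the dense subspace $D$, so each $T_i$ extends to a bounded positive self-adjoint operator of norm at most $n$. This observation is correct, and it buys you two things the paper's route does not: the duality runs globally on $\X$ with no restriction-to-$D$ or density step, and the theorem collapses to Theorem~\ref{thm:identifiability} applied to $T$ uniform on $\{T_1,\dots,T_n\}$, as you note at the end. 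It also reveals that the ``possibly unbounded'' allowance in the statement is vacuous under these hypotheses; the paper's remark that the improvement over Theorem~\ref{thm:identifiability} is ``not merely technical'' is then best read as saying that boundedness need not be \emph{verified} in applications such as Proposition~\ref{prop:fixedpoint}, where it emerges as a hidden consequence of $\sum_i \topt{\Sigma}{\Sigma_i}=n\mathscr I$ rather than being assumed. Conversely, the paper's restrict-to-$D$ technique is the more robust one, since it applies to subgradients of general convex potentials on sets of full measure, not only to the quadratic forms of symmetric linear maps.

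One simplification: the step you flag as ``the hard part'' --- rigorous strong duality and optimality of $T_i$ in infinite dimensions --- is not actually needed. For the comparison $F(\Sigma)\le F(\Sigma')$ it suffices to have $\tfrac12 W^2(\mu,\nu_i)\le \ownint{\X}{}{\bigl(\tfrac12\|x\|^2-\varphi_i(x)\bigr)}{\mu(x)}+c_i$, and by the pointwise Fenchel equality $\varphi_i^*(T_ix)=\varphi_i(x)$, integrated along the coupling $(\mathscr I,T_i)\#\mu$, the right-hand side equals $\tfrac12\ownint{\X}{}{\|x-T_ix\|^2}{\mu(x)}$, which upper-bounds $\tfrac12 W^2(\mu,\nu_i)$ simply because the induced coupling is admissible. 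Combined with your Fenchel--Young weak-duality bound for the competitor $\rho'$, summing over $i$ and using $\tfrac1n\sum_i\varphi_i(x)=\tfrac12\|x\|^2$ completes the proof without invoking Brenier--Knott--Smith at all; optimality of $T_i$ is only needed if one wants the exact identity $F(\Sigma)=\tfrac1n\sum_i c_i$, not for the minimality of $\Sigma$.
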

%\textcolor{red}{It is probably possible to do this in the population version with $T$ unbounded, assuming that they are defined on the same (dense) subspace of $\X$;  I am not sure exactly how to this, though --- it requires using something other than the operator norm}
\begin{proof}
Straightforward calculations show that the functions $\varphi_i(x)=\langle T_ix,x\rangle/2$ are convex on $D$ and $T_ix$ is a subgradient of $\varphi_i$ for any $i$ and any $x\in D$.  The duality in the previous proof is therefore valid with the integrals involving $\varphi_i$ taken on $D$ (rather than on the whole of $\mathcal H$).  Since there are finitely many integrals, there are no measurability issues and we have $F(\mu)\le F(\nu)$ whenever $\nu(D)=1$.  By continuity considerations, since $D$ is dense in $\X$, this means that $F(\mu)\le F(\nu)$ for all $\nu\in \W(\X)$, so $\mu$, that is, $\Sigma$, is a Fr\'echet mean.
\end{proof}

%\begin{itemize}
%\item A final important question to consider is the following: why would one choose the Procrustes/Wasserstein metric on covariance operator?

%\item The choice of any metric in a given context, implicitly assumes a certain data generation mechanism for the sample at hand. It thus makes sense to ask what generative model is behind the Procrustes/Wasserstein metric.

%\item We claim that the right generative model is the following.

%\item In finite dimensions, this is indeed correct (give proof)

%\item We also conjecture this to be correct in the infinite dimensional context. 
%\end{itemize}

\section{Characterisation of Fr\'echet Means via an Operator Equation}\label{sec:fixedpoint}
\citet{knott1984optimal} show that, in finite dimensions, a positive definite solution $\Sigma$ to the equation
\begin{equation}\label{eq:fixedpoint}
\Sigma
=\frac 1n\sum_{i=1}^n (\Sigma^{1/2} \Sigma_i\Sigma^{1/2})^{1/2}
\end{equation}
is a Fr\'echet mean of $\Sigma_1,\dots,\Sigma_n$  (see also \citet{ruschendorf2002n}.) Later, \citet{agueh2011barycenters} proved that \eqref{eq:fixedpoint} is in fact a characterisation of the mean in that (if one $\Sigma_i$ is invertible) this fixed point equation has a unique invertible solution, which is the Fr\'echet mean of $\Sigma_1,\dots,\Sigma_n$.  Part of their results extend easily to infinite dimensions:
\begin{proposition}
\label{prop:fixedpoint}
Let $\Sigma_1,\dots,\Sigma_n$ be covariance operators% with Fr\'echet mean $\overline \Sigma$
.  Then:
\begin{enumerate}
\item Any Fr\'echet mean $\overline \Sigma$ of $\Sigma_1,\dots,\Sigma_n$ satisfies \eqref{eq:fixedpoint}.
\item If \eqref{eq:fixedpoint} holds, and $\mathrm{ker}(\Sigma)\subseteq \bigcap_{i=1}^{n} \mathrm{ker}(\Sigma_i)$, then $\Sigma$ is a Fr\'echet mean.
\end{enumerate}
\end{proposition}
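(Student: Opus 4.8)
The plan is to prove the two implications by separate means, exploiting the dictionary between the fixed-point equation \eqref{eq:fixedpoint} and averages of optimal transport maps. Throughout I write $\mu\equiv N(0,\Sigma)$, $\mu_i\equiv N(0,\Sigma_i)$ and $\mathcal H_0=\overline{\mathrm{range}(\Sigma)}$. I would first record the basic observation that \eqref{eq:fixedpoint} is really an identity on $\mathcal H_0$: since $\ker\Sigma^{1/2}=\ker\Sigma$, every operator $(\Sigma^{1/2}\Sigma_i\Sigma^{1/2})^{1/2}$ is self-adjoint, annihilates $\ker\Sigma$, and maps $\mathcal H_0$ into itself, exactly as $\Sigma$ does; hence both sides of \eqref{eq:fixedpoint} are supported on $\mathcal H_0$ and the equation need only be verified there. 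This reduction is what will let the argument tolerate a possibly non-injective $\Sigma$.

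Statement (2) is the clean half. The hypothesis $\ker\Sigma\subseteq\bigcap_i\ker\Sigma_i$ is precisely what Proposition~\ref{prop:existenceMaps} requires for each optimal map $T_i:=\topt{\Sigma}{\Sigma_i}=\Sigma^{-1/2}(\Sigma^{1/2}\Sigma_i\Sigma^{1/2})^{1/2}\Sigma^{-1/2}$ to be well-defined on a dense, $\mu$-full subspace $D\subseteq\mathcal H_0$; there each $T_i$ is positive and symmetric, and because $T_i$ pushes $\mu$ forward to $\mu_i$ one has $T_i\Sigma T_i=\Sigma_i$. Conjugating \eqref{eq:fixedpoint} by $\Sigma^{-1/2}$ on $\mathcal H_0$ turns it into $\tfrac1n\sum_i T_i=\mathscr I$, i.e. $\sum_i T_i(x)=nx$ for $x\in D$. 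Thus $T_1,\dots,T_n$ satisfy every hypothesis of Theorem~\ref{thm:identifiabilityEmpirical}, which immediately gives that $\Sigma$ is a Fr\'echet mean of $\{T_i\Sigma T_i\}=\{\Sigma_i\}$.

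For statement (1) I would read \eqref{eq:fixedpoint} as the first-order (stationarity) condition of the convex Fr\'echet functional $F$ at a minimiser $\overline\Sigma$. Concretely, perturb along $\exp_{\overline\Sigma}(tA)=(\mathscr I+tA)\overline\Sigma(\mathscr I+tA)$ for bounded self-adjoint $A$, expand $F$ via the closed-form distance of Proposition~\ref{prop:ProcWass}, and impose that the one-sided derivative at $t=0$ be nonnegative in all admissible directions. The derivative of the coupling term $\tr(\overline\Sigma^{1/2}\Sigma_i\overline\Sigma^{1/2})^{1/2}$ is (on $\mathcal H_0$) given by $\tfrac12\topt{\overline\Sigma}{\Sigma_i}$, so summing the stationarity identities over $i$ collapses to $\tfrac1n\sum_i\topt{\overline\Sigma}{\Sigma_i}=\mathscr I$ on $\mathcal H_0$; conjugating back by $\overline\Sigma^{1/2}$ \emph{cancels the inverse factors} and returns the inverse-free identity \eqref{eq:fixedpoint}. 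An alternative, and conceptually parallel, route is the multicoupling machinery of Lemma~\ref{lem:FrechetMulti}, whose equality case attaches to $\overline\Sigma$ an optimal multicoupling $(X_1,\dots,X_n)$ with $\overline X=\tfrac1n\sum_i X_i$ almost surely; passing to covariances yields the same stationary relation, modulo the same singular-case care described below.

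The main obstacle is that $\overline\Sigma$ need not be injective and, more seriously, one \emph{cannot} assume $\ker\overline\Sigma\subseteq\ker\Sigma_i$: a Fr\'echet mean may genuinely fail to dominate the ranges of the $\Sigma_i$ (an explicit $\R^2$ example is two orthogonal rank-one covariances, whose Fr\'echet means include a rank-one operator not dominating either range, yet still solving \eqref{eq:fixedpoint}). In that regime the maps $\topt{\overline\Sigma}{\Sigma_i}$ need not exist as honest transport maps, and $A\mapsto\tr A^{1/2}$ is non-differentiable at singular $A$, its one-sided derivative blowing up like $t^{-1/2}$ in directions entering the kernel. The way through is to work entirely on $\mathcal H_0$ and to exploit that the tangent inner product $\langle A,B\rangle_{\Tan_{\overline\Sigma}}=\tr(A\overline\Sigma B)$ is degenerate on $\ker\overline\Sigma$: stationarity only forces the $\overline\Sigma^{1/2}$-weighted (range) part of the gradient to vanish and leaves the kernel directions unconstrained, which is exactly consistent with \eqref{eq:fixedpoint} being a range statement and with non-dominating means. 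The technical content I expect to be most delicate is making the expansion of $\tr(\cdot)^{1/2}$ rigorous at singular operators on $\mathcal H_0$ — using minimality to exclude the blow-up (kernel-entering) directions and a finite-rank approximation to pass from perturbations to the full operator identity.
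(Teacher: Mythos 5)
Your proof of part (2) is correct and is essentially the paper's own argument: Proposition~\ref{prop:existenceMaps} gives the maps $T_i=\topt{\Sigma}{\Sigma_i}$ on a dense subspace of full $N(0,\Sigma)$-measure, conjugating \eqref{eq:fixedpoint} by $\Sigma^{-1/2}$ yields $\sum_i T_i=n\mathscr I$ there, and Theorem~\ref{thm:identifiabilityEmpirical} concludes. (Your preliminary reduction to $\mathcal H_0=\overline{\mathrm{range}(\Sigma)}$ and your rank-one example in $\R^2$ are sound and a useful sanity check.)

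For part (1), however, there is a genuine gap: your variational route is a plan whose hardest steps are deferred rather than executed, and those steps fail in exactly the cases the statement must cover. The claimed derivative formula --- that the coupling term $\tr(\overline\Sigma^{1/2}\Sigma_i\overline\Sigma^{1/2})^{1/2}$ differentiates to $\tfrac12\topt{\overline\Sigma}{\Sigma_i}$ --- presupposes that the optimal maps exist, i.e.\ $\mathrm{ker}(\overline\Sigma)\subseteq\mathrm{ker}(\Sigma_i)$; but your own $\R^2$ example exhibits a Fr\'echet mean for which this inclusion fails, and in infinite dimensions even establishing injectivity of $\overline\Sigma$ is the paper's open Conjecture~\ref{conj:regularity}, so you cannot assume the maps are available at the minimiser. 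Even when they do exist, they are in general unbounded and only densely defined (see the discussion after Proposition~\ref{prop:existenceMaps}), so ``summing the stationarity identities'' and conjugating back by $\overline\Sigma^{1/2}$ to cancel the inverse factors requires justification you do not supply. Finally, the one-sided expansion of $A\mapsto\tr A^{1/2}$ at a singular operator, which you postpone to ``using minimality to exclude blow-up directions and a finite-rank approximation,'' is not a technical afterthought --- it is the entire content of part (1). The paper circumvents all of this with a different and much shorter mechanism: project onto finite-rank subspaces via $\mathscr P_k\to\mathscr I$ strongly, invoke the known finite-dimensional characterisation of Agueh and Carlier (their Theorem~6.1) so that the projected Fr\'echet means $\overline\Sigma_k$ satisfy \eqref{eq:fixedpoint}, then pass to the limit using stability of Fr\'echet means under trace-norm convergence (Theorem~\ref{thm:convNfixed}) together with the trace-norm continuity of both sides of \eqref{eq:fixedpoint} supplied by Proposition~\ref{prop:convEquiv}. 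To repair your argument you should either adopt this approximation-plus-stability scheme, importing the finite-dimensional result as the engine, or genuinely carry out the subdifferential analysis at singular operators --- as written, the stationarity derivation collapses precisely at non-injective, non-dominating means.
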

\begin{proof}
Let $\mathscr P_k$ be a sequence of finite rank projections that converge strongly to the identity.  Then $\mathscr P_k\Sigma_i\mathscr P_k$ converge to $\Sigma_i$ in Wasserstein distance (Section~\ref{sec:projections}) and thus in trace norm by Proposition~\ref{prop:convEquiv}.  If $\overline \Sigma_k$ is the Fr\'echet mean of the projected operators, then it satisfies \eqref{eq:fixedpoint} by \cite[Theorem~6.1]{agueh2011barycenters}, with $\Sigma_i$ replaced by $\mathscr P_k\Sigma_i\mathscr P_k$.  But $\overline \Sigma_k$ converges to $\overline \Sigma$ in trace norm (Theorem~\ref{thm:convNfixed} and Proposition~\ref{prop:convEquiv}) and by continuity \eqref{eq:fixedpoint} holds true for $\overline \Sigma$.

Conversely, the kernel conditions means that $T_i=\topt\Sigma{\Sigma_i}=\Sigma^{-1/2}(\Sigma^{1/2}\Sigma_i\Sigma^{1/2})^{1/2}\Sigma^{-1/2}$ exists and is defined on a dense subspace $D_i$ of $\Sigma$-measure one (Proposition~\ref{prop:existenceMaps}).  Equation \eqref{eq:fixedpoint} yields $\sum T_i=n\mathscr I$ on $D=\cap_{i=1}^n D_i$, a set of full measure, and by Theorem~\ref{thm:identifiabilityEmpirical} $\Sigma$ is a Fr\'echet mean.
\end{proof}
In view of Proposition~\ref{prop:fixedpoint}, injectivity of the Fr\'echet mean is equivalent to the existence of an injective solution to the operator equation \eqref{eq:fixedpoint}.

\section{Open Questions and Future Work}\label{sec:future}

We conclude with some open questions. The most important of these is the injectivity (regularity) of the Fr\'echet mean $\overline \Sigma$.  We conjecture that, as in the finite dimensional case,
\begin{conjecture}[Regularity of the Fr\'echet Mean]\label{conj:regularity}
Let $\Sigma_1,\dots,\Sigma_n$ be covariances on $\mathcal{H}$ with $\Sigma_1$ injective.  Then, their Fr\'echet mean $\overline\Sigma$ with respect to the Procrustes metric $\Pi$ is also injective.
\end{conjecture}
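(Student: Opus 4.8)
The plan is to argue by contradiction through a local variation of the Fr\'echet functional $F(\Sigma)=\frac1{2n}\sum_{i=1}^n\Pi^2(\Sigma,\Sigma_i)$, exploiting the closed form $\Pi^2(A,\Sigma_i)=\tr A+\tr\Sigma_i-2\tr\sqrt{A^{1/2}\Sigma_iA^{1/2}}$ of Proposition~\ref{prop:ProcWass}. Since $\Sigma_1$ is injective the mean is unique (Proposition~\ref{prop:uniqueFrechet}); call it $\overline\Sigma$ and suppose $K:=\mathrm{ker}(\overline\Sigma)\neq\{0\}$, writing $V=\overline{\mathrm{range}(\overline\Sigma)}=K^\perp$. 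I would fix a unit vector $e\in K$ and perturb in that degenerate direction, setting $\Sigma_\varepsilon=\overline\Sigma+\varepsilon\,e\otimes e$ for small $\varepsilon>0$. The crucial simplification is that, because $e\in K$, the operators $\overline\Sigma$ and $e\otimes e$ act on orthogonal subspaces, so $\Sigma_\varepsilon^{1/2}=\overline\Sigma^{1/2}+\sqrt\varepsilon\,e\otimes e$ \emph{exactly}; the perturbation therefore enters the square root at order $\sqrt\varepsilon$, which is precisely what can make opening up a kernel direction favourable.

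The core computation is the expansion of $g_i(\varepsilon)=\tr\sqrt{\Sigma_\varepsilon^{1/2}\Sigma_i\Sigma_\varepsilon^{1/2}}$ as $\varepsilon\to0^+$. Writing the compression of $\Sigma_i$ to $\mathrm{span}(e)\oplus V$ in block form, with diagonal blocks $\alpha_i=\langle\Sigma_ie,e\rangle$ and $G_i=P_V\Sigma_iP_V$ and off-diagonal block $b_i=P_V\Sigma_ie$, an eigenvalue-perturbation (Schur-complement) analysis of $\Sigma_\varepsilon^{1/2}\Sigma_i\Sigma_\varepsilon^{1/2}$ isolates a single new eigenvalue $\lambda_i(\varepsilon)=\varepsilon\,s_i(e)+o(\varepsilon)$, where $s_i(e)=\alpha_i-\langle G_i^{-}b_i,b_i\rangle$ is exactly the conditional variance of $\langle X_i,e\rangle$ given $P_VX_i$ for $X_i\sim N(0,\Sigma_i)$, while all remaining eigenvalues move only by $O(\varepsilon)$. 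This gives $g_i(\varepsilon)-g_i(0)=\sqrt{\varepsilon\,s_i(e)}+o(\sqrt\varepsilon)$ and hence
\[
2n\big[F(\Sigma_\varepsilon)-F(\overline\Sigma)\big]
= n\varepsilon-2\sqrt\varepsilon\sum_{i=1}^n\sqrt{s_i(e)}+o(\sqrt\varepsilon).
\]
If $s_i(e)>0$ for at least one $i$, the right-hand side is strictly negative for small $\varepsilon$, contradicting minimality of $\overline\Sigma$. A short Hilbert-space argument identifies the positivity: $s_i(e)>0$ if and only if $e$ is not orthogonal to $\overline{\mathrm{range}(\Sigma_i^{1/2})\cap K}$. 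In finite dimensions $\Sigma_1$ invertible forces $\mathrm{range}(\Sigma_1^{1/2})=\mathcal H$, so $s_1(e)=1/\langle\Sigma_1^{-1}e,e\rangle>0$ for every $e\neq0$, recovering the result of \citet{agueh2011barycenters} and \citet{zemel2017fr}; in general this step already proves the partial statement $\mathrm{ker}(\overline\Sigma)\cap\mathrm{range}(\Sigma_i^{1/2})=\{0\}$ for every $i$.

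The main obstacle is exactly the gap between this partial statement and full injectivity. Injectivity of $\Sigma_1$ only makes $\mathrm{range}(\Sigma_1^{1/2})$ \emph{dense}, and a nonzero closed subspace $K$ can still meet a dense subspace trivially; for ``rough'' kernel directions $e$ lying outside $\bigcup_i\mathrm{range}(\Sigma_i^{1/2})$ one has $s_i(e)=0$ for all $i$, so the first variation above vanishes identically. Indeed, replacing $e\otimes e$ by an arbitrary positive $E$ supported on $K$ gives a coefficient $\tr\sqrt{E^{1/2}C_iE^{1/2}}$ with $C_i$ the conditional covariance on $K$, which again vanishes on these directions, so no finite-rank positive perturbation supported on $K$ can detect them. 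Overcoming this will require either a genuinely global competitor rather than an infinitesimal one, or a quantitative finite-dimensional approximation; one must be careful, however, since trace-norm (equivalently Procrustes) convergence alone cannot suffice, as the example following Proposition~\ref{prop:convEquiv} shows that injectivity is destroyed in such limits. What is needed is a spectral lower bound on the finite-rank Fr\'echet means $\overline\Sigma_k$ that is uniform in $k$: in the commuting case the fixed-point equation \eqref{eq:fixedpoint} yields the clean bound $\overline\Sigma\succeq n^{-2}\Sigma_1$, and a non-commuting analogue of this operator inequality, if true, would settle the conjecture. This is where I expect the real difficulty to lie.
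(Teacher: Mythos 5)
This statement is posed in the paper as an open conjecture: there is no proof to compare against. What Section~\ref{sec:future} offers is (i) the commuting-case remark, (ii) a lemma ruling out operator domination, and (iii) a two-line sketch of precisely your variational argument (``we believe that $\tr(\Sigma_i^{1/2}\Sigma'\Sigma_i^{1/2})^{1/2}$ behaves like $\sqrt\epsilon$ for $\epsilon$ small''). Your proposal is thus a substantially worked-out version of the authors' own heuristic, and it is correct as far as it goes: the exact identity $\Sigma_\varepsilon^{1/2}=\overline\Sigma^{1/2}+\sqrt{\varepsilon}\,e\otimes e$ for $e\in K=\mathrm{ker}(\overline\Sigma)$, the emergent eigenvalue of size $\varepsilon\,s_i(e)$ with $s_i(e)$ the conditional variance --- equivalently $s_i(e)=\mathrm{dist}^2\bigl(\Sigma_i^{1/2}e,\,\overline{\Sigma_i^{1/2}V}\bigr)$ with $V=K^\perp$ --- and the resulting first variation $n\varepsilon-2\sqrt\varepsilon\sum_i\sqrt{s_i(e)}+o(\sqrt\varepsilon)$ all check out, as does your criterion that $s_i(e)>0$ iff $e\not\perp\mathrm{range}(\Sigma_i^{1/2})\cap K$. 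Your partial conclusion $\mathrm{ker}(\overline\Sigma)\cap\mathrm{range}(\Sigma_i^{1/2})=\{0\}$ is a genuine strengthening of anything stated in the paper, and your diagnosis of \emph{why} the argument cannot close is exactly right: injectivity of $\Sigma_1$ gives only \emph{density} of $\mathrm{range}(\Sigma_1^{1/2})$, so there may be kernel directions with $s_i(e)=0$ for all $i$, on which the first variation is blind --- a possibility the paper's sketch silently assumes away. One rigor caveat: in infinite dimensions the $o(\sqrt\varepsilon)$ control aggregated over infinitely many perturbed eigenvalues of a trace-class operator needs justification; since you only need a \emph{lower} bound on $g_i(\varepsilon)$, a cleaner route is the coupling characterisation $g_i(\varepsilon)=\max\E\langle X_\varepsilon,Y_i\rangle$ over couplings, taking $X_\varepsilon=X+\sqrt{\varepsilon}\,\xi e$ with $\xi\sim N(0,1)$, $\E[\xi X]=0$, and $\xi$ maximally correlated with the residual of $\langle Y_i,e\rangle$.

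The remaining gap is genuine, but note that the specific escape route you propose is provably blocked by the paper's own lemma in Section~\ref{sec:future}. That lemma constructs, for \emph{any} infinite-rank (in particular, injective) $\overline\Sigma$, operators $\Sigma_1=T\overline\Sigma T$ and $\Sigma_2=(2\mathscr I-T)\overline\Sigma(2\mathscr I-T)$ with $T$ boundedly invertible --- hence $\Sigma_1$ injective whenever $\overline\Sigma$ is --- whose Fr\'echet mean is $\overline\Sigma$, yet such that $\overline\Sigma-c\Sigma_1$ fails to be positive for every $c>0$. So the hoped-for non-commuting analogue of $\overline\Sigma\succeq n^{-2}\Sigma_1$ is false; moreover, a uniform-in-$k$ spectral bound $\overline\Sigma_k\succeq c\,\mathscr P_k\Sigma_1\mathscr P_k$ on the finite-rank means would pass to the trace-norm limit (Theorem~\ref{thm:convNfixed}) and contradict the same lemma. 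Any resolution of Conjecture~\ref{conj:regularity} must therefore establish injectivity \emph{without} operator domination --- e.g.\ through the fixed-point equation \eqref{eq:fixedpoint}, or by propagating injectivity along the gradient-descent iterates, as the authors themselves suggest at the end of Section~\ref{sec:future}. In short: your attempt does not prove the conjecture (as you acknowledge), it rigorises and extends the paper's sketch in a worthwhile way, but the quantitative strengthening you single out as the missing ingredient is already refuted within the paper.
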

Resolution of the conjecture (in the positive direction) will automatically yield the solution to the multicoupling problem (by virtue of Lemma~\ref{lem:frechetgivesmulti}) and the validity of the geodesic principal component analysis (Section~\ref{sec:pca}), irrespective of finite-dimensionality.  As mentioned in the previous section, injectivity is equivalent to the existence of an injective solution to the fixed point equation \eqref{eq:fixedpoint};  see Proposition~\ref{prop:fixedpoint}.

We remark that injectivity indeed holds when the operators in question commute, because then $\overline \Sigma^{1/2}=n^{-1}(\Sigma_1^{1/2},\dots,\Sigma_n^{1/2})$.  However, it is not possible to bound $\overline \Sigma^{1/2}$ in terms of $\Sigma_i^{1/2}$, not even up to constants.  More precisely,
\begin{lemma}
For all $\overline \Sigma$ of infinite rank and all $n\ge2$ there exist covariance operators $\Sigma_1,\dots,\Sigma_n$ with mean $\overline\Sigma$ and such for any positive number $c$,  $\overline \Sigma - c(\Sigma_1+\dots,\Sigma_n)$ is not positive.
\end{lemma}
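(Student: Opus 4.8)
The plan is to build the $\Sigma_i$ as \emph{conjugation} (phase-type) perturbations of $\overline\Sigma$ rather than commuting ones. Commuting perturbations cannot work: if the $\Sigma_i$ share the eigenbasis $\{e_k\}$ of $\overline\Sigma$ with eigenvalues $\sigma_{i,k}$, then the mean constraint reads $\lambda_k^{1/2}=n^{-1}\sum_i\sigma_{i,k}^{1/2}$, and the elementary bound $\sum_i x_i^2\le(\sum_i x_i)^2$ applied to $x_i=\sigma_{i,k}^{1/2}$ gives $\sum_i\sigma_{i,k}\le n^2\lambda_k$, i.e. $\sum_i\Sigma_i\preceq n^2\overline\Sigma$, so $\overline\Sigma-c\sum_i\Sigma_i\succeq0$ for every $c\le n^{-2}$. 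To defeat this one must mix eigendirections of $\overline\Sigma$ carrying very different eigenvalues. Accordingly, I would fix a self-adjoint $A$ with $\hs A\hs_\infty\le1$ and set
\[
T_1=\mathscr I+A,\qquad T_2=\mathscr I-A,\qquad T_3=\dots=T_n=\mathscr I,
\]
which are positive (since $-\mathscr I\le A\le\mathscr I$) and satisfy $\sum_{i=1}^nT_i=n\mathscr I$. Defining $\Sigma_i=T_i\overline\Sigma T_i$ yields non-negative trace-class operators (the $T_i$ are bounded and $\overline\Sigma$ is nuclear), and Theorem~\ref{thm:identifiabilityEmpirical} immediately guarantees that $\overline\Sigma$ is a Fr\'echet mean of $\{\Sigma_1,\dots,\Sigma_n\}$.

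The algebra is then transparent. The cross terms cancel upon summing $\Sigma_1$ and $\Sigma_2$, so that
\[
\sum_{i=1}^n\Sigma_i=2(\overline\Sigma+A\overline\Sigma A)+(n-2)\overline\Sigma=n\overline\Sigma+2A\overline\Sigma A,
\]
whence $\overline\Sigma-c\sum_i\Sigma_i=(1-cn)\overline\Sigma-2cA\overline\Sigma A$. It thus suffices to arrange that $A\overline\Sigma A$ dominates $\overline\Sigma$ in some direction, for every threshold $c$. I would take $A$ to act as a coordinate swap on disjoint two-dimensional blocks: picking disjoint index pairs $(a_j,b_j)$, set $Ae_{a_j}=e_{b_j}$, $Ae_{b_j}=e_{a_j}$, with $A=0$ on every unpaired coordinate and on $\ker\overline\Sigma$. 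Then $A$ is self-adjoint with $\hs A\hs_\infty\le1$, and testing on $e_{b_j}$ gives $\langle A\overline\Sigma Ae_{b_j},e_{b_j}\rangle=\lambda_{a_j}$, so that
\[
\langle\,[\overline\Sigma-c\sum_{i}\Sigma_i]\,e_{b_j},e_{b_j}\rangle
=\lambda_{b_j}(1-cn)-2c\lambda_{a_j}.
\]

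The only genuine point to secure is the existence of infinitely many disjoint pairs whose eigenvalue ratios blow up, $\lambda_{a_j}/\lambda_{b_j}\to\infty$, and this is exactly where the infinite-rank hypothesis enters. Since $\overline\Sigma$ is trace-class with infinitely many positive eigenvalues, one has $\lambda_k\to0$; hence after deleting any finite set of indices the remaining eigenvalues still have infimum $0$ while some stay positive, and a greedy selection produces disjoint pairs with $\lambda_{a_j}/\lambda_{b_j}\ge j$. I expect this elementary combinatorial step, together with the initial insight that one must abandon commuting perturbations, to be the substance of the argument; everything else is routine. With such pairs in hand, for any fixed $c>0$ the displayed quadratic form is strictly negative for all large $j$ (trivially when $c\ge1/n$, and because $\lambda_{b_j}/\lambda_{a_j}\to0$ when $0<c<1/n$), so $\overline\Sigma-c\sum_i\Sigma_i$ fails to be positive for every $c>0$, completing the proof for all $n\ge2$.
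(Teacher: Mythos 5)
Your construction is essentially the paper's own: reduce to two nontrivial operators by padding with copies of $\overline\Sigma$ (you take $T_i=\mathscr I$ for $i\ge3$, which is the same thing), conjugate $\overline\Sigma$ by $\mathscr I\pm A$ where $A$ is self-adjoint and exchanges paired eigendirections whose eigenvalue ratios $\lambda_{a_j}/\lambda_{b_j}$ diverge, and detect the failure of positivity by testing on $e_{b_j}$. Your certification that $\overline\Sigma$ is a Fr\'echet mean via Theorem~\ref{thm:identifiabilityEmpirical} (with $D=\X$) is valid and in fact tidier than the paper's route, which instead exhibits $2T^{-1}-\mathscr I$ as the optimal map between $\Sigma_1$ and $\Sigma_2$ and identifies $\overline\Sigma$ as the McCann midpoint. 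The cross-term cancellation, the test value $(1-cn)\lambda_{b_j}-2c\lambda_{a_j}$, and the greedy extraction of disjoint pairs with diverging ratios are all correct.

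The one substantive deviation is that you use a full swap, $\hs A\hs_\infty=1$, where the paper uses a damped one ($Te_k=e_k+2^{-k}f_k$, so the off-diagonal part has norm $1/2$), and this is not cosmetic: it costs you uniqueness of the Fr\'echet mean, which the paper's proof explicitly secures in its closing parenthetical. With the full swap, $e_{a_j}-e_{b_j}\in\ker(\mathscr I+A)$ and $e_{a_j}+e_{b_j}\in\ker(\mathscr I-A)$, so on the block spanned by $\{e_{a_j},e_{b_j}\}$ your $\Sigma_1$ and $\Sigma_2$ are rank one along the two \emph{orthogonal} directions $(e_{a_j}\pm e_{b_j})/\sqrt2$. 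Two centred Gaussians supported on orthogonal lines satisfy $\E\langle X,Y\rangle=0$ under every coupling, so every coupling of these block marginals is optimal; by (the proof of) Lemma~\ref{lem:FrechetMulti}, each optimal coupling $\pi$ yields a Fr\'echet mean $M_2\#\pi$, and these form a continuum --- for instance the blockwise independent coupling gives the mean $\frac14(\Sigma_1+\Sigma_2)$, whose block is $\frac{\lambda_{a_j}+\lambda_{b_j}}{2}$ times the identity, which differs from $\overline\Sigma$'s block $\mathrm{diag}(\lambda_{a_j},\lambda_{b_j})$. So in your construction $\overline\Sigma$ is only one among many Fr\'echet means, and since neither $\Sigma_1$ nor $\Sigma_2$ is injective (their kernels even differ), no uniqueness criterion can rescue this: uniqueness genuinely fails. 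If the lemma is read as ``$\overline\Sigma$ is \emph{a} Fr\'echet mean'', your proof is complete; under the paper's intended reading (``the mean'') there is a gap, with an immediate fix: damp the swap, $Ae_{a_j}=\frac12 e_{b_j}$, $Ae_{b_j}=\frac12 e_{a_j}$. Then $\mathscr I\pm A$ are invertible, hence $\ker\Sigma_1=\ker\Sigma_2=\ker\overline\Sigma$ and (restricting to the closed common range, as in the paper) the mean is unique by Proposition~\ref{prop:uniqueFrechet}; Theorem~\ref{thm:identifiabilityEmpirical} still applies, and the test value becomes $(1-cn)\lambda_{b_j}-\frac{c}{2}\lambda_{a_j}$, which is still eventually negative for every $c>0$.
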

\begin{proof}
Without loss of generality $n=2$, because if $\Sigma_1$ and $\Sigma_2$ are as required, then so are $(\Sigma_1,\Sigma_2,\overline\Sigma,\dots,\overline\Sigma)$. Let $\lambda_k$ and $\mu_k$ be disjoint sequences of nonzero eigenvalues of $\overline \Sigma$ and such that $\lambda_k/\mu_k>5^k$ (this is possible since the eigenvalues go to zero),  with corresponding eigenvectors $(e_k)$ and $(f_k)$.  Define an operator $T$ by $T(e_k)=e_k+b_kf_k$, $T(f_k)=b_ke_k+f_k$ (and $T$ is the identity on the orthogonal complement of the $e_k$'s and $f_k$'s).  Then $T$ is self-adjoint, and it is positive provided that $|b_k|\le1$.  We have for all $k$
\[
\frac{
\langle T\overline \Sigma Tf_k,f_k\rangle} {\langle \overline \Sigma f_k,f_k\rangle }
=\frac{\mu_k + b_k^2\lambda_k}{\mu_k}
\ge b_k^25^k
\to\infty
\]
if $b_k=2^{-k}$, say.  Therefore $\overline \Sigma - c(T\overline \Sigma T)$ is not positive for any $c>0$.  Also, $\hs T - \mathscr I\hs_\infty=1/2$ so $T$ has a bounded inverse and $2\mathscr I - T$ is also positive.  To complete the proof it suffices to see that $\overline \Sigma$ is the Fr\'echet mean of $\Sigma_1=T\overline \Sigma T$ and $\Sigma_2=(2\mathscr I-T)\overline \Sigma (2\mathscr I - T)$.  Indeed, the bounded operator $(2\mathscr I - T)\circ T^{-1}=2T^{-1} - \mathscr I$ is positive (as a composition of two positive operators that commute), and thus the optimal map $\topt12$ from $\Sigma_1$ to $\Sigma_2$.  The Fr\'echet mean is then the midpoint in McCann's interpolant, $[(\topt12+\mathscr I)/2]\#\Sigma_1=T^{-1}\#\Sigma_1=\overline \Sigma$.  (The Fr\'echet mean is unique here even if $\Sigma_i$ are not injective, since they have the same kernel and we can replace $\X$ by the orthogonal complement of this kernel to make them injective.)
\end{proof}
Note that $T$ (and $2\mathscr I - T$) can be very close to the identity in any Schatten norm, since we can multiply all the $b_k$'s by an arbitrary small constant.  It is therefore unlikely that Hajek--Feldman type conditions on the operators be relavant.

An alternative line of proof is by a variational argument.  Suppose that $\Sigma v=0$ for $\|v\|=1$ and define $\Sigma'=\Sigma + \epsilon v\otimes v$.  The quantity $\tr \Sigma'$ in the Fr\'echet functional increases by $\epsilon$, and we believe that $\tr (\Sigma_i \Sigma'\Sigma_i)^{1/2}$ behaves like $\sqrt\epsilon$ for $\epsilon$ small, and consequently $\Sigma'$ has a better Fr\'echet value.

Another important line of enquiry is the consistency of the (empirical) Fr\'echet mean of $\Sigma_1,\dots,\Sigma_n$ towards its population counterpart, as the sample size grows to infinity.  Under mild conditions on the law of $\Sigma$, this population mean is guaranteed to be unique (see Proposition~\ref{prop:uniqueFrechet} in Section~\ref{sec:existUnique}).  However, it is not known to exist in general;  the existence results of Le Gouic \& Loubes \cite{gouic2016existence} do not apply to $\X$ because the latter is not locally compact.  In view of Ziezold's \cite{ziezold1977expected} results, if a population mean exists and the sequence of empirical means converge, then the limit must be the population mean (under uniqueness).  These questions appear to be more subtle and we leave them for further work.

Finally, a last interesting question would be to establish the stability of the Procrustes algorithm to increasing projection dimension. In other words: if we had access to the fully infinite-dimensional covariances, it would still make sense to apply the Procrustes algorithm to obtain the Fr\'echet mean.  Would this still converge? The methods of proof of \citet{alvarez2016fixed} and \citet{zemel2017fr} are intrinsically finite dimensional, and cannot be lifted to infinite dimensions. Extending this convergence to the infinite dimensional case, would precisely establish the stability of the Procrustes algorithm to increasingly finer discretisations,  and this is likely to require new tools.  Preliminary simulation results indicate that the convergence is indeed quite stable to increasing the projection dimension, so we conjecture that the convergence result should be true.   In fact, this very issue may also lead to a resolution of Conjecture \ref{conj:regularity}, since one can show (as in \citet{zemel2017fr}) that the iterates of the algorithm stay injective at each step (provided the initial point and one of the $\Sigma_i$ is injective).

\section{Auxiliary results}\label{sec:aux}
\begin{lemma}
[trace of tensors]
\label{lem:tensornorm}
For all $f,g\in \X$ we have $\hs f\otimes g \hs_1=\|f\| \|g\|$.
\end{lemma}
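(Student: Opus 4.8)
The plan is to compute the nuclear norm directly from its definition $\hs A\hs_1=\tr\sqrt{A^*A}$ applied to $A=f\otimes g$, taking advantage of the fact that $A$ is a rank-one operator. First I would dispose of the degenerate case: if $f=0$ or $g=0$ then $f\otimes g=0$ and both sides of the claimed identity vanish, so I may assume $f,g\neq0$ henceforth.

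Next I would identify the adjoint. Starting from $\langle(f\otimes g)u,v\rangle=\langle g,u\rangle\langle f,v\rangle$ and rearranging shows that $(f\otimes g)^*=g\otimes f$. Composing the two rank-one operators then gives
\[
A^*A=(g\otimes f)(f\otimes g)=\|f\|^2\,(g\otimes g),
\]
since $(g\otimes f)(f\otimes g)u=\langle g,u\rangle\langle f,f\rangle g=\|f\|^2\langle g,u\rangle g$ for every $u\in\mathcal H$.

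The key step is the square root. The operator $g\otimes g$ is positive (indeed $\langle(g\otimes g)u,u\rangle=\langle g,u\rangle^2\ge0$) and rank one, acting as multiplication by $\|g\|^2$ on the line spanned by $g$ and by zero on its orthogonal complement. Hence its positive square root is $\sqrt{g\otimes g}=\|g\|^{-1}(g\otimes g)$, which one verifies at once by squaring and using $(g\otimes g)(g\otimes g)=\|g\|^2(g\otimes g)$. Consequently
\[
\sqrt{A^*A}=\|f\|\,\sqrt{g\otimes g}=\frac{\|f\|}{\|g\|}\,(g\otimes g).
\]
Finally I would take the trace via $\tr(g\otimes g)=\langle g,g\rangle=\|g\|^2$ (evaluated against any orthonormal basis using Parseval's identity), obtaining $\hs f\otimes g\hs_1=\frac{\|f\|}{\|g\|}\cdot\|g\|^2=\|f\|\,\|g\|$.

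I expect no serious obstacle here; the whole argument is elementary bookkeeping with rank-one operators. The only point requiring a moment's care is the square-root computation, which rests on recognising $g\otimes g$ as $\|g\|^2$ times the orthogonal projection onto $\mathrm{span}(g)$, so that its square root scales by $\|g\|$ rather than $\|g\|^2$.
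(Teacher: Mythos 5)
Your proof is correct, but it follows a different route from the paper's. The paper exploits the rank-one structure at the level of norms: since $f\otimes g$ has exactly one nonzero singular value, its trace norm equals its operator norm, and the latter is computed directly as $\sup_{\|h\|=1}\|f\|\,|\langle g,h\rangle|=\|f\|\,\|g\|$. You instead work straight from the definition $\hs A\hs_1=\tr\sqrt{A^*A}$: you compute $A^*A=\|f\|^2(g\otimes g)$, identify $g\otimes g$ as $\|g\|^2$ times the projection onto $\mathrm{span}(g)$ so that $\sqrt{g\otimes g}=\|g\|^{-1}(g\otimes g)$, and take the trace via Parseval. All of your steps check out (the adjoint formula, the square-root identification by uniqueness of positive roots, and $\tr(g\otimes g)=\|g\|^2$). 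The trade-off is that the paper's argument is shorter but quietly invokes the fact that all Schatten norms coincide for rank-one operators, which itself rests on the singular value decomposition; your computation is slightly longer but entirely self-contained, using only the uniqueness of the positive square root, and it additionally exhibits the explicit polar data of $f\otimes g$, which is mildly more informative than the norm identity alone.
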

\begin{proof}
This is clear if $g=0$.  Otherwise, since $f\otimes g$ is rank one, its trace norm equals its operator norm, which is
\[
\sup_{\|h\|=1}\|f\otimes gh\|
=\sup_{\|h\|=1}\|f\||\innprod gh|
=\|f\|\innprod g{g/\|g\|}|
=\|f\|\|g\|.
\]
\end{proof}
\begin{lemma}
[compact convex hulls]
\label{lem:compactHull}
Let $K$ be a compact subset of a Banach space.  Then its closed convex hull $\overline{\conv K}$ is compact.
\end{lemma}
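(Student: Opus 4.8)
The plan is to reduce the claim to the standard characterisation that a subset of a complete metric space is compact precisely when it is complete and totally bounded. Since $\overline{\conv K}$ is closed by definition and sits inside a complete space, it is automatically complete, so the entire burden falls on total boundedness. Moreover, the closure of a totally bounded set is again totally bounded (an $\epsilon/2$-net for a set yields, after enlarging the radius, an $\epsilon$-net for its closure), so it suffices to prove that $\conv K$ itself is totally bounded. First, then, I would fix $\epsilon>0$ and use compactness (hence total boundedness) of $K$ to extract finitely many points $x_1,\dots,x_m\in K$ with $K\subseteq\bigcup_{i=1}^m B(x_i,\epsilon/2)$.

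The crux of the argument is to show that passing from $K$ to this finite net does not disturb the convex-combination structure: every point of $\conv K$ lies within $\epsilon/2$ of the convex hull of the \emph{finite} set $\{x_1,\dots,x_m\}$. Given an arbitrary $z=\sum_j\lambda_j y_j\in\conv K$ (with $y_j\in K$, $\lambda_j\ge0$, $\sum_j\lambda_j=1$), I would select for each $j$ an index $i(j)$ with $\|y_j-x_{i(j)}\|<\epsilon/2$ and set $w=\sum_j\lambda_j x_{i(j)}\in\conv\{x_1,\dots,x_m\}$; convexity of the norm then gives $\|z-w\|\le\sum_j\lambda_j\|y_j-x_{i(j)}\|<\epsilon/2$. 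To finish, I would observe that $\conv\{x_1,\dots,x_m\}$ is compact, being the image of the compact simplex $\Delta=\{\lambda\in\R^m:\lambda_i\ge0,\ \sum_i\lambda_i=1\}$ under the continuous map $\lambda\mapsto\sum_i\lambda_i x_i$; covering it by finitely many balls of radius $\epsilon/2$ and combining with the preceding approximation produces a cover of $\conv K$ by finitely many balls of radius $\epsilon$. Hence $\conv K$, and therefore $\overline{\conv K}$, is totally bounded, and being complete it is compact.

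The only genuinely nontrivial step is the middle one, namely the verification that the net approximation survives convexification; the rest is a routine assembly of the total-boundedness criterion with the compactness of a finite-dimensional simplex. It is worth noting that completeness of the ambient Banach space is used only to promote total boundedness to compactness, and that the proof makes no appeal to local compactness (which fails in the infinite-dimensional Hilbert spaces relevant here), so the lemma applies verbatim in the setting of this paper.
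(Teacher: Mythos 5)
Your proof is correct and follows essentially the same route as the paper's: a finite $\epsilon$-net for $K$, the observation that convex combinations can be shifted to the net points at cost at most $\epsilon/2$ by convexity of the norm, compactness of the finite-dimensional simplex $\conv\{x_1,\dots,x_m\}$ as a continuous image of the unit simplex, and a final triangle inequality to obtain total boundedness of $\conv K$ (hence of its closure). The only difference is that you make explicit the ambient facts the paper leaves implicit --- completeness of closed subsets and stability of total boundedness under closure --- which is a welcome but not substantive addition.
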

\begin{proof}
We need to show that $\mathrm{Conv}K$ is totally bounded.  For any $\delta>0$ there exists a $\delta$-cover $x_1,\dots,x_n\in K$.  The simplex
\[
S
=
\mathrm{Conv}(x_1,\dots,x_n)
=
\left\{\sum_{i=1}^n a_i x_i:a_i\ge0,\sum_{i=1}^n a_i=1\right\}
\]
(a subset of $\mathrm{Conv}K$) is compact as a continuous image of the unit simplex in $\mathbb R^n$.  Indeed, if $v_i$ is the $i$-th column of the $n\times n$ identity matrix, then
\[
\omega\left(\sum_{i=1}^n a_iv_i\right)
=
\sum_{i=1}^n a_ix_i
\]
does the job.  Let $y_1,\dots,y_m$ be a $\delta$-cover of $S$ and let $y\in \mathrm{Conv}K$.  Then $y=\sum a_j z_j$ for $z_j\in K$ and $a_j\ge0$ that sum up to one.  For each $j$ there exists $i(j)$ such that $\|z_j - x_{i(j)}\|<\delta$.  Then $x=\sum a_j x_{i(j)}$ is in $S$ and $\|x - y\|<\delta$.  Thus, there exists $y_j$ such that $\|x - y_j\|<\delta$ and therefore $\|y - y_j\|<2\delta$ and total boundedness is established.
\end{proof}

\begin{lemma}[Gaussian fourth moment]\label{lem:Gauss4m}
Let $X$ centred Gaussian with covariance $\Sigma$.  Then
\[
\mathbb E\|X\|^4
\le
3(\mathbb E\|X\|^2)^2
=
3(\tr \Sigma)^2
.
\]
\end{lemma}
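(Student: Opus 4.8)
The plan is to diagonalise $\Sigma$ and thereby reduce the whole computation to the elementary scalar Gaussian fourth moment. Let $\{\sigma_k\}_{k\ge1}$ be the (nonnegative) eigenvalues of $\Sigma$ with corresponding orthonormal eigenvectors $\{e_k\}_{k\ge1}$. By the Karhunen--Lo\`eve expansion we may write $X=\sum_{k}\sigma_k^{1/2}\xi_k e_k$, where the $\xi_k=\sigma_k^{-1/2}\langle X,e_k\rangle$ (defined for $\sigma_k>0$; the null directions contribute nothing) are independent standard Gaussian variables. Then $\|X\|^2=\sum_k \sigma_k\xi_k^2$ is a series of nonnegative terms, and squaring gives $\|X\|^4=\sum_{k,l}\sigma_k\sigma_l\,\xi_k^2\xi_l^2$.

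Next I would take expectations term by term. Using independence of the $\xi_k$ together with $\E\xi_k^2=1$ and $\E\xi_k^4=3$ (equivalently, Isserlis' theorem applied to the jointly Gaussian coordinates $\langle X,e_k\rangle$), the diagonal terms $k=l$ each contribute $3\sigma_k^2$ while the off-diagonal terms contribute $\sigma_k\sigma_l$. Collecting,
\[
\E\|X\|^4
=\sum_{k\ne l}\sigma_k\sigma_l + 3\sum_k\sigma_k^2
=\Big(\sum_k\sigma_k\Big)^2 + 2\sum_k\sigma_k^2
=(\tr\Sigma)^2 + 2\tr(\Sigma^2).
\]
To finish, I would bound $\tr(\Sigma^2)$ by $(\tr\Sigma)^2$: since all eigenvalues are nonnegative, $\tr(\Sigma^2)=\sum_k\sigma_k^2\le\big(\sum_k\sigma_k\big)^2=(\tr\Sigma)^2$. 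Substituting yields $\E\|X\|^4\le(\tr\Sigma)^2+2(\tr\Sigma)^2=3(\tr\Sigma)^2$, while the stated identity $\E\|X\|^2=\tr\Sigma$ is immediate from the same expansion.

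There is no serious obstacle here: the only point requiring care is the interchange of expectation and summation. This is harmless because every summand is nonnegative, so Tonelli's theorem applies unconditionally and, as a byproduct, the finiteness of $\E\|X\|^4$ emerges as an \emph{output} of the computation rather than being required as a hypothesis. I note that the exact identity $\E\|X\|^4=(\tr\Sigma)^2+2\tr(\Sigma^2)$ is slightly sharper than the stated inequality and may be recorded for later use; the factor $3$ is attained precisely in the degenerate rank-one case, where $\tr(\Sigma^2)=(\tr\Sigma)^2$.
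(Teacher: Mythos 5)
Your proof is correct and follows essentially the same route as the paper's: diagonalise $\Sigma$, use independence of the Gaussian coordinates with $\E\xi_k^4=3$ to get the exact identity $\E\|X\|^4=(\tr\Sigma)^2+2\tr(\Sigma^2)$, and bound $\tr(\Sigma^2)\le(\tr\Sigma)^2$ by nonnegativity of the eigenvalues. Your explicit appeal to Tonelli for the interchange, and your observation that equality holds exactly in the rank-one case, are both consistent with (indeed slightly more careful than) the paper's argument.
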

\begin{proof}
Let $(e_k)$ be a basis of eigenvectors of $\Sigma$ with eigenvalues $\lambda_k$.  Then $X_k=\langle X,e_k\rangle\sim N(0,\lambda_k)$ are independent and $\|X\|^2=\sum X_k^2$ has expectation $\sum \lambda_k=\mathrm{tr}\Sigma$.  Squaring gives
\[
\mathbb E \|X\|^4
=
\sum_{k,j}\mathbb E X_k^2X_j^2
=
\sum_k 3\lambda_k^2
+
\sum_{k\ne j} \lambda_k\lambda_j
=
\left(\sum \lambda_k\right)^2
+
2\sum \lambda_k^2%.
%\]
%In other words
%\[
%\mathbb E\|X\|^4
=
(\tr\Sigma)^2
+
2\hs \Sigma \hs_2^2
\le
3(\tr\Sigma)^2
.
\]
Interestingly, equality holds if and only if all but one eigenvalues are zero, i.e.\ $\Sigma$ has rank of at most one.
\end{proof}

\begin{corollary}\label{cor:Gauss4mom}
Let $\mathcal B\subset \mathcal W(\X)$ be a collection of centred Gaussian measures.  If $\mathcal B$ is tight and $\ownint {\X}{}{\|x\|^2}{\mu(x)}\le R$ for all $\mu\in \mathcal B$, then $\mathcal B$ is precompact in the Wasserstein space.
\end{corollary}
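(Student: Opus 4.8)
The plan is to upgrade the narrow (in-distribution) precompactness guaranteed by tightness to precompactness in the Wasserstein metric, the missing ingredient being uniform integrability of the squared norm. The crucial observation is that for Gaussian measures this uniform integrability comes \emph{for free} from the fourth-moment bound of Lemma~\ref{lem:Gauss4m}, which is exactly what permits us to bypass the appeal to Fernique's theorem used in the proof of Proposition~\ref{prop:convEquiv}.

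First I would record the uniform integrability. For any $\mu\in\mathcal B$ with covariance $\Sigma$, Lemma~\ref{lem:Gauss4m} gives $\ownint{\X}{}{\|x\|^4}{\mu(x)}\le 3(\tr\Sigma)^2=3\left(\ownint{\X}{}{\|x\|^2}{\mu(x)}\right)^2\le 3R^2$. Hence, by a Chebyshev-type truncation,
\[
\ownint{\{\|x\|>M\}}{}{\|x\|^2}{\mu(x)}
\le \frac1{M^2}\ownint{\X}{}{\|x\|^4}{\mu(x)}
\le \frac{3R^2}{M^2},
\]
which tends to $0$ as $M\to\infty$, uniformly over $\mu\in\mathcal B$. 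Thus the family $\{x\mapsto\|x\|^2\}$ is uniformly integrable with respect to $\mathcal B$.

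Next I would run the standard subsequence argument. Let $(\mu_n)\subset\mathcal B$ be arbitrary. Since $\X$ is a separable Hilbert space, hence Polish, and $\mathcal B$ is tight, Prokhorov's theorem yields a subsequence $\mu_{n_k}$ converging narrowly to some Borel probability measure $\mu$. Combining this narrow convergence with the uniform integrability established above forces $\ownint{\X}{}{\|x\|^2}{\mu_{n_k}(x)}\to\ownint{\X}{}{\|x\|^2}{\mu(x)}$, so in particular $\mu\in\W(\X)$. By the characterisation of Wasserstein convergence in \cite[Theorem~7.12]{villani2003topics}, narrow convergence together with convergence of second moments is equivalent to $W(\mu_{n_k},\mu)\to0$. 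As every sequence in $\mathcal B$ thus admits a $W$-convergent subsequence with limit in $\W(\X)$, the set $\mathcal B$ is precompact in $\W(\X)$.

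The only delicate point is to keep the argument self-contained: one must resist invoking the equivalence (1)$\iff$(2) of Proposition~\ref{prop:convEquiv}, since that equivalence is precisely what this corollary is meant to support (via the present precompactness) without recourse to Fernique's theorem, so using it here would be circular. Everything else is routine: the Gaussian fourth-moment bound does all the substantive work, and the remainder is Prokhorov's theorem plus the standard moment-convergence characterisation of the Wasserstein topology.
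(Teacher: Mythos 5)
Your proof is correct and follows essentially the same route as the paper's: the fourth-moment bound of Lemma~\ref{lem:Gauss4m} gives the uniform tail estimate $\ownint{\{\|x\|>M\}}{}{\|x\|^2}{\mu(x)}\le 3R^2/M^2$, after which Prokhorov's theorem and \cite[Theorem~7.12]{villani2003topics} upgrade narrow subsequential limits to limits in $\mathcal W(\X)$. Your added remarks on uniform integrability, membership of the limit in $\W(\X)$, and the need to avoid the circular appeal to Proposition~\ref{prop:convEquiv} merely make explicit what the paper's terser proof leaves implicit.
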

\begin{proof}
By Lemma~\ref{lem:Gauss4m} the fourth moment of all measures in $\mathcal B$ is bounded by $3R^2$.  Therefore
\[
\sup_{\mu\in\mathcal B}
\ownint{\|x\|>M}{}{\|x\|^2}{\mu(x)}
\le 
\frac 1{M^2}
\sup_{\mu\in\mathcal B}
\ownint{\|x\|>M}{}{\|x\|^4}{\mu(x)}
\le 
\frac {3R^2}{M^2}
\to 0
,\qquad 
M\to\infty.
\]
Any sequence $(\mu_n)\subseteq\mathcal B$ is tight and has a limit $\mu$ in distribution, and by Theorem~7.12 in \cite{villani2003topics}, $\mu$ is also a limit in $\mathcal W(\X)$.
\end{proof}

\begin{center}\textbf{Acknowledgements}\end{center}
Research supported in part by a Swiss National Science Foundation grant to V.~M. Panaretos.

\end{document}